\newcommand{\anote}[1]{\authnote{ Andras}{#1}{red}}
\newtheorem{theorem}{Theorem} 
\newtheorem{lemma}[theorem]{Lemma}
\newtheorem{corollary}[theorem]{Corollary}
\newtheorem{definition}[theorem]{Definition}
\mathchardef\mhyphen="2D
\newcommand{\ket}[1]{|#1\rangle}
\newcommand{\ipc}[2]{\left\langle #1 , #2 \right\rangle}
\newcommand{\opt}{\mbox{\rm OPT}}
\newcommand{\eps}{\varepsilon}
\renewcommand{\epsilon}{\varepsilon}
\newcommand{\nrm}[1]{\left\lVert#1\right\rVert}
\def\01{\{0,1\}}
\newcommand{\bigO}[1]{\mathcal{O}\left( #1 \right)}
\newcommand{\bOt}[1]{\widetilde{\mathcal O}\left(#1\right)}
\newcommand{\domf}{C}
\newcommand{\R}{\mathbb{R}}
\newcommand{\N}{\mathbb{N}}
\newcommand{\E}{\mathbb{E}}
\begin{document}
	
	\title{Convex optimization using quantum oracles}
	\author{Joran van Apeldoorn}
	\affiliation{QuSoft, CWI, Amsterdam, the Netherlands. {\tt apeldoor@cwi.nl}}
    \author{Andr\'as Gily\'en}
    \affiliation{QuSoft, CWI, Amsterdam, the Netherlands.
			{\tt gilyen@cwi.nl}}
    \author{Sander Gribling}
    \affiliation{QuSoft, CWI, Amsterdam, the Netherlands. {\tt gribling@cwi.nl}}
    \author{Ronald de Wolf}
    \affiliation{QuSoft, CWI and University of Amsterdam, the Netherlands.  {\tt rdewolf@cwi.nl}}
	
	\date{}
	\maketitle

	\begin{abstract}
		We study to what extent quantum algorithms can speed up solving convex optimization problems. Following the classical literature we assume access to a convex set via various oracles, and we examine the efficiency of reductions between the different oracles. In particular, we show how a separation oracle can be implemented using $\bOt 1$ quantum queries to a membership oracle, which is an exponential quantum speed-up over the $\Omega(n)$ membership queries that are needed classically.          We show that a quantum computer can very efficiently compute an approximate subgradient of a convex Lipschitz function. Combining this with a simplification of recent classical work of Lee, Sidford, and Vempala gives our efficient separation oracle. This in turn implies, via a known algorithm, that $\bOt n$ quantum queries to a membership oracle suffice to implement an optimization oracle (the best known classical upper bound on the number of membership queries is quadratic).        
        We also prove several lower bounds: $\Omega(\sqrt{n})$ quantum separation (or membership) queries are needed for optimization if the algorithm knows an interior point of the convex set, and $\Omega(n)$ quantum separation queries are needed if it does not.
\end{abstract}



\setcounter{page}{1}

\section{Introduction}

Optimization is a fundamental problem in mathematics and computer science, with many real-world applications. As people try to solve larger and larger optimization problems, the efficiency of optimization becomes more and more important, motivating us to find the best possible algorithms. Recent experimental progress on building quantum computers draws attention to new approaches to the problem: can we solve optimization problems more efficiently by exploiting quantum effects such as superposition, interference, and entanglement? For many discrete optimization problems~\cite{grover1996QSearch,durr1996QMinimumFinding,szegedy2004QMarkovChainSearch,durr2004QQueryCompGraph,ambainis2005QAlgMatchingNetwork} significant speed-ups have been shown, but less is known about \emph{continuous} optimization problems. 

One of the most successful continuous optimization paradigms is \emph{convex} optimization, which optimizes a convex function over a convex set that is given explicitly (by a set of constraints) or implicitly (by an oracle).
See Bubeck~\cite{bubeck2015ConvexOpt} for a recent survey. Quantum algorithms for convex optimization have been considered before. In 2008, Jordan~\cite{jordan2008PhDThesis} described a faster quantum algorithm for minimizing quadratic functions. 
Recently, for an important class of convex optimization problems (semidefinite optimization) quantum speed-ups were achieved using algorithms whose runtime scales polynomially with the desired precision and some geometric parameters~\cite{brandao2016QSDPSpeedup,apeldoorn2017QSDPSolvers,brandao2017QSDPSpeedupsLearning,apeldoorn2018ImprovedQSDPSolving}. However, many convex optimization problems can be solved classically using algorithms whose runtime scales \emph{logarithmically} with the desired precision and the relevant geometric parameters. We are aware of only one quantum speed-up which is partially in this regime, namely the very recent quantum interior point method of Kerenidis and Prakash~\cite{kerenidis2018QIntPoint}.
In this paper we look at general convex optimization problems, considering algorithms that have such favorable logarithmic scaling with the precision. 

The generic problem in convex optimization is minimizing a convex function $f:K \rightarrow \R \cup \{\infty\}$, where $K \subseteq \R^n$ is a convex set. We consider the setting where an interior point $x_0 \in \mathrm{int}(K)$ is given and radii $r,R >0$ are known such that $B(x_0,r) \subseteq K \subseteq B(x_0,R)$, where $B(x_0,r)$ is the Euclidean ball of radius $r$ centered at $x_0$. 

It is well-known that if the convex function is bounded on $K$, then we can equivalently consider the problem of minimizing a \emph{linear} function over a different convex set~$K'\subseteq\R^{n+1}$, namely the epigraph $K'=\{(x,\mu):x\in K, f(x)\geq\mu\}$ of~$f$. Accessing $K'$ is easy given access to $K$ and~$f$, and the parameters involved will be similar.
Conversely, for any linear optimization problem over an unknown convex set~$K$, there is an equivalent optimization problem over a known convex set (say, the ball), with an unknown bounded convex objective function~$f$ that can be evaluated easily given access to~$K$. 
From now on we therefore focus on optimizing a known linear function over an unknown convex set.

We consider the setting where access to the convex set is given only in a black-box manner, through an oracle. The five basic problems (oracles) in convex optimization identified by Gr{\"o}tschel, Lov\'asz, and Schrijver~\cite{grotschel1988GeomAlgAndConvOpt} are: membership, separation, optimization, violation, and validity (see Section~\ref{sec:prelim} for the definitions). They showed that all 
five basic problems are polynomial-time equivalent. That is, given an oracle~$O$ for one of these problems, one can implement an oracle for any of the other problems using a polynomial number of calls to~$O$ and polynomially many other elementary operations. Subsequent work made these polynomial-time reductions more efficient, reducing the degree of the polynomials. Recently Lee et al.~\cite{lee2017ConvexOptWMemb}, in the classical setting, showed that with \smash{$\bOt{n^2}$} calls\footnote{Here, and in the rest of the paper, the notation $\bOt{\cdot}$ is used to hide polylogarithmic factors in $n,r,R,\epsilon$.} to a membership oracle (and $\bOt{n^3}$ other elementary arithmetic operations) one can solve an optimization problem. They did so by showing that \smash{$\bOt{n}$} calls to a membership oracle suffice to do separation, 
and then composing this with the known fact~\cite
{lee2015FasterCuttingPlaneConvexOpt} (see also~\cite[Theorem~15]{lee2017ConvexOptWMemb}) that $\bOt{n}$ calls to a separation oracle suffice for optimization. 

Our main result (Section~\ref{sec:sep}) shows that on a quantum computer, $\bOt{1}$ calls to a membership oracle suffice to implement a separation oracle, and hence (by the known classical reduction from optimization to separation) $\bOt{n}$ calls to a membership oracle suffice for optimization.\footnote{Although not stated explicitly in our results, we also use $\bOt{n^3}$ additional operations for optimization using membership, like~\cite{lee2017ConvexOptWMemb}. This is because our quantum algorithm for separation uses only $\bOt{n}$ gates in addition to the $\bOt{1}$ membership queries, and we use the same reduction from optimization to separation as~\cite{lee2017ConvexOptWMemb}.
If queries themselves have significant time complexity, then our algorithm does lead to a speedup in time complexity over the best known classical algorithm. For example, if each membership query (with the required precision) takes time $\bOt{n^2}$ to implement, then our quantum algorithm for optimization has time complexity $\bOt{n^3}$, while the classical algorithm will use time $\bOt{n^4}$
because it uses $\bOt{n^2}$ membership queries.} Lee et al.~\cite{lee2017ConvexOptWMemb} use a geometric idea to reduce separation to finding an approximate subgradient of a convex Lipschitz function. They then show that $\bOt{n}$ evaluations of a convex Lipschitz function suffice to get an approximate subgradient. Our contributions here are twofold (Section~\ref{sec:approxsubgradient} and~\ref{sec:sep}). 
We use the same geometric idea, but we provide a simpler way to compute an approximate subgradient of a convex Lipschitz function (Section~\ref{sec:approxsubgradient}). We point out that this new algorithm is purely classical. Besides being simpler, the main advantage of our algorithm is that it is suitable  for a quantum speed-up using known quantum algorithms (Jordan's algorithm) for computing approximate (sub)gradients~\cite{jordan2005QuantGrad,gilyen2017OptQOptAlgGrad}, which we show in Section~\ref{sec:sep}. To show our quantum speed-up, we have to extend Jordan's quantum algorithm for gradient-computation to the case of convex Lipschitz functions. 


As a second set of results, in Section~\ref{sec:lowerbounds} we provide lower bounds on the number of membership or separation queries needed to implement several other oracles. We show that our quantum reduction from separation to membership indeed improves over the best possible classical reduction: $\Omega(n)$ classical membership queries are needed to do separation.\footnote{We are not aware of an existing proof of this classical lower bound, but it may well be somewhere in the vast literature on convex optimization.} 
We only have partial results regarding the optimality of the reduction from optimization to separation. In the setting where we are not given an interior point of the set~$K$, we can prove an essentially optimal $\Omega(n)$ lower bound on the number of quantum queries to a separation oracle needed to do optimization, using the general adversary bound. This lower bound implies that a quantum computer offers no query speed-up over a classical computer for the task of finding an interior point. 

However, for the case of quantum algorithms that \emph{do} know an interior point, we are only able to prove an $\Omega(\sqrt{n})$ lower bound. In the classical setting, regardless of whether or not we know an interior point, the reduction uses $\widetilde \Theta(n)$ queries. This raises the interesting question of whether knowing an interior point can lead to a better quantum algorithm.
We therefore view closing the gap between upper and lower bound as an important direction for future work. 

Finally, we briefly mention (Section~\ref{sec:corollaries}) how to obtain upper and lower bounds for some of the other oracle reductions, using a convex polarity argument. As we show, in the setting where we are given an interior point, the relation between membership and separation is analogous to the relation between validity and optimization. In particular, our better quantum algorithm for separation using membership queries implies that on a quantum computer $\bOt{1}$ queries to a validity oracle suffice to implement an optimization oracle. That is, on a quantum computer, finding the optimal value is equivalent to finding an optimizer. Also, the same polarity argument shows that algorithms for optimization using separation are essentially equivalent to algorithms for separation using optimization. In particular, this turns our lower bound on the number of separation queries needed to implement an optimization oracle into a lower bound on the reverse direction.

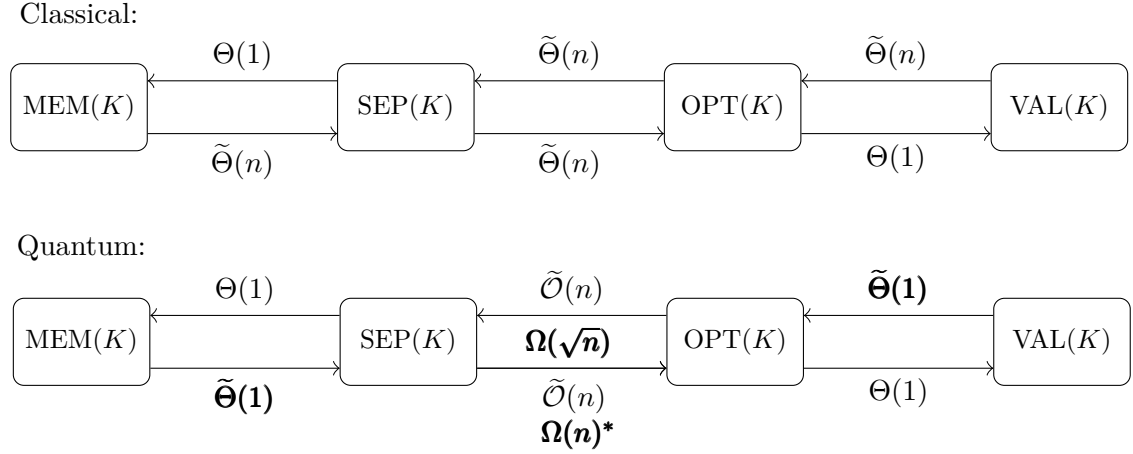
\begin{figure}[ht]
	\centering
	\begin{minipage}[t]{\textwidth}
		\centering
		\begin{tikzpicture}[node distance = 4.3cm]
		\tikzstyle{block} = [rectangle, draw, text width=4em, text centered, rounded corners, minimum height=3em]
		\tikzstyle{line} = [draw, -tonew,arrowhead=0.085cm]
		\tikzstyle{dot_line} = [draw, -tonew,dashed,arrowhead=0.085cm]
		
		\node [block] (mem) {\small $\mathrm{MEM}(K)$};
		\node [block, right of=mem] (sep) {\small $\mathrm{SEP}(K)$};
		\node [block, right of= sep] (opt) {\small $\mathrm{OPT}(K)$};
		\node [block, right of=opt] (val) {\small $\mathrm{VAL}(K)$};
		
		\node[above of=mem, node distance = 1.25cm] (classical) {Classical:};
		\path [line,transform canvas={yshift=-0.35cm}] (mem) to node[below] {$\widetilde \Theta(n)$} (sep);
		\path [line,transform canvas={yshift=0.35cm}] (sep) to node[above] {$\Theta(1)$} (mem);
		\path [line,transform canvas={yshift=-0.35cm}] (sep) to node[below] {$\widetilde \Theta(n)$}  (opt);
		\path [line,transform canvas={yshift=0.35cm}] (opt) to node[above] {$\widetilde \Theta(n)$}  (sep);
		\path [line,transform canvas={yshift=-0.35cm}] (opt) to node[below] {$\Theta(1)$}  (val);
		\path [line,transform canvas={yshift=0.35cm}] (val) to node[above] {$\widetilde \Theta(n)$}  (opt);
		\end{tikzpicture}
	\end{minipage}
	
	\vspace{0.8cm}
	
	\begin{minipage}[t]{\textwidth}
		\centering
		\begin{tikzpicture}[node distance = 4.3cm]
		\tikzstyle{block} = [rectangle, draw, text width=4em, text centered, rounded corners, minimum height=3em]
		\tikzstyle{line} = [draw, -tonew,arrowhead=0.085cm]
		\tikzstyle{dot_line} = [draw, -tonew,dashed,arrowhead=0.085cm]
		
		\node [block] (mem) {\small $\mathrm{MEM}(K)$};
		\node [block, right of=mem] (sep) {\small $\mathrm{SEP}(K)$};
		\node [block, right of= sep] (opt) {\small $\mathrm{OPT}(K)$};
		\node [block, right of=opt] (val) {\small $\mathrm{VAL}(K)$};
		
		\node[above of=mem, node distance = 1.25cm] (quantum) {Quantum:};
		\path [line,transform canvas={yshift=-0.35cm}] (mem) to node[below] {$\pmb{\widetilde \Theta(1)}$} (sep);
		\path [line,transform canvas={yshift=0.35cm}] (sep) to node[above] {$\Theta(1)$} (mem);
		\path [line,transform canvas={yshift=-0.35cm}] (sep) to node[below,text width = 1cm] {$\phantom{,}\bOt{n}$  \\ $\phantom{,}\pmb{\Omega(n)^*}\kern-10mm$}  (opt);
		\path [line,transform canvas={yshift=-0.35cm}] (sep) to node[above,text width = 1.2cm] {$\pmb{\Omega(\sqrt{n})}$}  (opt);   
		\path [line,transform canvas={yshift=0.35cm}] (opt) to node[above] {$\bOt{n}$}  (sep);
		\path [line,transform canvas={yshift=-0.35cm}] (opt) to node[below] {$\Theta(1)$}  (val);
		\path [line,transform canvas={yshift=0.35cm}] (val) to node[above] {$\pmb{\widetilde \Theta(1)}$}  (opt);
		\end{tikzpicture}%
	\end{minipage}
	\caption{The top and bottom diagram illustrate the relations between the basic (weak) oracles for respectively classical and quantum queries, with boldface entries marking our new results. All upper and lower bounds hold in the setting where we know an interior point of $K$, except the $*$-marked $\Omega(n)$ lower bound on the number of separation queries needed for optimization. Notice the central symmetry of the diagrams, which is a consequence of polarity.}
	\label{fig:results}
\end{figure}

Figure~\ref{fig:results} gives an informal presentation of our results; the upper bounds arise from oracle reductions, the (change in) accuracy is ignored here for simplicity. 
The above-mentioned polarity manifests itself in the central symmetry of the figure. 

\paragraph{Related independent work.}
In independent simultaneous work, Chakrabarti, Childs, Li, and Wu~\cite{chakrabarti2018QuantumConvexOpt} discovered a similar upper bound as ours: combining the recent classical work of Lee et al.~\cite{lee2017ConvexOptWMemb} with a quantum algorithm for computing gradients, they show how to implement an optimization oracle via $\bOt{n}$ quantum queries to a membership oracle and to an oracle for the objective function. Their proof stays quite close to~\cite{lee2017ConvexOptWMemb} while ours first simplifies some of the technical lemmas of~\cite{lee2017ConvexOptWMemb}, giving us a slightly simpler presentation and a better error-dependence of the resulting algorithm.
They also prove several lower bounds that are similar to the ones we prove here.

\section{Preliminaries} \label{sec:prelim}
We use $[n] := \{1, 2, \ldots, n\}$. 
For $p \geq 1$, $\epsilon \geq 0$, and a set $\domf \subseteq \R^n$ we let 
\[
B_p(\domf,\epsilon) = \{ x \in \R^n : \exists y \in \domf \text{ such that } ||x-y||_p \leq \epsilon \}
\]
be the set of points of distance at most $\epsilon$ from $\domf$ in the $\ell_p$-norm. When $\domf = \{x\}$ is a singleton set we abuse notation and write $B_p(x,\epsilon)$. We overload notation by setting 
\[
B_p(\domf,-\epsilon) = \{ x \in \R^n : B_p(x,\epsilon) \subseteq \domf \}.
\]
Whenever $p$ is omitted it is assumed that $p=2$. 

Recall that a function $f:\domf\rightarrow \R$ is \emph{Lipschitz} if there exists a constant $L>0$ such that 
\[
\left|f(y')-f(y)\right|\leq L\nrm{y'-y}_2 \text{ for all } y,y'\in \domf. 
\]
We write that $f$ is \emph{$L$-Lipschitz}.
The inner product between vectors $v,w\in\R^n$ is $\ipc{v}{w}=v^T w$.

\begin{definition}[Subgradient]
	Let $\domf \subseteq\R^n$ be convex and let $x$ be an element of the interior of $\domf$. For a convex function $f:\domf\rightarrow \R$ we denote by $\underline{\partial} f(x)$ the set of subgradients of $f$ at $x$, i.e., those vectors $g$ satisfying
	$$ f(y)\geq f(x) + \ipc{g}{y-x}\text{ for all } y\in \domf.
	$$
\end{definition}
\noindent Note that in the above definition $\underline{\partial} f(x)\neq \emptyset$ due to convexity.

If $f:\domf \rightarrow \R$ is $L$-Lipschitz, then for any $x$ in the interior of $\domf$ and any $g \in \underline \partial f(x)$ we have $\nrm{g} \leq L$, as follows. Consider a $y \in \domf$ such that $y-x = \alpha g$ for some $\alpha >0$. Then since $g$ is a subgradient of $f$ at $x$ we have 
\begin{equation}\label{eq:subgradBound}
\alpha \nrm{g}^2 = \ipc g{y-x} \leq f(y) - f(x) \leq L\nrm{y-x} = \alpha L \nrm{g},
\end{equation}
and therefore $\nrm{g} \leq L$. 

We will assume familiarity with quantum computing~\cite{nielsen2002QCQI}. In particular, a standard quantum oracle corresponds to a unitary transformation that acts on two (finite-dimensional) registers, where the first register contains the query and the answer is added to the second register. For example, a function evaluation oracle for $f\colon X \rightarrow Y$ would map $\ket{x,0}$ to $\ket{x,f(x)}$, where $\ket{x}$ and $\ket{f(x)}$ are basis states corresponding to binary representations of $x$ and $f(x)$ respectively.
Unlike classical algorithms, quantum computers can apply such an oracle to a \emph{superposition} of different $y$'s. They are also allowed to apply the inverse of a unitary oracle.

The standard quantum oracle described above models problems where there is a single correct answer to a query. When there are multiple good answers (for instance, different good approximations to the correct value) and the oracle is only required to give a correct answer with high probability, then we will work with the more liberal notion of \emph{relational} quantum oracles.

\begin{definition}[Relational quantum oracle] \label{def:reloracle}
	Let $\mathcal{F}\colon X \rightarrow \mathcal{P}(Y)$ be a function, such that for each $x\in X$ the subset $\mathcal{F}(x)\subseteq Y$ is the set of valid answers to an $x$ query. A relational quantum oracle for $\mathcal{F}$ which answers queries with success probability $\geq 1-\rho$, is a unitary that for all $x\in X$ maps
	$$
	U \colon \ket{x,0,0} \mapsto  \sum_{y\in Y}\alpha_{x,y}\ket{x,y,\psi_{x,y}},
	$$
	where $\ket{\psi_{x,y}}$ denotes some normalized quantum state 
	and $\sum_{y\in \mathcal{F}(x)}|\alpha_{x,y}|^2\geq 1- \rho$.
	Thus measuring the second register of $U\ket{x,0,0}$ gives a valid answer to the $x$ query with probability at least $1-\rho$.
\end{definition}

This definition is very natural for cases where the oracle is implemented by a quantum algorithm that produces a valid answer with probability $\geq 1-\rho$.
In order to achieve our quantum speed-ups we will always assume access to the inverse $U^\dagger$ of the relational oracle as well, which is justified if $U$ comes from an efficiently implementable quantum algorithm.  

\subsection{Oracles for convex sets} \label{sec:oracles}

The five basic oracles for a convex set $K$ that we consider are as follows (in contrast with the original~\cite{grotschel1988GeomAlgAndConvOpt}, we allow some error probability $\rho$ in these oracles as in~\cite{lee2017ConvexOptWMemb}). Throughout we will assume that real vectors are represented with $\mathrm{polylog}(nR/(r\epsilon))$ bits of precision per coordinate. In particular, we assume that the input / output of the following oracles is represented this way.\footnote{Note that for weak oracles, where $\epsilon>0$, this is essentially without loss of generality, since the rounding errors can be incorporated into the error parameter of the oracle.}

\begin{definition}[Membership oracle $\mathrm{MEM}_{\epsilon,\rho}(K)$] \label{def:mem}
	Queried with a vector $y \in \R^n$, the oracle, with success probability $\geq 1-\rho$, correctly asserts one of the following 
	\begin{itemize}
		\item $y \in B(K,\epsilon)$, or 
		\item $y \not \in B(K,-\epsilon)$. 
	\end{itemize}
\end{definition}

\begin{definition}[Separation oracle $\mathrm{SEP}_{\epsilon,\rho}(K)$] \label{def:sep}
	Queried with a vector $y \in \R^n$, the oracle, with success probability at least $\geq 1-\rho$, correctly asserts one of the following
	\begin{itemize}
		\item $y \in B(K,\epsilon)$, or 
		\item $y \not \in B(K,-\epsilon)$, 
	\end{itemize}
	and in the second case it returns a unit vector $g \in \R^n$ such that $\langle g,x \rangle \leq \langle g,y \rangle +\epsilon$ for all $x \in B(K,-\epsilon)$.
\end{definition}

\begin{definition}[Optimization oracle $\mathrm{OPT}_{\epsilon,\rho}(K)$]
	Queried with a unit vector $c \in \R^n$, the oracle, with probability $\geq 1-\rho$, does one of the following:
	\begin{itemize}
		\item it returns a vector $y \in \R^n$ such that $y \in B(K,\epsilon)$ and $\ipc{c}{x} \leq \ipc{c}{y} + \epsilon$ for all $x \in B(K,-\epsilon)$, 
		\item or it correctly asserts that $B(K,-\epsilon)$ is empty. 
	\end{itemize}
\end{definition}
\noindent
Note that the above optimization oracle corresponds to \emph{maximizing} a linear function over a convex set; we could equally well state it for minimization.

\begin{definition}[Violation oracle $\mathrm{VIOL}_{\epsilon,\rho}(K)$]
	Queried with a unit vector $c \in \R^n$ and a real number~$\gamma$, the oracle, with probability $\geq 1-\rho$, does one of the following:
	\begin{itemize}
		\item it asserts that $\ipc{c}{x} \leq \gamma + \epsilon$ for all $x \in B(K,-\epsilon)$, 
		\item or it finds a vector $y \in B(K,\epsilon)$ such that $\ipc{c}{y} \geq \gamma - \epsilon$. 
	\end{itemize}
\end{definition}
\begin{definition}[Validity oracle $\mathrm{VAL}_{\epsilon,\rho}(K)$]
	Queried with a unit vector $c \in \R^n$ and a real number~$\gamma$, the oracle, with probability $\geq 1-\rho$, does one of the following:
	\begin{itemize}
		\item it asserts that $\ipc{c}{x} \leq \gamma + \epsilon$ for all $x \in B(K,-\epsilon)$, 
		\item or it asserts that $\ipc{c}{y} \geq \gamma - \epsilon$ for some $y \in B(K,\epsilon)$. 
	\end{itemize}
\end{definition}

\noindent If in the above definitions both $\eps$ and $\rho$ are equal to $0$, then we call the oracle \emph{strong}. If either is non-zero then we sometimes call it \emph{weak}.

\medskip
The above describes the classical oracles, and the quantum oracles are defined analogously, i.e., they are relational quantum oracles (see Definition~\ref{def:reloracle}), that use a binary representation for the input / output vectors.

When we discuss membership queries, we will always assume that we are given a small ball which lies inside the convex set. It is easy to see that without such a small ball one cannot obtain an optimization oracle using only $\mathrm{poly}(n)$ classical queries to a membership oracle (see, e.g.,~\cite[Sec.~4.1]{grotschel1988GeomAlgAndConvOpt} or the example below). As the following example shows, the same holds for quantum queries. 
We will use a reduction from a version of the well-studied \emph{search} problem:

\medskip
\emph{Given $z \in \{0,1\}^N$ such that $|z|=1$, find $b \in [N]$ such that $z_b = 1$.}
\medskip

\noindent It is not hard to see that if the access to $z$ is given via classical queries $i \mapsto z_i$, then $\Omega(N)$ queries are needed. It is well known~\cite{bennett1997QSearchLowerBound} that if we allow quantum queries, i.e., applications of the unitary $\ket{i}\ket{b} \mapsto \ket{i}\ket{z_i \oplus b}$, then $\Omega(\sqrt{N})$ queries are needed. Now let $N = 2^n$ and consider an input $z\in \{0,1\}^N$ to the search problem. Let $b \in \{0,1\}^n$ be the index such that $z_b = 1$. Consider maximizing the linear function $\langle e, z \rangle$ (where $e$ is the all-1 vector) over the set $K_z = \prod_{i=1}^n [b_i-1/2,b_i]$. Clearly the optimal solution to this convex optimization problem, even with a small constant additive error in the answer, gives the solution to the search problem. However, a membership query is essentially equivalent to querying a bit of~$z$ and therefore $\Omega(\sqrt{N}) = \Omega(2^{n/2})$ quantum queries to the membership oracle are needed for optimization. 

\section{Computing approximate subgradients of convex Lipschitz functions} \label{sec:approxsubgradient}

Here we show how to compute an approximate subgradient (at~$0$) of a convex Lipschitz function. That is, given a convex set $\domf$ such that $0 \in \mathrm{int}(C)$ and a convex function $f:\domf \rightarrow \R$, we show how to compute a vector $\tilde g \in \R^n$ such that $f(y) \geq f(0) + \langle \tilde g, y \rangle - a \nrm{y} - b$ for some real numbers $a,b>0$ that will be defined later (see  Lemma~\ref{lemma:hyperplane} and Lemma~\ref{lemma:quantHyperplane}). The idea of the classical algorithm given in the next section is to pick a point $z \in B_\infty(0,r_1)$ uniformly at random and use the finite difference $\nabla^{(r_2)}f(z)$ (defined below) as an approximate subgradient of $f$ at $0$; the radii $r_1$ and $r_2$ need to be chosen small to make the approximation good. This results in a slightly simplified version of the algorithm of Lee et al.~\cite{lee2017ConvexOptWMemb}.
In Section~\ref{sec:quantumgrad} we show how to improve on this classical algorithm on a quantum computer. 

\subsection{Classical approach}
In the discussion that follows we will use the following approximation of the gradient.
\begin{definition}[Finite-difference gradient approximation] \label{def:finitdif}
	For a function $f:\domf \rightarrow \R$, a real $r>0$, and a point $x\in \R^n$ such that $B_1(x,r)\subseteq \domf $, and $i\in[n]$, we define $$\nabla_i^{(r)}f(x):=\frac{f(x+r e_i)-f(x-r e_i)}{2r},$$ where $e_i\in\01^n$ is the vector that has a~1 only in its $i$th coordinate. 
	Similarly we define 
	\[
	\nabla^{(r)}f(x):=\left(\nabla_1^{(r)}f(x),\nabla_2^{(r)}f(x),\ldots,\nabla_n^{(r)}f(x)\right).
	\]
\end{definition}
We will also consider a similar approximation of the Laplacian (the trace of the Hessian) of a function.
\begin{definition}[Finite-difference Laplace approximation]
	For a function $f:\domf \rightarrow \R$, a real $r>0$, and a point $x\in \R^n$ such that $B_1(x,r)\subseteq \domf $, and $i\in[n]$, we define $$\Delta_i^{\!\!(r)}f(x):=\frac{f(x+r e_i)-2f(x) +f(x-r e_i)}{r^2}.$$
	Similarly 
	$$\Delta^{\!\!(r)}f(x):=\sum_{i=1}^{n}\Delta_i^{\!\!(r)}f(x).$$
\end{definition}
\noindent Note that for a convex function we have $\Delta_i^{\!\!(r)}f(x)\geq 0$ for all $x$ such that $B_1(x,r)\subseteq \domf$.

The next two lemmas will be needed in the proof of the main result of this section, Lemma~\ref{lemma:hyperplane}. In Lemma~\ref{lemma:expectation} we give an upper bound on the deviation $\nrm{g-\nabla^{(r_2)}f(z)}_1$ of a finite difference gradient approximation $\nabla^{(r_2)}f(z)$ from an actual subgradient $g$ at the point $z$, in terms of the finite difference Laplace approximation $\Delta^{\!\!(r_2)}f(z)$. Then, in Lemma~\ref{lemma:averageLaplaceBound} we show that in expectation (over the points of a small ball around $x$), the finite difference Laplace approximation is small. Together with Markov's inequality this gives us good control over the quality of a finite difference gradient approximation.

\begin{lemma}\label{lemma:expectation}
	If $r_2 >0$, $z\in\R^n$, and $f:B_1(z,r_2)\rightarrow \R$ is convex, then
	$$ \sup_{g\in\underline{\partial} f(z)}\nrm{g-\nabla^{(r_2)}f(z)}_1\leq \frac{r_2\Delta^{\!\!(r_2)}f(z)}{2}.$$
\end{lemma}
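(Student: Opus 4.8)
The inequality to prove is
\[
\sup_{g\in\underline{\partial} f(z)}\nrm{g-\nabla^{(r_2)}f(z)}_1\leq \frac{r_2\Delta^{\!\!(r_2)}f(z)}{2}.
\]
Since the $\ell_1$-norm decomposes coordinatewise, the plan is to prove for each fixed $i\in[n]$ and each fixed subgradient $g\in\underline{\partial} f(z)$ the one-dimensional bound
\[
\bigl|g_i-\nabla_i^{(r_2)}f(z)\bigr|\leq \frac{r_2}{2}\,\Delta_i^{\!\!(r_2)}f(z),
\]
and then sum over $i$. (The right-hand side does not depend on $g$, so taking the supremum over $g$ at the end is free.) Because the finite-difference operators $\nabla_i^{(r_2)}$ and $\Delta_i^{\!\!(r_2)}$ only involve the values $f(z)$, $f(z+r_2e_i)$, $f(z-r_2e_i)$, and a subgradient $g$ at $z$ restricts (via its $i$-th component) to a subgradient of the restriction $t\mapsto f(z+te_i)$, everything reduces to a statement about a convex function of one real variable.

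**The one-dimensional core.**
Fix $i$, write $\phi(t):=f(z+te_i)$ for $t\in[-r_2,r_2]$; this is convex and $g_i\in\underline{\partial}\phi(0)$, i.e. $\phi(t)\geq \phi(0)+g_i t$ for all $t\in[-r_2,r_2]$. I want to bound
\[
\nabla_i^{(r_2)}f(z)-g_i=\frac{\phi(r_2)-\phi(-r_2)}{2r_2}-g_i
\]
from above and below by $\tfrac{r_2}{2}\Delta_i^{\!\!(r_2)}f(z)=\tfrac{1}{2r_2}\bigl(\phi(r_2)-2\phi(0)+\phi(-r_2)\bigr)$. Rearranging, the upper bound $\nabla_i^{(r_2)}f(z)-g_i\leq \tfrac{r_2}{2}\Delta_i^{\!\!(r_2)}f(z)$ is equivalent to
\[
\frac{\phi(r_2)-\phi(-r_2)}{2r_2}-\frac{\phi(r_2)-2\phi(0)+\phi(-r_2)}{2r_2}\le g_i,
\]
and the left-hand side simplifies to $\dfrac{\phi(0)-\phi(-r_2)}{r_2}$, so the claim becomes $\phi(0)-\phi(-r_2)\le g_i r_2$, i.e. $\phi(-r_2)\ge \phi(0)+g_i(-r_2)$ — which is exactly the subgradient inequality at $t=-r_2$. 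Symmetrically, the lower bound reduces to $\phi(r_2)\ge\phi(0)+g_i r_2$, the subgradient inequality at $t=r_2$. Hence $\bigl|\nabla_i^{(r_2)}f(z)-g_i\bigr|\le \tfrac{r_2}{2}\Delta_i^{\!\!(r_2)}f(z)$, and since convexity gives $\Delta_i^{\!\!(r_2)}f(z)\ge 0$ the absolute value on the right is harmless.

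**Assembling.**
Summing the coordinatewise bound over $i\in[n]$ gives, for the fixed $g$,
\[
\nrm{g-\nabla^{(r_2)}f(z)}_1=\sum_{i=1}^n\bigl|g_i-\nabla_i^{(r_2)}f(z)\bigr|\le \sum_{i=1}^n\frac{r_2}{2}\Delta_i^{\!\!(r_2)}f(z)=\frac{r_2}{2}\Delta^{\!\!(r_2)}f(z),
\]
and since the right-hand side is independent of $g\in\underline{\partial} f(z)$, taking the supremum over $g$ yields the lemma. One should also note at the outset that $\underline{\partial} f(z)\ne\emptyset$: this requires $z$ to lie in the interior of the domain of $f$, which is guaranteed by the hypothesis $B_1(z,r_2)\subseteq\R^n$ being the domain — actually here $f$ is defined exactly on $B_1(z,r_2)$, so $z$ is interior and a subgradient exists.

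**Expected main obstacle.**
There is no serious obstacle; the only thing to be a little careful about is the reduction from the $n$-dimensional subgradient $g$ to the one-dimensional subgradient $g_i$ of $\phi$. This is immediate because the subgradient inequality $f(y)\ge f(z)+\langle g,y-z\rangle$ for all $y$ in the domain, specialized to $y=z\pm r_2 e_i$, directly gives $\phi(\pm r_2)\ge\phi(0)\pm g_i r_2$ — which is all the one-dimensional argument above actually uses. So the proof is essentially a two-line rearrangement per coordinate plus a sum; the ``content'' is recognizing that the finite-difference error is controlled by the (nonnegative) discrete second difference via the subgradient inequality evaluated at the two offset points $z\pm r_2 e_i$.
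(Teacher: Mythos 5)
Your proof is correct and follows essentially the same route as the paper: the coordinatewise bound $|g_i-\nabla_i^{(r_2)}f(z)|\leq \tfrac{r_2}{2}\Delta_i^{\!\!(r_2)}f(z)$ obtained from the subgradient inequality at $z\pm r_2 e_i$ (the paper phrases it as $A\leq g_i\leq B$ with $\nabla_i^{(r_2)}f(z)$ the midpoint and $r_2\Delta_i^{\!\!(r_2)}f(z)$ the width, you phrase it as two explicit rearrangements), followed by summing over $i$. No gaps.
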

\begin{proof}
	Fix a $g\in\underline{\partial} f(z)$. For every $i\in[n]$, we have $f(z+r_2 e_i) \geq f(z) + \langle g, r_2 e_i\rangle = f(z) + r_2 g_i$, and, similarly, $f(z-r_2e_i) \geq f(z) - r_2 g_i$. Rearranging gives 
	\[
	\underbrace{\frac{f(z) - f(z-r_2 e_i)}{r_2}}_{:=A} \leq g_i \leq \underbrace{\frac{f(z+r_2e_i) - f(z)}{r_2}}_{:=B}. 
	\]
	Note that $|g_i-\frac{A+B}{2}| \leq \frac{B-A}{2}$ for any three real numbers $A \leq g_i \leq B$. Moreover, $\frac{A+B}{2} = \nabla_i^{(r_2)}f(z)$ and $B-A = r_2\Delta_i^{\!\!(r_2)}f(z)$, thus $\left|g_i-\nabla_i^{(r_2)}f(z)\right|\leq \frac{r_2\Delta_i^{\!\!(r_2)}f(z)}{2}$.
	Now we can finish the proof by summing this inequality over all $i\in[n]$.
\end{proof}

\begin{lemma}\label{lemma:averageLaplaceBound}
	If $0<r_2 \leq r_1 $, and $f:B_\infty(x,r_1+r_2)\rightarrow \R$ is convex and $L$-Lipschitz, then
	$$ \underset{z\in B_\infty(x,r_1)}{\E} \Delta^{\!\!(r_2)}f(z)\leq \frac{nL}{r_1}.$$
\end{lemma}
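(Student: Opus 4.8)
The plan is to bound the expected finite-difference Laplacian by relating it to an integral of one-dimensional second differences, and then to recognize that integrating a second difference in a single coordinate direction collapses (telescopes) to a difference of first differences, which is controlled by the Lipschitz constant. Concretely, write $\E_{z\in B_\infty(x,r_1)}\Delta^{\!\!(r_2)}f(z)=\sum_{i=1}^n\E_z\Delta_i^{\!\!(r_2)}f(z)$ by linearity, so it suffices to show $\E_{z\in B_\infty(x,r_1)}\Delta_i^{\!\!(r_2)}f(z)\le L/r_1$ for each fixed $i\in[n]$.

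For a fixed $i$, I would use Fubini to split the uniform average over the cube $B_\infty(x,r_1)$ into an average over the $n-1$ coordinates other than $i$ (call the corresponding point $z'$ in the $(n-1)$-dimensional face), and, for each such $z'$, a one-dimensional average of $\Delta_i^{\!\!(r_2)}f$ over the segment $t\in[x_i-r_1,x_i+r_1]$. The key computation is the inner one-dimensional average. Fixing all coordinates but the $i$th, let $h(t):=f(z'+t e_i)$ be the resulting convex, $L$-Lipschitz function of one variable on an interval containing $[x_i-r_1-r_2,x_i+r_1+r_2]$. Then $\Delta_i^{\!\!(r_2)}f$ at the point with $i$th coordinate $t$ equals $\big(h(t+r_2)-2h(t)+h(t-r_2)\big)/r_2^2=\big((h(t+r_2)-h(t))-(h(t)-h(t-r_2))\big)/r_2^2$. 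Averaging over $t\in[x_i-r_1,x_i+r_1]$ and substituting in the two difference terms, the integrals telescope: $\int_{x_i-r_1}^{x_i+r_1}\!\big(h(t+r_2)-h(t)\big)dt=\int_{x_i+r_1}^{x_i+r_1+r_2}\!h - \int_{x_i-r_1}^{x_i-r_1+r_2}\!h$, and similarly for the other term, so the whole numerator integral equals a signed sum of four integrals of $h$ over length-$r_2$ windows near the two endpoints $x_i\pm r_1$. Using $L$-Lipschitzness of $h$ (so each such integral differs from $r_2\,h(\text{endpoint})$ by at most $L r_2^2/2$, but more cleanly: the combination rearranges into $\int_{x_i+r_1}^{x_i+r_1+r_2}(h(s)-h(s-2r_2))\,ds \cdot$-type expressions, each bounded by $2r_2\cdot L r_2$ in absolute value, hence $\le 2Lr_2^2$ total up to the factor), after dividing by $r_2^2$ and by the length $2r_1$ of the averaging interval one gets $\E_t \Delta_i^{\!\!(r_2)}f \le L/r_1$. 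Averaging over $z'$ preserves this bound, and summing over $i\in[n]$ gives $nL/r_1$.

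The step I expect to be the main obstacle is organizing the telescoping cleanly so that the Lipschitz bound is applied to differences of $h$ over a controlled length (ideally making the $r_2^2$ in the denominator cancel exactly against an $r_2^2$ from the numerator, leaving a bound with no leftover dependence on $r_2$). Convexity is needed only to guarantee $\Delta_i^{\!\!(r_2)}f\ge 0$ (so that the bound is one-sided and the statement is meaningful), and possibly to simplify sign handling; the quantitative estimate itself is really a statement about telescoping first differences and the $L$-Lipschitz constant. A minor bookkeeping point is that $r_2\le r_1$ ensures the shifted windows $[x_i\pm r_1, x_i\pm r_1\pm r_2]$ stay inside the domain $B_\infty(x,r_1+r_2)$ on which $f$ is defined, so all function values appearing are legitimate.
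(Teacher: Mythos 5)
Your proposal is correct and follows essentially the same route as the paper: reduce to a per-coordinate bound, write the second difference as a difference of first differences, and shift the integration variable so the bulk cancels and only length-$r_2$ boundary windows remain, each controlled by the Lipschitz constant so that the $r_2^2$ cancels and the average is at most $L/r_1$ per coordinate. The only quibble is a small bookkeeping slip in the middle (the leftover terms pair up as differences $h(s)-h(s-r_2)$ over length-$r_2$ windows, not $h(s)-h(s-2r_2)$), which does not affect the final bound.
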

\begin{proof}
	Below we show that $\underset{z\in B_\infty(x,r_1)}{\E} \Delta_i^{\!\!(r_2)}f(z) \leq \frac{L}{r_1}$ for all $i\in[n]$, summing over $i$ then proves the lemma.  
	
	Let $h_i(z) := f(z-r_2e_i) - f(z)$; we have that
	\begin{align*}
	\underset{z\in B_\infty(x,r_1)}{\E} \Delta_i^{\!\!(r_2)}f(z) &= \frac{1}{(2r_1)^n} \int_{z \in B_\infty(x,r_1)} \frac{f(z+r_2 e_i) - 2f(z) + f(z-r_2e_i)}{r^2_2} \, dz\\
	&=\frac{1}{(2r_1)^n} \int_{\substack{z_j \in [x_j - r_1, x_j+r_1],\\ j \in [n], j \neq i}}  \int_{z_i \in [x_i -r_1, x_i+r_1]} \frac{f(z-r_2 e_i) - 2f(z) + f(z+r_2e_i)}{r^2_2} \, dz\\
	&=\frac{1}{(2r_1)^n} \int_{\substack{z_j \in [x_j - r_1, x_j+r_1],\\ j \in [n], j \neq i}}  \Big(\int_{z_i \in [x_i -r_1, x_i+r_1]} \frac{h_i(z)}{r^2_2} \, dz\\
	&\qquad \qquad \qquad \qquad \quad \qquad- \int_{z_i \in [x_i -r_1, x_i+r_1]} \frac{h_i(z+r_2e_i)}{r^2_2} \, dz \Big)\\
	&=\frac{1}{(2r_1)^n} \int_{\substack{z_j \in [x_j - r_1, x_j+r_1],\\ j \in [n], j \neq i}}  \Big(\int_{z_i \in [x_i -r_1, x_i+r_1]} \frac{h_i(z)}{r^2_2} \, dz\\
	&\qquad \qquad \qquad \qquad \quad \qquad- \int_{z_i \in [x_i -r_1+r_2, x_i+r_1+r_2]} \frac{h_i(z)}{r^2_2} \, dz \Big)\\
	&=\frac{1}{(2r_1)^n} \int_{\substack{z_j \in [x_j - r_1, x_j+r_1],\\ j \in [n], j \neq i}}  \Big(\int_{z_i \in [x_i -r_1, x_i-r_1+r_2]} \frac{h_i(z)}{r^2_2} \, dz\\
	&\qquad \qquad \qquad \qquad \quad \qquad- \int_{z_i \in [x_i +r_1, x_i+r_1+r_2]} \frac{h_i(z)}{r^2_2} \, dz \Big)\\
                                                               &\leq \frac{1}{(2r_1)^n} \int_{\substack{z_j \in [x_j - r_1, x_j+r_1],\\ j \in [n], j \neq i}} 2L \, dz\\
                                                               &= \frac{L}{r_1} .
	\end{align*}
The last inequality above follows from multiplying the upper bound $r_2L$ on $|h_i|$ with the length $r_2$ of the integration intervals.
\end{proof}
Note that the above lemma is stated and proved for continuous random variables, but the same proof holds if we have a uniform hypergrid over the same hypercube, providing a discrete version of the above result. In the discrete case, in order to get the same cancellations we need to assume that both $r_1$ and $r_2$ are integer multiples of the grid spacing.

We are now ready to prove the main result of this section. Informally, the next lemma proves that an approximate subgradient of a convex Lipschitz function $f$ at $0$ can be obtained by an algorithm that outputs $\nabla^{(r_2)} \tilde{f}(z)$ for a random $z$ close enough to~$0$, where $\tilde f$ is an approximate version of~$f$. In other words, this lemma gives us a classical algorithm to compute an approximate subgradient of $f$ using $2n$ classical queries to an approximate version of~$f$.

\begin{lemma}\label{lemma:hyperplane}
	Let $r_1>0$, $L>0$, $\rho\in(0,1/3]$, $\delta\in(0,r_1 \sqrt{n} L/\rho]$, then $r_2 := \sqrt{\frac{\delta r_1 \rho}{ \sqrt{n}L}}\leq r_1$. Suppose $f:\domf  \to \R$ is a convex function that is $L$-Lipschitz on $B_\infty(0,2r_1)$, and $\tilde{f}:B_\infty(0,2r_1) \to \R$ is such that $\nrm{\tilde{f}-f}_\infty\leq \delta$.
	Then for a uniformly random $z\in B_\infty(0,r_1)$, with probability at least $1-\rho$,
	\[
	f(y) \geq f(0) + \ipc{ \nabla^{(r_2)} \tilde{f}(z)}{y} - \frac{3n^{\frac{3}{4}}}{2}\sqrt{\frac{\delta L}{\rho r_1}}\nrm{y}  - 2L \sqrt{n} r_1 \qquad \text{for all } y \in C.
	\]
\end{lemma}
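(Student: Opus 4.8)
The plan is to split the error $f(y) - f(0) - \langle \nabla^{(r_2)}\tilde f(z), y\rangle$ into three contributions: (i) the error from replacing $\tilde f$ by $f$ inside the finite difference, (ii) the error from replacing the finite-difference gradient $\nabla^{(r_2)}f(z)$ by an exact subgradient $g \in \underline\partial f(z)$, and (iii) the error from the fact that the subgradient is taken at the random point $z$ rather than at $0$. Each of these is controlled by a standard estimate, and the final bound is obtained by choosing $r_2$ so that the first two error terms are balanced.

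First I would dispense with the precondition $r_2 \le r_1$: since $\delta \le r_1\sqrt n L/\rho$ we have $r_2^2 = \delta r_1 \rho/(\sqrt n L) \le r_1^2$, so indeed $r_2 \le r_1$, and hence $B_\infty(0, r_1 + r_2) \subseteq B_\infty(0, 2r_1) \subseteq \domf$, so all the finite differences are well-defined and all invocations of the $L$-Lipschitz hypothesis are legitimate. Next, for term (i): each coordinate of $\nabla^{(r_2)}\tilde f(z) - \nabla^{(r_2)} f(z)$ is of the form $\big((\tilde f - f)(z + r_2 e_i) - (\tilde f - f)(z - r_2 e_i)\big)/(2r_2)$, which is at most $\delta/r_2$ in absolute value, so $\nrm{\nabla^{(r_2)}\tilde f(z) - \nabla^{(r_2)} f(z)}_\infty \le \delta/r_2$ and by duality this costs at most $(\delta/r_2)\nrm{y}$ in the inner product with $y$. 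For term (ii): by Lemma~\ref{lemma:expectation}, for any fixed $z$ and any $g \in \underline\partial f(z)$ we have $\nrm{g - \nabla^{(r_2)}f(z)}_1 \le r_2 \Delta^{\!\!(r_2)}f(z)/2$; the key move is then to apply Lemma~\ref{lemma:averageLaplaceBound} (with $x = 0$) together with Markov's inequality: since $\E_{z \in B_\infty(0,r_1)}\Delta^{\!\!(r_2)}f(z) \le nL/r_1$, with probability at least $1-\rho$ over $z$ we have $\Delta^{\!\!(r_2)}f(z) \le nL/(r_1\rho)$, hence $\nrm{g - \nabla^{(r_2)}f(z)}_1 \le r_2 n L/(2r_1\rho)$, which costs at most $r_2 nL/(2r_1\rho) \cdot \nrm{y}$ in the inner product. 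For term (iii): fixing such a good $z$ and $g \in \underline\partial f(z)$, the subgradient inequality at $z$ gives $f(y) \ge f(z) + \langle g, y - z\rangle$ for all $y \in C$, while $|f(z) - f(0)| \le L\nrm{z} \le L\sqrt n r_1$ (using $\nrm{z} \le \sqrt n r_1$ for $z \in B_\infty(0,r_1)$) and $|\langle g, z\rangle| \le \nrm{g}\nrm{z} \le L\sqrt n r_1$ (using $\nrm{g}\le L$ from~\eqref{eq:subgradBound}), which together account for the additive $2L\sqrt n r_1$ term.

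Combining the three bounds: with probability at least $1-\rho$,
\[
f(y) \ge f(0) + \langle \nabla^{(r_2)}\tilde f(z), y\rangle - \Big(\frac{\delta}{r_2} + \frac{r_2 nL}{2r_1\rho}\Big)\nrm{y} - 2L\sqrt n r_1 \qquad\text{for all } y \in C.
\]
It remains to check the arithmetic of the coefficient of $\nrm{y}$ with the stated choice $r_2 = \sqrt{\delta r_1\rho/(\sqrt n L)}$: then $\delta/r_2 = \sqrt{\delta \sqrt n L/(r_1\rho)} = n^{1/4}\sqrt{\delta L/(r_1\rho)}$, and $r_2 nL/(2r_1\rho) = \tfrac12 n L/(r_1\rho) \cdot \sqrt{\delta r_1\rho/(\sqrt n L)} = \tfrac12 n^{3/4}\sqrt{\delta L/(r_1\rho)}$; their sum is $(n^{1/4} + \tfrac12 n^{3/4})\sqrt{\delta L/(r_1\rho)} \le \tfrac32 n^{3/4}\sqrt{\delta L/(r_1\rho)}$ since $n^{1/4} \le n^{3/4}$, which is exactly the claimed coefficient. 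I expect the only genuinely delicate point to be the correct bookkeeping of which Lipschitz domain is needed where — in particular making sure that $B_\infty(0, r_1 + r_2)$ suffices for Lemma~\ref{lemma:averageLaplaceBound} and that this is contained in $B_\infty(0, 2r_1)$ — and the clean separation of the randomness: the only randomness is in $z$, and the single bad event (of probability $\le \rho$) is $\Delta^{\!\!(r_2)}f(z)$ being too large; everything else is deterministic given $z$.
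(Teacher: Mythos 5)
Your overall route is the paper's: the same three-way error split (switch $\tilde f\to f$ in the finite difference, bound $\nrm{g-\nabla^{(r_2)}f(z)}_1$ via Lemmas~\ref{lemma:expectation}--\ref{lemma:averageLaplaceBound} plus Markov, and pay $2L\sqrt n r_1$ for moving the base point from $z$ to $0$), with the only cosmetic difference that you apply Markov to $\Delta^{\!\!(r_2)}f(z)$ rather than directly to $\E\nrm{g-\nabla^{(r_2)}f(z)}_1$; both give the identical bound $\frac{nLr_2}{2r_1\rho}=\frac{n^{3/4}}{2}\sqrt{\delta L/(\rho r_1)}$.

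However, your treatment of term (i) contains a genuine error: from $\nrm{\nabla^{(r_2)}\tilde f(z)-\nabla^{(r_2)}f(z)}_\infty\leq \delta/r_2$ you conclude ``by duality'' that the cost is $(\delta/r_2)\nrm{y}$. The dual of $\ell_\infty$ is $\ell_1$, so H\"older only gives $(\delta/r_2)\nrm{y}_1$, and since the lemma's conclusion is stated with the Euclidean norm you must pay the conversion $\nrm{y}_1\leq\sqrt n\,\nrm{y}$ (equivalently, as in the paper, use Cauchy--Schwarz with $\nrm{\nabla^{(r_2)}\tilde f(z)-\nabla^{(r_2)}f(z)}\leq\sqrt n\,\delta/r_2$). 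The correct coefficient for this term is therefore $\frac{\sqrt n\,\delta}{r_2}=n^{3/4}\sqrt{\delta L/(\rho r_1)}$, not $n^{1/4}\sqrt{\delta L/(\rho r_1)}$. The lemma is unharmed because $n^{3/4}+\frac12 n^{3/4}=\frac32 n^{3/4}$ matches the stated coefficient exactly, but note that this means your closing step ``$n^{1/4}+\frac12 n^{3/4}\leq\frac32 n^{3/4}$'' rests on the faulty bound and that, once corrected, there is no slack at all in the coefficient. With that single fix your argument coincides with the paper's proof; the domain bookkeeping ($r_2\leq r_1$, $B_1(z,r_2)\subseteq B_\infty(0,r_1+r_2)\subseteq B_\infty(0,2r_1)$) and the isolation of the single bad event in $z$ are handled correctly.
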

\begin{proof}
	Let $z \in B_\infty(0,r_1)$ and $g \in \underline{\partial} f(z)$. 
	Recall $\nrm{g}\leq L$ by Equation~(\ref{eq:subgradBound}).
	Then for all $y\in C$
	\begin{align*}
	f(y) &\geq f(z) + \langle g, y-z\rangle \\
	&=  f(z) + \langle g, y-z\rangle + \left( \ipc{ \nabla^{(r_2)} f(z)}{y} - \ipc{ \nabla^{(r_2)} f(z)}{y} \right) + \left( f(0) - f(0) \right)\\
	&=f(0) + \ipc{ \nabla^{(r_2)} f(z)}{y} + \langle g-\nabla^{(r_2)} f(z), y\rangle + (f(z)-f(0)) + \langle g,-z\rangle   \\
	&\geq f(0) + \ipc{ \nabla^{(r_2)} f(z)}{y} - \nrm{g-\nabla^{(r_2)} f(z)}_1 \nrm{y}_\infty - L\nrm{z} -  \nrm{g} \nrm{z} \\
	&\geq f(0) + \ipc{ \nabla^{(r_2)} f(z)}{y} - \nrm{g-\nabla^{(r_2)} f(z)}_1 \nrm{y}_\infty - L\sqrt{n}r_1  - L\sqrt{n}r_1\\
	&\geq f(0) + \ipc{ \nabla^{(r_2)} \tilde{f}(z)}{y} - \frac{\delta \sqrt{n}}{r_2}\nrm{y} - \nrm{g-\nabla^{(r_2)} f(z)}_1 \nrm{y}_\infty - 2L\sqrt{n}r_1.
	\end{align*}
	Note that in the last line we switched from $f$ to $\tilde{f}$, using that $\nabla^{(r_2)} f(z)$ and $\nabla^{(r_2)} \tilde{f}(z)$ differ by at most $\delta/r_2$ in each coordinate.
	Our choice of $r_2$ gives $\frac{\delta \sqrt{n}}{r_2}=n^{\frac{3}{4}}\sqrt{\frac{\delta L}{\rho r_1}}$ and by Lemma~\ref{lemma:expectation}--\ref{lemma:averageLaplaceBound} we have 
	$$ 
	\underset{z\in B_\infty(x,r_1)}{\E} \nrm{g-\nabla^{(r_2)}f(z)}_1\leq \frac{nLr_2}{2r_1}=\frac{n^{\frac{3}{4}}}{2}\sqrt{\frac{\delta L\rho}{r_1}}.
	$$
	By Markov's inequality we get that $\nrm{g-\nabla^{(r_2)}f(z)}_1\leq\frac{n^{\frac{3}{4}}}{2}\sqrt{\frac{\delta L}{\rho r_1}}$  with probability $\geq 1-\rho$ over the choice of~$z$. Plugging this bound on $\nrm{g-\nabla^{(r_2)}f(z)}_1$ into the above lower bound on $f(y)$ concludes the proof of the lemma.
\end{proof}

\subsection{Quantum improvements} \label{sec:quantumgrad}

In this section we show how to improve subgradient computation of convex functions via Jordan's quantum algorithm for gradient computation~\cite{jordan2005QuantGrad}. We use the formulation given by Gily\'{e}n et al.~\cite[Lemma~20]{gilyen2017OptQOptAlgGrad}, for which we first introduce the following definition.

\begin{definition}[Hyper-grid]\label{def:labelDefiniton}
	For $k\in \mathbb{N}$ we define the following discretization of the interval $(-1/2,1/2)$:
	\[
	G_k:=\left\{\frac{j}{2^k}-\frac{1}{2}+2^{-k-1} : j\in \{0,\ldots,2^k-1\}  \right\}~\subset~(-1/2,1/2).
	\]
	Similarly we define the $n$-dimensional hyper-grid $G^n_k$, which is the $n$-fold Cartesian product of $G_k$ with itself.
\end{definition}
\noindent Note that an element of $G_k^n$ can be represented using $n\times k$ (qu)bits. Basically, Jordan's algorithm just sets up a uniform superposition over all grid points, applies a ``phase query'' to~$f$, and then a quantum Fourier transform over each coordinate.

\begin{lemma}\textrm{\normalfont(Jordan's quantum gradient computation algorithm~\cite[Lemma~20]{gilyen2017OptQOptAlgGrad})}\label{lemma:genericJordan}\\
	Let $m\in \N$, $c\in \R$ and $g\in \R^n$ such that $\nrm{g}_\infty\leq 1/3$. If $h:G_m^n\to \R$ is such that 
	\begin{equation}\label{eq:closeFunctionApx}
	\left|h(x)-\ipc{g}{x} -c \right|\leq \frac{2^{-m}}{42\pi},
	\end{equation}
	for 99.9\% of the points $x\in G_m^n$,
	then using a single query to a phase oracle $\mathrm{O}\colon \ket{x}\mapsto e^{2\pi i 2^m h(x)}\ket{x}$ Jordan's gradient computation algorithm outputs a vector $v\in \R^n$ such that: 
	$$\Pr\left[|v_i-g_i|>\! 2^{2-m}\right]\leq 1/3 \quad \text{ for every  } i\in[n].$$
\end{lemma}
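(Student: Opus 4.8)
The plan is to analyze Jordan's algorithm directly. Identify the hyper-grid $G_m^n$ with $\{0,\dots,2^m-1\}^n$ via $x_\ell = j_\ell/2^m - 1/2 + 2^{-m-1}$, so that $2^m x_\ell = j_\ell + \tfrac12 - 2^{m-1}$; with $nm$ Hadamards one prepares the uniform superposition $2^{-mn/2}\sum_{x\in G_m^n}\ket{x}$, and a single call to $O$ yields $\ket{\psi}:=2^{-mn/2}\sum_{x\in G_m^n}e^{2\pi i 2^m h(x)}\ket{x}$. The key observation is that if $h$ were the exact affine map $x\mapsto\ipc{g}{x}+c$, then, using $2\pi i 2^m \ipc{g}{x} = 2\pi i\sum_\ell g_\ell j_\ell + 2\pi i\sum_\ell g_\ell(\tfrac12 - 2^{m-1})$, the state would factor as the product state
\[
\ket{\phi} = e^{i\theta}\bigotimes_{\ell=1}^{n}\Big(2^{-m/2}\sum_{j=0}^{2^m-1}e^{2\pi i g_\ell j}\ket{j}\Big),
\]
where $e^{i\theta}$ absorbs the scalar $e^{2\pi i 2^m c}$ together with the $j$-independent per-coordinate phases $e^{2\pi i g_\ell(\frac12 - 2^{m-1})}$. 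Applying a coordinate-wise inverse Fourier transform to $\ket{\phi}$ and measuring register $\ell$ is then exactly $m$-bit phase estimation of the phase $\varphi_\ell:=g_\ell\bmod 1$.

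First I will handle this idealized situation, i.e.\ measuring $\ket{\phi}$ rather than $\ket{\psi}$. By the standard Fej\'er-kernel bound for phase estimation, the outcome $b_\ell\in\{0,\dots,2^m-1\}$ of register $\ell$ satisfies $\Pr[\,d_\circ(b_\ell/2^m,\varphi_\ell) > 4\cdot 2^{-m}\,]\le\tfrac{1}{2(4-1)}=\tfrac16$, where $d_\circ$ denotes distance modulo $1$. Recentering each $b_\ell$ into $[-\tfrac12,\tfrac12)$ — taking $v_\ell := b_\ell/2^m$ if $b_\ell < 2^{m-1}$ and $v_\ell := b_\ell/2^m - 1$ otherwise — and using $\nrm{g}_\infty\le 1/3$ to exclude wrap-around ambiguity, this becomes $\Pr[\,|v_\ell - g_\ell| > 2^{2-m}\,]\le\tfrac16$. (For the smallest values of $m$, where $2^{2-m}$ exceeds $\nrm{g}_\infty + 1/2$, the statement is trivially true.) This is the only place where the hypothesis $\nrm{g}_\infty\le 1/3$ enters.

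Next I account for $h$ not being exactly affine. Since $\ket{\psi}$ and $\ket{\phi}$ arise from the same starting state by applying $O$ versus the ideal affine phase, $\nrm{\ket{\psi} - \ket{\phi}}^2 = 2^{-mn}\sum_{x\in G_m^n}\big|e^{2\pi i 2^m h(x)} - e^{2\pi i 2^m(\ipc{g}{x}+c)}\big|^2$. At each of the $\ge 99.9\%$ of grid points with $|h(x)-\ipc{g}{x}-c|\le 2^{-m}/(42\pi)$ the two phases differ by at most $2\pi\cdot 2^m\cdot\frac{2^{-m}}{42\pi} = \frac{1}{21}$, so the summand is at most $(1/21)^2$; at each of the remaining $\le 0.1\%$ of points it is at most $4$. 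Hence $\nrm{\ket{\psi}-\ket{\phi}}^2 \le \frac{0.999}{441} + 0.001\cdot 4 < 0.0063$, i.e.\ $\eta:=\nrm{\ket{\psi}-\ket{\phi}} < 0.08$. The operations performed afterwards (the inverse Fourier transforms) are unitary and the recentering is classical, so any measurement-outcome probability changes by at most $2\eta$ when $\ket{\phi}$ is replaced by $\ket{\psi}$; together with the previous paragraph this gives, for every $i\in[n]$,
\[
\Pr[\,|v_i - g_i| > 2^{2-m}\,] \le \tfrac16 + 2\eta < \tfrac13.
\]
The procedure used exactly one call to $O$ (and $\bigO{nm^2}$ further elementary gates), which proves the lemma.

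The quantum-algorithmic content here is nothing but Jordan's algorithm — a uniform superposition over the grid, one phase query, a coordinate-wise inverse QFT, a measurement — and the per-coordinate estimate is textbook phase estimation, so the only genuinely delicate point is the bookkeeping of constants. The threshold $2^{-m}/(42\pi)$ in the hypothesis, the choice $e=4$ in the phase-estimation bound (yielding precision $2^{2-m}$), the $99.9\%$ guarantee, and the target failure probability $1/3$ are all tuned in concert, so that the phase-estimation error $\tfrac16$ and the oracle-imperfection error $2\eta\le\tfrac16$ together stay within the target $\tfrac13$; one also has to check that the shift built into the definition of $G_m$ contributes only the harmless $j$-independent phases used above, and that the recentering step genuinely recovers possibly-negative components $g_\ell$ — which works precisely because $\nrm{g}_\infty\le 1/3$ keeps every $g_\ell$ comfortably inside $(-\tfrac12,\tfrac12)$.
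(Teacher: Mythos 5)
Your reconstruction follows the same route as the source the paper cites for this lemma (the paper itself gives no proof, it imports it from Gily\'en et al.~\cite[Lemma~20]{gilyen2017OptQOptAlgGrad}): uniform superposition, one phase query, the observation that an exactly affine $h$ yields a product of single-coordinate phase-estimation instances, the Fej\'er-kernel tail bound $\Pr[d_\circ > 4\cdot 2^{-m}]\le \tfrac{1}{2(4-1)}=\tfrac16$, and a norm-perturbation argument for the non-affine part. The quantitative bookkeeping checks out: at good grid points the phases differ by at most $2\pi\cdot 2^m\cdot\frac{2^{-m}}{42\pi}=\frac1{21}$, giving $\eta<0.08$, and $\tfrac16+2\eta<\tfrac13$ (indeed one even has the sharper bound that the outcome probabilities shift by at most $\eta$, not $2\eta$, since the trace distance between pure states is at most their Euclidean distance).

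There is, however, one step that does not hold as stated: the claim that $\nrm{g}_\infty\le 1/3$ excludes wrap-around in the recentering. Wrap-around means $|v_\ell-g_\ell|=1-d_\circ$, and since $|v_\ell-g_\ell|\le \tfrac12+\tfrac13=\tfrac56$, it can only be excluded by the event $d_\circ\le 4\cdot 2^{-m}$ when $4\cdot 2^{-m}<\tfrac16$, i.e.\ $m\ge 5$. Your parenthetical ``trivially true'' remark only covers $m\le 2$ (where $2^{2-m}>\tfrac56$), so $m=3$ and $m=4$ fall through the cracks: e.g.\ for $m=4$, $g_\ell=\tfrac13$ and outcome $b_\ell/2^m=\tfrac{9}{16}$ one has $d_\circ=\tfrac{11}{48}\le\tfrac14$ yet $|v_\ell-g_\ell|=\tfrac{37}{48}>2^{2-m}$. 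The lemma is still true in these cases, but proving it requires a separate argument (for instance, noting that wrap-around failure forces $d_\circ\ge\tfrac16$ and bounding the kernel mass on the specific failure arc, which for $m=3,4$ needs the tighter $\eta$ rather than $2\eta$ slack), or simply restricting the statement to $m\ge 5$ --- harmless for the paper's application, where $m=\lceil\log_2(B/(28\pi\delta))\rceil$ is large, but worth flagging since the lemma is stated for all $m\in\N$.
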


We now show that the above algorithm allows us to compute an approximate subgradient of a function $f$, even if we are only given standard oracle access to a function $\tilde f$ which is sufficiently close to $f$. In particular, we will assume we are given access to a standard unitary oracle of a function $\tilde{f}\colon G_m^n\to \R$ which satisfies $|\tilde{f}(x)-f(x)|\leq \delta$ for all $x \in G_m^n$. That is, we assume we are given access to a unitary $U$ acting as 
\begin{equation}\label{eq:unitaryEval}
U:\ket{x}\ket{0}\mapsto\ket{x}\ket{\tilde f(x)}
\end{equation}
Note that if we can classically efficiently evaluate~$\tilde f$, then it is well known that we can construct such a unitary as a small quantum circuit (see~\cite[Sec.~1.4.1]{nielsen2002QCQI}).

The main idea is that, using one application of $U$, a phase gate corresponding to the output register, and another application of $U^\dagger$ to uncompute the function value, we can implement a phase oracle for $\tilde f$.  Moreover, Equation~\eqref{eq:apxLinJordan} below  
will also hold for $\tilde f$, with a slightly worse right-hand side, since $f$ is close to $\tilde f$. 
A version of the following is proven in~\cite[Theorem~21]{gilyen2017OptQOptAlgGrad},  for completeness we sketch a proof.

\begin{corollary}[Gradient computation using approximate function evaluation]\label{cor:corollaryJordan}
	Let $\delta,B,r\in \R_+,c\in\R$, $\rho\in(0,1/3]$. Let $x_0,g\in\R^n$ with $\nrm{g}_\infty\leq \frac{B}{r}$. Let $m:=\left\lceil\log_2\left(\frac{B}{28\pi\delta}\right)\right\rceil$ and suppose $f:\left(x_0+r G_m^n\right)\to \R$ is such that
	\begin{equation}\label{eq:apxLinJordan}
	\left|f(x_0+r x)-\ipc{g}{rx}-c\right|\leq \delta
	\end{equation}
	for 99.9\% of the points $x\in G_m^n$, and
	we have access to a standard unitary oracle $U$, providing $\bigO{\log\left(\!\frac{B}{\delta}\!\right)}$-bit fixed-point binary approximations $\tilde{f}(z)$ s.t. $|\tilde{f}(z)-f(z)|\leq \delta$ for all $z\in \left(x_0+r G_m^n\right)$.
	Then we can compute a vector $\tilde{g}\in\R^n$ such that 
	$$\Pr\left[\,\nrm{\tilde{g}-g}_\infty>\! \frac{8\cdot 42\pi \delta}{r}\right]\leq \rho,$$
	with $\bigO{\!\log\!\big(\frac{n}{\rho}\big)\!}$ queries to $U$ and $U^\dagger$ 
	and gate complexity $\bigO{\!n\log\!\big(\frac{n}{\rho}\big)\!\log\!\big(\!\frac{B}{\delta}\!\big)\!\log\!\log\!\big(\frac{n}{\rho}\big)\!\log\!\log\!\big(\!\frac{B}{\delta}\!\big)\!}\!$.
\end{corollary}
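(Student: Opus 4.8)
The plan is to reduce Corollary~\ref{cor:corollaryJordan} to Lemma~\ref{lemma:genericJordan} by rescaling coordinates and converting the standard (value) oracle $U$ into a phase oracle, while tracking the error budget carefully. First I would perform the affine change of variables $x\mapsto x_0+rx$ so that the relevant function becomes $F(x):=f(x_0+rx)$ on the hyper-grid $G_m^n$; the hypothesis \eqref{eq:apxLinJordan} then reads $|F(x)-\ipc{g}{rx}-c|\le\delta$ for $99.9\%$ of $x\in G_m^n$, i.e.\ $F$ is close to the linear function $\ipc{rg}{x}+c$. Setting $g':=rg$ we have $\nrm{g'}_\infty\le B$, so after dividing through by an appropriate power of two we can arrange $\nrm{g'}_\infty\le 1/3$, which is the normalization Lemma~\ref{lemma:genericJordan} requires; I would choose the rescaling so that the threshold $2^{-m}/(42\pi)$ in \eqref{eq:closeFunctionApx} matches the available accuracy. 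Concretely, with $m=\lceil\log_2(B/(28\pi\delta))\rceil$ we get $2^{-m}\le 28\pi\delta/B$, hence $2^{-m}/(42\pi)\le 28\pi\delta/(42\pi B)=\frac{2\delta}{3B}$; meanwhile the combined error from (i) the $99.9\%$-approximation of $F$ by a linear function and (ii) the replacement of $f$ by its binary approximation $\tilde f$ with $|\tilde f-f|\le\delta$ contributes roughly $2\delta$ (in the rescaled-by-$1/B$ units), so the hypothesis of Lemma~\ref{lemma:genericJordan} is met for $h:=(\tilde F)/B$ (up to adjusting constants — this is the routine bookkeeping I will not grind through).

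Second, I would spell out how a single invocation of Lemma~\ref{lemma:genericJordan} is realized given only the value oracle $U$ of \eqref{eq:unitaryEval}: apply $U$ to get $\ket{x}\ket{\tilde f(x_0+rx)}$, apply a diagonal phase gate on the value register implementing $\ket{t}\mapsto e^{2\pi i 2^m h(x)}\ket{t}$ (this is a product of single-qubit phase rotations since $\tilde f$ is stored in fixed-point binary with $\bigO{\log(B/\delta)}$ bits), then apply $U^\dagger$ to uncompute the value register. This yields exactly the phase oracle $\mathrm{O}\colon\ket{x}\mapsto e^{2\pi i 2^m h(x)}\ket{x}$ using two calls to $U,U^\dagger$ and $\bigO{\log(B/\delta)}$ phase gates. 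Then Lemma~\ref{lemma:genericJordan} outputs $v$ with $\Pr[|v_i-g'_i|>2^{2-m}]\le 1/3$ per coordinate; undoing the rescaling (multiply by $B$ and divide by $r$) gives an estimate of $g$ with per-coordinate error $>\frac{B\cdot 2^{2-m}}{r}\le\frac{B}{r}\cdot\frac{4\cdot 28\pi\delta}{B}=\frac{112\pi\delta}{r}$ with probability $\le 1/3$, which is within the target $\frac{8\cdot 42\pi\delta}{r}=\frac{336\pi\delta}{r}$ (the slack absorbs the constant-factor sloppiness above).

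Third comes the success-probability amplification from $1/3$ to $\rho$. The standard move is to run the whole procedure $\Theta(\log(n/\rho))$ times independently and take the coordinatewise median of the resulting vectors; by a Chernoff bound, for each fixed $i$ the median is within $\frac{8\cdot42\pi\delta}{r}$ of $g_i$ except with probability $\le\rho/n$, and a union bound over the $n$ coordinates gives overall failure probability $\le\rho$. This costs $\bigO{\log(n/\rho)}$ queries to $U$ and $U^\dagger$. For the gate complexity I would account for: the $\bigO{\log(n/\rho)}$ repetitions, each using $n$-coordinate QFTs over $G_m^n$ (costing $\bigO{nk\log k}$-type gates with $k=\bigO{\log(B/\delta)}$ per coordinate via the standard approximate-QFT count), plus the $\bigO{n}$ Hadamards for state preparation and the phase-gate construction, plus an $\bigO{\log\log(n/\rho)}$ overhead from the median computation and an $\bigO{\log\log(B/\delta)}$ factor from arithmetic on the binary representations; multiplying these gives the claimed $\bigO{n\log(n/\rho)\log(B/\delta)\log\log(n/\rho)\log\log(B/\delta)}$.

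The main obstacle — or rather the only genuinely delicate point — is the error accounting in the first step: one must check that the \emph{sum} of the three error sources (the $99.9\%$-linear-approximation error $\delta$ in \eqref{eq:apxLinJordan}, the function-approximation error $|\tilde f-f|\le\delta$, and the rounding of $\tilde f$ into $\bigO{\log(B/\delta)}$ fixed-point bits) stays below the window $2^{-m}/(42\pi)$ after the $1/B$ rescaling, and simultaneously that the ``$99.9\%$ of points'' hypothesis is preserved (it is, since $|\tilde f-f|\le\delta$ holds for \emph{all} grid points, so the bad set does not grow). Getting the constant in $m=\lceil\log_2(B/(28\pi\delta))\rceil$ to line up with the $42\pi$ in Lemma~\ref{lemma:genericJordan} and the final $8\cdot42\pi$ in the conclusion is exactly this constant-chasing; everything else is a direct quotation of Lemma~\ref{lemma:genericJordan} plus textbook amplification. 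Since a full version appears as~\cite[Theorem~21]{gilyen2017OptQOptAlgGrad}, I would present this as a proof sketch, stating the rescaling and the phase-oracle construction in detail and referring to that reference for the remaining constant-level verification.
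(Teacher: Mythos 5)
Your proposal follows essentially the same route as the paper's proof: convert the value oracle into a phase oracle via $U$, a fixed-point controlled-phase on the output register, and $U^\dagger$; rescale so that Lemma~\ref{lemma:genericJordan} applies (the paper divides by exactly $3B$, which makes the error threshold $\frac{2\delta}{3B}=\frac{1}{42\pi M}$ and the final constant $8\cdot 42\pi$ come out exactly, with the power-of-two issue handled only for $M$ via a WLOG on $B$); then amplify the per-coordinate $2/3$ success probability by $\bigO{\log(n/\rho)}$ repetitions and a coordinatewise median, citing \cite[Theorem~21]{gilyen2017OptQOptAlgGrad} for the gate count. The only difference is that your constant bookkeeping (scaling by $B$ or a power of two rather than $3B$) would not quite reproduce the stated constant, but this is exactly the routine constant-chasing you flag and defer to the reference.
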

\begin{proof}
	As described above the corollary, we first implement a phase oracle for $\tilde f$ and then we apply Jordan's gradient computation algorithm (Lemma~\ref{lemma:genericJordan}). 
	
	With a single query to $U$ and its inverse we can implement a phase oracle $\mathrm{O}$ that acts as $\mathrm{O}:\ket{x}\mapsto e^{2\pi i \frac{M}{3B} \tilde{f}(x_0+r x)}\ket{x}$, where $M:=\frac{3B}{84\pi\delta}$, and\footnote{We can assume without loss of generality that the upper bound $B$ is such that $M$ is a power of two.} $m:=\log_2(M)$. 
	Let $h(x):=\frac{\tilde{f}(x_0+r x)}{3B}$, then by~\eqref{eq:apxLinJordan} 99.9\% of the points $x\in G_m^n$ satisfy
	$\big|h(x)-\ipc{\frac{r}{3B}g}{x}-\frac{c}{3B}\big|\leq \frac{2\delta}{3B}=\frac{1}{42\pi M}$. Since $\nrm{\frac{r}{3B}g}_\infty\leq \frac{1}{3}$, by Lemma~\ref{lemma:genericJordan} we can compute a vector $v\in\R^n$ which is a coordinatewise $\frac{4}{M}$-approximator of $\frac{r}{3B}g$: for each $i\in[n]$ we have $\left|g_i - \frac{3B}{r}v_i\right|\leq \frac{12B}{r M}=\frac{8\cdot 42 \pi \delta}{r}$ with probability at least $\frac{2}{3}$.
	
	Note that the above success probability is per coordinate of~$g$. However, repeating the whole procedure $\mathcal{O}\big(\!\log(\frac{n}{\rho})\big)$ times and taking the median of the resulting vectors coordinatewise gives a gradient approximator $\tilde{g}$ with the desired approximation quality with probability at least $1-\rho$. For the proof of the gate complexity we refer\footnote{The correspondence with the parametrization of~\cite[Theorem~21]{gilyen2017OptQOptAlgGrad} is $\eps\leftrightarrow\frac{8\cdot 42\pi \delta}{r}$, $M\leftrightarrow \frac{B}{r}$.} to \cite[Theorem~21]{gilyen2017OptQOptAlgGrad} where the complexity of Jordan's algorithm is analyzed in detail.
\end{proof}

\paragraph{Remark.}
With essentially the same approach, the above corollary of Jordan's quantum gradient computation algorithm can also be proven in the setting where our access to an approximation of $f$ is not given by a standard quantum oracle but by a \emph{relational} quantum oracle, see Appendix~\ref{sec:app} for both the definition of this type of approximation to $f$ and a proof of this corollary. 

In terms of applications, we want to point out that if the membership oracle used in Section~\ref{sec:sep} comes from a deterministic algorithm, then we get a standard quantum oracle. Only when the membership oracle itself is relational (for example, when it is itself computed by a bounded-error quantum algorithm) do we need the more general setting of Appendix~\ref{sec:app}. 

\bigskip

\noindent
We would like to apply the above corollary to compute gradients of a convex Lipschitz function. To that end, the function needs to be sufficiently close to a linear function on a small region. Fortunately convex Lipschitz functions have this property. 
The following two lemmas ensure that Equation~\eqref{eq:apxLinJordan} holds.

\begin{lemma}\label{lemma:flat}
	Let $S\subseteq\R^n$ be such that $S=-S$, and let $\mathrm{conv}(S)$ denote the convex hull of~$S$. If $f: \mathrm{conv}(S)\to \R$ is a convex function, $f(0)=0$, and $\left|f(s)\right|\leq \delta$ for all $s\in S$, then 
	$$
	|f(s')|\leq \delta \text{ for all } s'\in\mathrm{conv}(S).
	$$
\end{lemma}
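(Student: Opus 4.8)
The plan is to prove both directions of the bound separately: an upper bound $f(s') \leq \delta$ on all of $\mathrm{conv}(S)$, and a lower bound $f(s') \geq -\delta$. The upper bound is immediate from convexity: any $s' \in \mathrm{conv}(S)$ can be written as a convex combination $s' = \sum_k \lambda_k s_k$ with $s_k \in S$, so Jensen's inequality gives $f(s') \leq \sum_k \lambda_k f(s_k) \leq \sum_k \lambda_k \delta = \delta$.

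For the lower bound I would use the symmetry hypothesis $S = -S$ together with the fact that $f(0) = 0$. Given $s' \in \mathrm{conv}(S)$, the point $-s'$ is also in $\mathrm{conv}(S)$ (since negating a convex combination of points of $S$ gives a convex combination of points of $-S = S$). Now $0 = \tfrac{1}{2} s' + \tfrac{1}{2}(-s')$, so by convexity $0 = f(0) \leq \tfrac{1}{2} f(s') + \tfrac{1}{2} f(-s')$, which rearranges to $f(s') \geq -f(-s')$. Applying the already-established upper bound to $-s' \in \mathrm{conv}(S)$ gives $f(-s') \leq \delta$, hence $f(s') \geq -\delta$. Combining the two bounds yields $|f(s')| \leq \delta$.

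There is no real obstacle here; the only mild subtlety is making sure that $-s' \in \mathrm{conv}(S)$, which follows directly from $S = -S$ and linearity of negation on convex combinations. I would state that observation explicitly as it is the one place the symmetry assumption is used.

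\begin{proof}
	Let $s'\in\mathrm{conv}(S)$, so that $s'=\sum_{k}\lambda_k s_k$ for some $s_k\in S$ and $\lambda_k\geq 0$ with $\sum_k\lambda_k=1$. By convexity of $f$,
	\[
	f(s')=f\Big(\sum_k\lambda_k s_k\Big)\leq\sum_k\lambda_k f(s_k)\leq\sum_k\lambda_k\delta=\delta.
	\]
	Since this holds for every element of $\mathrm{conv}(S)$, it remains to prove the matching lower bound $f(s')\geq-\delta$.

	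Because $S=-S$, the point $-s'=\sum_k\lambda_k(-s_k)$ is a convex combination of the points $-s_k\in -S=S$, hence $-s'\in\mathrm{conv}(S)$ as well. Now $0=\frac{1}{2}s'+\frac{1}{2}(-s')$, so using $f(0)=0$ and convexity of $f$,
	\[
	0=f(0)\leq\frac{1}{2}f(s')+\frac{1}{2}f(-s'),
	\]
	which rearranges to $f(s')\geq -f(-s')$. Applying the upper bound from the first paragraph to $-s'\in\mathrm{conv}(S)$ gives $f(-s')\leq\delta$, and therefore $f(s')\geq-\delta$. Combining the two bounds yields $|f(s')|\leq\delta$ for all $s'\in\mathrm{conv}(S)$.
\end{proof}
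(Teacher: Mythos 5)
Your proof is correct and follows exactly the same route as the paper's: the upper bound $f(s')\leq\delta$ from convexity, and the lower bound via $f(s')\geq -f(-s')\geq -\delta$ using $f(0)=0$ and $S=-S$. The paper merely states these two steps more tersely; your write-up fills in the same details.
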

\begin{proof}
	Since $f$ is convex and $f(s)\leq \delta$ for all $s\in S$ we immediately get that $f(s')\leq \delta$ for all $s'\in\mathrm{conv}(S)$. Because $f(0)=0$ and $S=-S$, due to convexity we get that $f(s')\geq -f(-s')\geq -\delta$. 
\end{proof}

\begin{lemma}\label{lemma:almostLin}
	If $r_2>0$, $z\in\R^n$ and $f:B_1(z,r_2)\rightarrow \R$ is convex, then
	$$ \sup_{y\in B_1(0,r_2)}\left|f(z+y)-f(z)-\ipc{y}{\nabla^{(r_2)}f(z)}\right|\leq \frac{r_2^2 \Delta^{\!\!(r_2)}f(z)}{2}.$$
\end{lemma}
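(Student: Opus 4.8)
The plan is to apply Lemma~\ref{lemma:flat} to the convex ``error function''
\[
g(y) := f(z+y) - f(z) - \ipc{y}{\nabla^{(r_2)}f(z)},
\]
which is defined on $B_1(0,r_2)$, is convex (being $f$ precomposed with a translation, minus an affine function), and satisfies $g(0)=0$. With the right choice of finite set $S$, the sup over the $\ell_1$-ball will be reduced to a finite computation at the vertices of that ball.

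First I would take $S := \{\pm r_2 e_i : i\in[n]\}$, so that $S=-S$ and $\mathrm{conv}(S)=B_1(0,r_2)$; then Lemma~\ref{lemma:flat} (applied with $g$ in place of $f$) reduces the claim to bounding $|g(s)|$ for $s\in S$. Next I would compute $g(\pm r_2 e_i)$ directly: substituting the definition $\nabla_i^{(r_2)}f(z)=\frac{f(z+r_2e_i)-f(z-r_2e_i)}{2r_2}$ gives, for both signs,
\[
g(\pm r_2 e_i) = \frac{f(z+r_2e_i) - 2f(z) + f(z-r_2e_i)}{2} = \frac{r_2^2\,\Delta_i^{\!\!(r_2)}f(z)}{2}.
\]
Since $f$ is convex we have $\Delta_i^{\!\!(r_2)}f(z)\geq 0$ for every $i$, hence $0 \leq g(\pm r_2 e_i) \leq \sum_{j=1}^n \frac{r_2^2\,\Delta_j^{\!\!(r_2)}f(z)}{2} = \frac{r_2^2\,\Delta^{\!\!(r_2)}f(z)}{2}=:\delta$, so $|g(s)|\leq \delta$ for all $s\in S$. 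Finally, Lemma~\ref{lemma:flat} yields $|g(s')|\leq \delta$ for all $s'\in \mathrm{conv}(S)=B_1(0,r_2)$, which is precisely the asserted bound.

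I do not expect a genuine obstacle here: the argument is a clean instantiation of Lemma~\ref{lemma:flat}. The only point that needs a little care is that the vertex values $g(\pm r_2 e_i)$ are \emph{nonnegative} (by convexity of $f$), since this is exactly what lets us bound each one by the full sum $\delta=\frac{r_2^2\,\Delta^{\!\!(r_2)}f(z)}{2}$ rather than only by $n$ times the largest term; together with the identity $\mathrm{conv}(\{\pm r_2 e_i\}) = B_1(0,r_2)$ this gives the clean constant $\tfrac12$.
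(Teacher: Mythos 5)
Your proposal is correct and follows essentially the same route as the paper's proof: the same error function (called $d$ there), the same choice $S=\{\pm r_2 e_i\}$ with $\mathrm{conv}(S)=B_1(0,r_2)$, the same vertex computation giving $r_2^2\Delta_i^{\!\!(r_2)}f(z)/2$, and the same appeal to Lemma~\ref{lemma:flat}. The nonnegativity of $\Delta_i^{\!\!(r_2)}f(z)$ that you highlight is exactly what the paper uses (it is noted right after the definition of the finite-difference Laplacian), so there is no difference in substance.
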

\begin{proof}
	Let $d(y):=f(z+y)-f(z)-\ipc{y}{\nabla^{(r_2)} f(z)}$ be the difference between $f(z+y)$ and its linear approximator. Let $S:=\left\{\pm r_2 e_i:i\in [n]\right\}$.
	It is easy to see that $d(0)=0$, $S=-S$, and $\mathrm{conv}(S)=B_1(0,r_2)$. Also, for all $s\in S$ we have $\left|d(s)\right|\leq r_2^2\Delta^{\!\!(r_2)}f(z)/2$:
	\begin{align*}
	d(\pm r_2 e_i)&= f(z\pm r_2 e_i)-f(z)-\ipc{\pm r_2 e_i}{\nabla^{(r_2)} f(z)}\\
	&= f(z\pm r_2 e_i)-f(z)\mp r_2 \nabla_i^{(r_2)} f(z)\\
	&= f(z\pm r_2 e_i)-f(z)\mp \frac{f(z+r_2 e_i)-f(z-r_2 e_i)}{2}\\
	&= \frac{f(z+r_2 e_i)-2f(z)+f(z-r_2 e_i)}{2}\\
	&= r_2^2 \Delta_i^{\!\!(r_2)}f(z)/2 
	\quad \leq r_2^2 \Delta^{\!\!(r_2)}f(z)/2.
	\end{align*}
	Therefore Lemma~\ref{lemma:flat} implies that $\sup_{y\in B_1(0,r_2)}|d(y)|\leq r_2^2 \Delta^{\!\!(r_2)}f(z)/2$.
\end{proof}

We can now state the main result of this section, the quantum analogue of Lemma~\ref{lemma:hyperplane}.

\begin{lemma} \label{lemma:quantHyperplane}
	Let $r_1>0$, $L>0$, $\rho\in(0,1/3]$, and suppose $\delta\in(0,r_1 n L/\rho]$. Suppose $f:\domf  \to \R$ is a convex function that is $L$-Lipschitz on $B_\infty(0,2r_1)$, and we have quantum query access\footnote{Using Corollary~\ref{cor:generalCorollaryJordan} instead of Corollary~\ref{cor:corollaryJordan} shows that a relational quantum oracle also suffices as input.\label{foot:relationalQuery}} to $\tilde{f}$, which is a $\delta$-approximate version of $f$, via a unitary $U$ over a (fine-enough) hypergrid of $B_\infty(0,2r_1)$.
	Then we can compute a $\tilde{g}\in\R^n$ using $\bigO{\log(n/\rho)}$ queries to $U$ and $U^\dagger$, such that with probability $\geq 1-\rho$, we have
	\[
	f(y) \geq  f(0) + \ipc{ \tilde{g}}{y} - 23^2 \sqrt{\frac{\delta n^3 L}{\rho r_1}}\nrm{y}_1- 2L\sqrt{n}r_1 \qquad \text{for all } y \in\domf
	\]
  and hence (by Cauchy-Schwarz)
  	\[
	f(y) \geq  f(0) + \ipc{ \tilde{g}}{y} - (23n)^2 \sqrt{\frac{\delta  L}{\rho r_1}}\nrm{y}- 2L\sqrt{n}r_1 \qquad \text{for all } y \in\domf.
	\]

\end{lemma}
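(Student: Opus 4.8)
The plan is to run the classical argument of Lemma~\ref{lemma:hyperplane} at a randomly chosen base point $x_0$, but to extract the finite-difference gradient there by a single call to Jordan's algorithm in the form of Corollary~\ref{cor:corollaryJordan} instead of by $2n$ explicit evaluations. Concretely, I would set $r_2:=\sqrt{\delta\rho r_1/(nL)}$ (which satisfies $r_2\leq r_1$, using the hypothesis $\delta\leq r_1 nL/\rho$), and run Jordan's algorithm on a hypergrid of side $r:=2r_2/n$ centred at $x_0$, with bound parameter $B:=2r_2L/n$ (note $\nrm{\nabla^{(r_2)}f(x_0)}_\infty\leq L=B/r$, and $B$ may be rounded up as the corollary permits) and accuracy parameter equal to the given $\delta$. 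The oracle $U$ for the $\delta$-approximation $\tilde f$ over the fine-enough hypergrid of $B_\infty(0,2r_1)$ plays the role of the unitary oracle in Corollary~\ref{cor:corollaryJordan}.

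First I would sample $x_0$ uniformly from a sufficiently fine hypergrid over $B_\infty(0,r_1)$ (with $r_1,r_2$ integer multiples of the spacing, so that the discrete version of Lemma~\ref{lemma:averageLaplaceBound} applies); then $\E\,\Delta^{\!\!(r_2)}f(x_0)\leq nL/r_1$, so Markov's inequality gives $\Delta^{\!\!(r_2)}f(x_0)\leq D:=2nL/(\rho r_1)$ except with probability $\rho/2$, and we condition on this event. Every point of the grid $x_0+rG_m^n$ lies in $x_0+B_1(0,r_2)$, since $\nrm{rx}_1\leq nr/2=r_2$ for $x\in G_m^n$; on that $\ell_1$-ball Lemma~\ref{lemma:almostLin} applies and yields, for \emph{every} such grid point, $\left|f(x_0+rx)-f(x_0)-\ipc{\nabla^{(r_2)}f(x_0)}{rx}\right|\leq r_2^2 D/2=\delta$, where the last equality is the defining relation for $r_2$. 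Hence hypothesis~\eqref{eq:apxLinJordan} of Corollary~\ref{cor:corollaryJordan} holds with $g:=\nabla^{(r_2)}f(x_0)$, $c:=f(x_0)$ and accuracy $\delta$, and the corollary, run with failure probability $\rho/2$, returns in $\bigO{\log(n/\rho)}$ queries to $U$ and $U^\dagger$ a vector $\tilde g$ with $\nrm{\tilde g-\nabla^{(r_2)}f(x_0)}_\infty\leq 8\cdot 42\pi\delta/r=168\pi\sqrt{\delta n^3L/(\rho r_1)}$.

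To finish, Lemma~\ref{lemma:expectation} shows every $g^*\in\underline{\partial}f(x_0)$ satisfies $\nrm{g^*-\nabla^{(r_2)}f(x_0)}_1\leq r_2D/2=\sqrt{\delta nL/(\rho r_1)}$, so by the triangle inequality and $\nrm{\cdot}_\infty\leq\nrm{\cdot}_1$ we get $\nrm{g^*-\tilde g}_\infty\leq(1+168\pi)\sqrt{\delta n^3L/(\rho r_1)}\leq 23^2\sqrt{\delta n^3L/(\rho r_1)}$ for every such $g^*$. Fixing any $g^*\in\underline{\partial}f(x_0)$ and any $g_0\in\underline{\partial}f(0)$, for $y\in\domf$ I would start from $f(y)\geq f(x_0)+\ipc{g^*}{y-x_0}=f(x_0)+\ipc{\tilde g}{y}+\ipc{g^*-\tilde g}{y}-\ipc{g^*}{x_0}$ and then bound $f(x_0)\geq f(0)+\ipc{g_0}{x_0}\geq f(0)-L\sqrt n r_1$ (using $\nrm{g_0}\leq L$ via~\eqref{eq:subgradBound} and $\nrm{x_0}\leq\sqrt n r_1$), $-\ipc{g^*}{x_0}\geq-L\sqrt n r_1$ (again using $\nrm{g^*}\leq L$ and $\nrm{x_0}\leq\sqrt n r_1$), and $\ipc{g^*-\tilde g}{y}\geq-\nrm{g^*-\tilde g}_\infty\nrm{y}_1$; summing gives the first displayed inequality, and the second follows from $\nrm{y}_1\leq\sqrt n\,\nrm{y}$. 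The overall failure probability is at most $\rho/2+\rho/2=\rho$.

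The main obstacle is the simultaneous tuning of the coupled scales $r_1$ (sampling radius / Laplacian bound), $r_2$ and $r=2r_2/n$ (finite-difference and grid scales), and $\delta$. One needs (i) the entire Jordan hypergrid to lie in $B_1(x_0,r_2)$, where Lemma~\ref{lemma:almostLin} is available — this is exactly what forces $r$ down to $\Theta(r_2/n)$ and is the $\ell_1$-versus-$\ell_\infty$ price paid by the quantum algorithm; (ii) the linearisation error $r_2^2\Delta^{\!\!(r_2)}f(x_0)/2\leq r_2^2 D/2$ to be at most the oracle error $\delta$, which, made tight, pins down $r_2=\sqrt{\delta\rho r_1/(nL)}$ (and the inequality $r_2\leq r_1$ is then precisely the hypothesis $\delta\leq r_1 nL/\rho$); and then one checks that the resulting dominant error, the Jordan error $\approx\delta/r=\delta n/(2r_2)=\Theta(\sqrt{\delta n^3L/(\rho r_1)})$, absorbs the lower-order subgradient deviation $\approx r_2 D=\Theta(\sqrt{\delta nL/(\rho r_1)})$ coming from Lemma~\ref{lemma:expectation}; together this gives the $n^{3/2}$ inside the square root, a factor $\approx n^{3/4}$ worse than the classical $n^{3/4}$ of Lemma~\ref{lemma:hyperplane}. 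Note that centring Jordan's grid at $0$ rather than at a random $x_0$ would fail, since $\Delta^{\!\!(r_2)}f(0)$ can be as large as $\Theta(nL/r_2)$ (as happens for $f=\nrm{\cdot}_1$). The remaining points are routine: the containments $B_1(x_0,r_2)\subseteq B_\infty(0,r_1+r_2)\subseteq B_\infty(0,2r_1)\subseteq\domf$, which guarantee that all finite differences and grid points stay in the domain, and the per-coordinate bound $\nrm{\nabla^{(r_2)}f(x_0)}_\infty\leq L$ that justifies the choice of $B$.
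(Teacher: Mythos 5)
Your proposal is correct and follows essentially the same route as the paper's proof: the same choice $r_2=\sqrt{\delta\rho r_1/(nL)}$, a random base point in $B_\infty(0,r_1)$ with Markov on Lemma~\ref{lemma:averageLaplaceBound}, Lemma~\ref{lemma:almostLin} on the $\ell_1$-ball to justify Corollary~\ref{cor:corollaryJordan} with grid scale $r=2r_2/n$ and $B=Lr$, then Lemma~\ref{lemma:expectation} plus the triangle inequality and the same final subgradient chain. The only (immaterial) difference is that you bound $f(x_0)-f(0)$ via a subgradient at $0$ rather than directly via the Lipschitz property.
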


\begin{proof}
    Let $r_2 := \sqrt{\frac{\delta r_1 \rho}{nL}}$ and note that $r_2\leq r_1$.
	The quantum algorithm works roughly as follows. It first picks a uniformly\footnote{A discrete quantum computer strictly speaking cannot do this, but (as noted after Lemma~\ref{lemma:averageLaplaceBound}) a uniformly random point from a fine-enough hypergrid suffices.} random $z \in B_\infty(0,r_1)$.
	Then it uses Jordan's quantum algorithm to compute an approximate gradient at $z$ by approximately evaluating $f$ in superposition over a discrete hypergrid of $B_\infty(z,r_2/n)$.
	This then yields an approximate subgradient of $f$ at~$0$.
	
	We now work out this rough idea.
	Since $B_\infty(z,r_2/n)\subseteq B_1(z,r_2)$, Lemma~\ref{lemma:almostLin} implies
	\begin{equation}\label{eq:functionFlat}
	\sup_{y\in B_\infty(0,r_2/n)}\left|f(z+y)-f(z)-\ipc{y}{\nabla^{(r_2)}f(z)}\right|\leq \frac{r_2^2 \Delta^{\!\!(r_2)}f(z)}{2}.
	\end{equation}
	Also as shown by Lemma~\ref{lemma:averageLaplaceBound} and Markov's inequality we have 
	\begin{equation}\label{eq:laplaceSmall}
	\Delta^{\!\!(r_2)}f(z)\leq \frac{2nL}{\rho r_1}
	\end{equation}
	with probability $\geq 1-\rho/2$ over the choice of~$z$.
	If $z$ is such that Equation~\eqref{eq:laplaceSmall} holds, then we get
	$$ \sup_{y\in B_\infty(0,r_2/n)}\left|f(z+y)-f(z)-\ipc{y}{\nabla^{(r_2)}f(z)}\right|\leq  \frac{nLr_2^2}{\rho r_1} = \delta.$$
	Now apply the quantum algorithm
	of Corollary~\ref{cor:corollaryJordan} with $r = 2r_2/n$, $c =  f(z)$, $g = \nabla^{(r_2)}f(z)$, and $B=Lr$. This uses $\bigO{\log(n/\rho)}$ queries to $U$ and $U^\dagger$, and with probability $\geq 1-\rho/2$ computes an approximate gradient $\tilde{g}$ such that 
	\begin{equation}\label{eq:gradWellApx}
	\nrm{\nabla^{(r_2)}f(z)-\tilde{g}}_\infty
	\leq\frac{8 \cdot 42 \pi n}{2 r_2} \cdot \delta
	=4\cdot 42\cdot \pi \sqrt{\frac{ \delta n^3 L}{\rho r_1}}.
	\end{equation}
	Also, if $z$ is such that Equation~\eqref{eq:laplaceSmall} holds, then by Lemma~\ref{lemma:expectation} we get that
	\begin{equation*}
	\sup_{g\in\underline{\partial} f(z)}\nrm{\nabla^{(r_2)}f(z)-g}_1\leq \frac{r_2\Delta^{\!\!(r_2)}f(z)}{2}\leq \frac{nLr_2}{\rho r_1} = \sqrt{\frac{\delta n L}{\rho r_1}},
	\end{equation*}
	and therefore by the triangle inequality and Equation~\eqref{eq:gradWellApx} we get that 
	\begin{align*}\label{eq:gradWellApx2}
	\sup_{g\in\underline{\partial} f(z)}\nrm{g-\tilde{g}}_\infty &\leq \sup_{g\in\underline{\partial} f(z)}\nrm{g-\nabla^{(r_2)}f(z)}_\infty + \nrm{\nabla^{(r_2)}f(z) - \tilde{g}}_\infty  \\
	&\leq \sup_{g\in\underline{\partial} f(z)}\nrm{g-\nabla^{(r_2)}f(z)}_1 + \nrm{\nabla^{(r_2)}f(z) - \tilde{g}}_\infty  \\
	&\leq \sqrt{\frac{\delta n L}{\rho r_1}} + 4\cdot 42\cdot \pi \sqrt{\frac{\delta n^3 L}{\rho r_1}} 
	\quad < 23^2 \sqrt{\frac{\delta n^3 L}{\rho r_1}}. 
	\end{align*}
	Thus with probability at least $1-\rho$, for all $y\in \domf$ and for all $g\in\underline{\partial} f(z)$ we have that
	\begin{align*}
	f(y) &\geq f(z) + \ipc{g}{y-z} \\
	&=f(0) + \ipc{ \tilde{g}}{y} + \ipc{g-\tilde{g}}{y} + (f(z)-f(0)) + \langle g,-z\rangle   \\
	&\geq f(0) + \ipc{ \tilde{g}}{y} - |\ipc{g-\tilde{g}}{y}| - L\nrm{z} -  \nrm{g} \nrm{z} \\
	&\geq f(0) + \ipc{ \tilde{g}}{y} - \nrm{g-\tilde{g}}_\infty\nrm{y}_1 - L\sqrt{n}r_1  - L\sqrt{n}r_1 \tag{by \eqref{eq:subgradBound}}\\
	&\geq f(0) + \ipc{ \tilde{g}}{y} -23^2 \sqrt{\frac{\delta n^3 L}{\rho r_1}}\nrm{y}_1- 2L\sqrt{n}r_1\\
	& \geq  f(0) + \ipc{ \tilde{g}}{y} - (23n)^2 \sqrt{\frac{\delta  L}{\rho r_1}}\nrm{y}- 2L\sqrt{n}r_1. \qedhere
	\end{align*}
\end{proof}

\section{Algorithms for separation using membership queries}
\label{sec:sep}

Let $K \subseteq \R^n$ be a convex set such that $B(0,r) \subseteq K \subseteq B(0,R)$. Given a membership oracle\footnote{For simplicity we assume throughout this section that the membership oracle succeeds with certainty (i.e., its error probability is~$0$). This is easy to justify: suppose we have a classical $T$-query algorithm, which uses $\mathrm{MEM}_{\epsilon,0}(K)$ queries and succeeds with probability at least $1-\rho$. If we are given access to a $\mathrm{MEM}_{\epsilon,\frac{1}{3}}(K)$ oracle instead, then we can create a $\mathrm{MEM}_{\epsilon,\frac{\rho}{T}}(K)$ oracle by $\bigO{\log(T/\rho)}$ queries to $\mathrm{MEM}_{\epsilon,\frac{1}{3}}(K)$ and taking the majority of the answers. Then running the original algorithm with $\mathrm{MEM}_{\epsilon,\frac{\rho}{T}}(K)$ will fail with probability at most $2\rho$. Therefore the assumption of a membership oracle with error probability~0 can be removed at the expense of only a small logarithmic overhead in the number of queries. A similar argument works for the quantum case.} $\mathrm{MEM}_{\epsilon,0}(K)$ as in Definition~\ref{def:mem}, we will construct a separation oracle $\mathrm{SEP}_{\eta,\rho}(K)$ as in Definition~\ref{def:sep}. Let $x$ be the point we want to separate from~$K$. We first make a membership query to $x$ itself, receiving answer $x\in B(K,\eps)$ or $x \not \in B(K,-\eps)$. Suppose $x \not \in B(K,-\eps)$, then we need to find a hyperplane that approximately separates $x$ from~$K$. Due to the rotational symmetry of the separation problem, for ease of notation we assume that $x=-\nrm{x}e_n$.\footnote{For the query complexity this is without loss of generality, since we can always apply a rotation to all the points such that this holds. If we instead consider the computational cost of our algorithm, then we have to take into account the cost of this rotation and its inverse. Note, however, that this rotation can always be written as the product of $n$ rotations on only $2$ coordinates, and hence can be applied in $\bOt{n}$ additional steps. These rotations can also be found in $\bOt{n}$ time via a greedy algorithm: first find a rotation on coordinates $n$ and $n-1$ that leaves coordinate $n-1$ zero, then similarly for coordinates $n-2$ and $n$, and so on.} We define $h:\R^{n-1}\rightarrow \R\cup\{\infty\}$ as
$$
h(y):=\inf_{(y,y_n)\in K} y_n,
$$
see also Figure~\ref{fig:defh}. Note that $h$ implicitly depends on $x$, since we have rotated the space such that $x = -\nrm{x}e_n$.
\begin{figure}[ht]
\centering
\includegraphics[trim=40 50 100 70,clip,width=.5\textwidth]{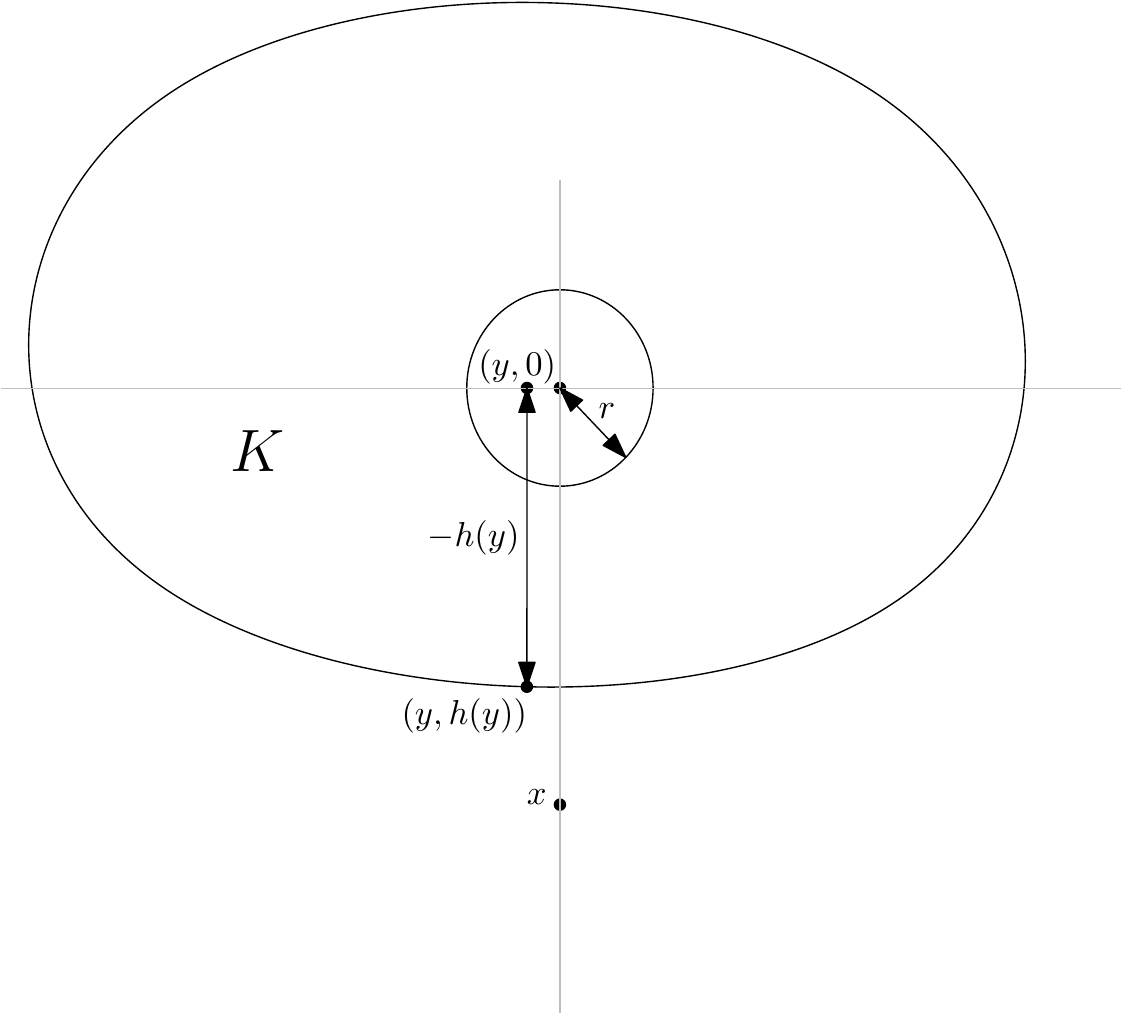}
\caption{Graphical example of the relation between $h(y)$ and the distance from $(y,0)$ to the border in the $-e_n$ direction.}
\label{fig:defh}
\end{figure}

Our $h$ is a bit different from the one used in~\cite{lee2017ConvexOptWMemb}, but we can show that it has many of the same properties.
Since $K$ is a convex set, $h$ is a convex function over $\R^{n-1}$. 
As we show below, the function $h$ is also Lipschitz (Lemma~\ref{lemma:hlipschitz}) and we can approximately compute its value using binary search with $\bOt{1}$ classical queries to a membership oracle (Lemma~\ref{lemma:approxEval}). 
Furthermore, an approximate subgradient of $h$ at~$0$ allows to construct a hyperplane approximately separating $x$ from~$K$ (Lemma~\ref{lemma:subToSep}). Combined with the results of Section~\ref{sec:approxsubgradient} this leads to the main results of this section, Theorems~\ref{thm:classsep} and~\ref{mainthrm:precise}, which show how to efficiently construct a separation oracle using respectively classical and quantum queries to a membership oracle.

Analogously to \cite[Lemma~12]{lee2017ConvexOptWMemb} we first show that our $h$ is Lipschitz.
\begin{lemma} \label{lemma:hlipschitz}
	For every $\delta\in(0,r)$, $h$ is $\frac{R}{r-\delta}$-Lipschitz on $B(0,\delta)\subseteq\R^{n-1}$, that is, we have 
	\[
	|h(y') - h(y)| \leq \frac{R}{r-\delta} \nrm{y'-y} \quad \text{for all } y,y' \in B(0,\delta).
	\] 
\end{lemma}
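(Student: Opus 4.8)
The plan is to exploit two facts: that $h$ is convex (which was just observed, since $K$ is convex) and that $h$ is two-sidedly bounded on the ball $B(0,r)\subseteq\R^{n-1}$, together with the standard principle that a bounded convex function is Lipschitz on a smaller concentric ball — being careful to squeeze out exactly the constant $R/(r-\delta)$. First I would establish the a priori bounds: since the Euclidean ball $B(0,r)\subseteq\R^{n}$ is contained in $K$, for every $z\in\R^{n-1}$ with $\nrm{z}\le r$ the point $(z,0)$ lies in $K$, whence $h(z)\le 0$; and since $K\subseteq B(0,R)$, every $(z,y_n)\in K$ satisfies $|y_n|\le R$, so $h(z)\ge -R$. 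Thus $h$ is finite on the closed ball $B(0,r)\subseteq\R^{n-1}$ and takes values in $[-R,0]$ there.

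Next, I would take arbitrary distinct $y,y'\in B(0,\delta)$, set $u:=(y'-y)/\nrm{y'-y}$, and extend the segment past $y'$ in direction $u$ to the point $w:=y'+(r-\delta)u$. Then $\nrm{w}\le\nrm{y'}+(r-\delta)\le \delta+(r-\delta)=r$, so $w$ still lies in $B(0,r)$ and the bounds of the previous paragraph apply at both $w$ and $y$. A direct computation shows $y'=(1-\lambda)y+\lambda w$ with $\lambda:=\nrm{y'-y}/(\nrm{y'-y}+r-\delta)\in(0,1)$, so convexity of $h$ gives $h(y')-h(y)\le \lambda\bigl(h(w)-h(y)\bigr)\le \lambda\bigl(0-(-R)\bigr)=\lambda R\le \frac{R}{r-\delta}\nrm{y'-y}$. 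Exchanging the roles of $y$ and $y'$ yields the reverse inequality, and combining the two proves the lemma.

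I do not expect any real obstacle here: the only point that needs care is choosing the extension length of the segment to be precisely $r-\delta$, so that $w$ remains inside $B(0,r)$ where the uniform two-sided bound $-R\le h\le 0$ holds; this is exactly what produces the claimed Lipschitz constant $\frac{R}{r-\delta}$ rather than a worse one.
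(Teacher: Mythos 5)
Your proposal is correct and follows essentially the same route as the paper's proof: establish $-R\le h\le 0$ on $B(0,r)$, extend the segment from $y$ through $y'$ by length exactly $r-\delta$ to a point still in $B(0,r)$, and apply convexity of $h$ to that convex combination, with symmetry giving the reverse inequality. No gaps.
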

\begin{proof}
  Observe that for all $y\in B(0,r)$ we have $-R\leq h(y)\leq 0$, because $B(0,r) \subseteq K \subseteq B(0,R)$. Let $y,y'\in B(0,\delta)$ be arbitrary but distinct points. Due to symmetry it will suffice to show that $h(y')-h(y)\leq \frac{R}{r-\delta} \nrm{y'-y}$.

  We will restrict our attention to the line through $y$ and $y'$, i.e., the line given by $y+ \lambda z$ for $z:= \frac{y'-y}{\nrm{y'-y}}$. Define the point
  \[
    p:= y + \left(\nrm{y'-y} + \left(r-\delta\right)\right)z = y' + \left(r-\delta\right)z
  \]
  on this line and note that $p\in B(0,r)$. Since $y'$ lies between $y$ and $p$ on the line it is a convex combination of these two points. In particular, since $\nrm{p-y'}=r-\delta$, it is the convex combination
  \[
    y'  = \frac{\nrm{y'-y}}{\nrm{y'-y}+(r-\delta)} p + \frac{r-\delta}{\nrm{y'-y} + (r-\delta)} y.
  \]
  Due to convexity we have
  \[
    h(y') \leq \frac{\nrm{y'-y}}{\nrm{y'-y}+(r-\delta)} h(p) + \frac{r-\delta}{\nrm{y'-y} + (r-\delta)} h(y),
  \]
  which implies
  \[
    h(y') -h(y) \leq \frac{\nrm{y'-y}}{\nrm{y'-y}+(r-\delta)} \left(h(p) - h(y)\right) \leq \frac{\nrm{y'-y}}{r-\delta}R.\qedhere
  \]
\end{proof}
\anote{We could improve the above result to $\min(R,2\nrm{x})/(r-\delta)$.}

Now we show how to compute the value of $h$ using membership queries to $K$.

\begin{lemma}\label{lemma:approxEval}
	For all $y\in B\big(0,\frac{r}{2}\big)\subset\R^{n-1}$ we can compute a $\delta$-approximation of $h(y)$ with $\bigO{\log\left(\frac{R}{\delta}\right)}$ queries to a $\mathrm{MEM}_{\eps,0}(K)$ oracle, where $\eps \leq \frac{r}{3R} \delta$.
\end{lemma}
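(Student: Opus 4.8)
The strategy is a straightforward binary search on $y_n$, where each step of the search is resolved by a single membership query. Fix $y \in B(0,r/2) \subseteq \R^{n-1}$. We know a priori that $h(y) \in [-R, 0]$: the upper bound holds since $(y,0)$ lies in $B(0,r/2) \subseteq B(0,r) \subseteq K$ (using $\nrm{y} \leq r/2 < r$), so $h(y) \leq 0$; the lower bound holds since $K \subseteq B(0,R)$ forces $y_n \geq -R$ for any $(y,y_n) \in K$. So we run binary search for a value $t$ on the initial interval $[-R, 0]$, maintaining an interval $[\ell, u]$ known to contain $h(y)$ up to the additive slack induced by the weak membership oracle. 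At each step we query $\mathrm{MEM}_{\eps,0}(K)$ on the point $(y, t)$ with $t$ the current midpoint: if the oracle reports $(y,t) \in B(K,\eps)$ we move $u \leftarrow t$ (roughly, "$h(y)$ is at or below $t$"), and if it reports $(y,t) \notin B(K,-\eps)$ we move $\ell \leftarrow t$. After $\bigO{\log(R/\delta)}$ steps the interval has length $\bigO{\delta}$ and we output (say) its midpoint.

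The key quantitative point — and the part that needs care — is translating the membership oracle's \emph{geometric} $\eps$-slack into an additive error in the \emph{value} $h(y)$, which is where the hypothesis $\eps \leq \frac{r}{3R}\delta$ and the restriction $\nrm{y}\leq r/2$ enter. Concretely I need two implications. First: if $(y,t) \in B(K,\eps)$ then $h(y) \leq t + c\eps$ for an appropriate constant-factor blow-up $c = c(r,R)$. Second: if $(y,t) \notin B(K,-\eps)$ then $h(y) \geq t - c\eps$. Both follow from a "thick cone" argument using the inner ball $B(0,r) \subseteq K$: since $(y, h(y)) \in \overline{K}$ and $B(0,r)\subseteq K$, the convex hull of $B(0,r)$ and a point of $\overline K$ slightly above $(y,h(y))$ contains a small ball around $(y, h(y) + s)$ for $s$ proportional to the "height budget"; quantitatively, because $y$ is well inside the inner ball ($\nrm{y}\le r/2$), a point at height $t$ above $y$ that is within Euclidean distance $\eps$ of $K$ must have $h(y)\le t+\bigO{R\eps/r}$, and symmetrically a point at height $t$ with $B((y,t),\eps)\not\subseteq K$ must have $h(y)\ge t-\bigO{R\eps/r}$. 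Choosing $\eps\le \frac{r}{3R}\delta$ makes $cR\eps/r\le \delta/2$ (say), so the oracle's answers are reliable enough that the search interval genuinely brackets $h(y)$ up to additive $\bigO{\delta}$.

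The main obstacle is precisely this geometric-to-value conversion: one must argue that near $y=0$ (and more generally for $\nrm{y}\le r/2$) the boundary of $K$ in the $e_n$-direction is not too steep, so that an $\eps$-perturbation in Euclidean distance changes the relevant height by only $\bigO{R\eps/r}$. This is essentially the same Lipschitz-type estimate as in Lemma~\ref{lemma:hlipschitz} (indeed with $\delta = r/2$ there, $h$ is $\frac{2R}{r}$-Lipschitz on $B(0,r/2)$), reused here to control how the $\eps$-fattening of $K$ distorts level sets of $h$; I would invoke that Lipschitz bound together with the containment $B(0,r)\subseteq K\subseteq B(0,R)$ rather than re-deriving the cone geometry from scratch. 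Everything else — the $\bigO{\log(R/\delta)}$ iteration count from halving an interval of length $R$ down to length $\bigO{\delta}$, and the bookkeeping of which branch the oracle's two possible answers trigger — is routine.
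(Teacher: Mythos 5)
Your plan is correct and structurally it is the same proof as in the paper: binary search over the height, where all the real content is in the two one-sided implications that convert the membership oracle's geometric $\eps$-slack into an additive $\bigO{R\eps/r}$ slack in the value of $h(y)$; your two displayed implications are exactly the paper's claims (a) and (b). The differences are only in how those implications are established, and they are worth noting. For the direction ``$(y,t)\in B(K,\eps)\Rightarrow h(y)\le t+\bigO{R\eps/r}$'' the paper does not evaluate $h$ at a nearby point: it takes $g\in\underline{\partial}h(y)$, notes $\nrm{g}\le 2R/r$ via Lemma~\ref{lemma:hlipschitz}, and uses the supporting hyperplane with normal $(-g,1)$, which yields the slack $\eps(\nrm{g}+1)\le\frac{3R}{r}\eps\le\delta$ directly at $y$. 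Your Lipschitz-at-the-nearby-point argument also works, but the witness $(y',t')\in K$ only satisfies $\nrm{y'}\le r/2+\eps$, so you must apply Lemma~\ref{lemma:hlipschitz} on a slightly larger ball with a slightly worse constant; as a result you get the lemma with $\eps\le\frac{r}{cR}\delta$ for some $c$ a bit larger than $3$ (equivalently a constant-factor larger error under $\eps\le\frac{r}{3R}\delta$), which is harmless downstream but means the stated constant requires either the paper's subgradient trick or more careful bookkeeping. For the direction ``$(y,t)\notin B(K,-\eps)\Rightarrow h(y)\ge t-\bigO{R\eps/r}$'', be aware that Lipschitzness of $h$ alone is not quite sufficient, since $h$ only describes the lower boundary of $K$: to conclude $(y,t)\in B(K,-\eps)$ when $t$ exceeds $h(y)$ by enough, you need either an additional convexity step (nearby points lie on a segment between a boundary point of $K$ and an interior point such as $(y',0)$, using that the queried heights are nonpositive) or the explicit cone computation with the hull of $B((y,0),r/2)$ and $(y,h(y))$, which is what the paper does and which you also sketch; either route gives the required $\bigO{R\eps/r}$ bound, so this is a caveat about which tool closes the step, not a gap.
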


\begin{proof}
	Let $y\in B(0,\frac{r}{2})$, then $(y,h(y))$ is a boundary point of $K$ by the definition of~$h$. 
	Note that $h(y)\in[-R,-r/2]$. Our goal is to perform binary search over this interval to find a good approximation of~$h(y)$. If we had access to a perfect membership oracle, then this would be straightforward. However, since our membership oracle can give back a wrong answer when queried with a point that is $\eps$-close to the boundary of $K$, a more careful analysis is needed.

  Suppose
	$y_n \leq -\frac{r}{2}$ is our current guess for $h(y)$.  We first show that
	\begin{enumerate}[label=(\emph{\alph*})]
		\item\label{it:YES} if $(y,y_n) \in B(K,\eps)$,
		then $y_n\geq h(y)- \delta$, and
		\item\label{it:NO} if $(y,y_n) \not \in B(K,-\eps)$,
		then $y_n\leq h(y)+ \frac{2}{3}\delta$.
	\end{enumerate}
	
	For the proof of~\ref{it:YES} consider a $g\in \underline{\partial}h(y)$. Since $g$ is a subgradient we have that $h(z)\geq h(y)+\ipc{g}{z-y}$ for all $z\in\R^{n-1}$. Hence, for all $z \in \R^{n-1}$ and $z_n$ such that $(z,z_n) \in K$ we have 
	\[
	\ipc{\begin{pmatrix}-g \\ 1\end{pmatrix}}{\begin{pmatrix} y \\ h(y) \end{pmatrix}}\leq \ipc{\begin{pmatrix}-g \\ 1\end{pmatrix}}{\begin{pmatrix}z \\ h(z) \end{pmatrix}} \leq \ipc{\begin{pmatrix}-g \\ 1\end{pmatrix}}{\begin{pmatrix}z \\ z_n\end{pmatrix}}
	\]
	where the first inequality is a rewriting of the subgradient inequality and the second inequality uses that $z_n \geq h(z)$ since $(z,z_n) \in K$. 
	Since $(y,y_n)\in B(K,\eps)$ it follows from the above inequality that
	\[ 
	\ipc{\begin{pmatrix}-g \\ 1\end{pmatrix}}{\begin{pmatrix} y \\ y_n \end{pmatrix}}\geq\ipc{\begin{pmatrix}-g \\ 1\end{pmatrix}}{\begin{pmatrix} y \\ h(y) \end{pmatrix}}-\eps \nrm{\begin{pmatrix}-g \\ 1\end{pmatrix}} \geq \ipc{\begin{pmatrix}-g \\ 1\end{pmatrix}}{\begin{pmatrix} y \\ h(y) \end{pmatrix}}-\eps (\nrm{g}+1).
	\] 
	Lemma~\ref{lemma:hlipschitz} together with the argument of Equation~(\ref{eq:subgradBound}) implies that $\nrm{g}\leq \frac{2R}{r}$. 
	Since  
	$$
	\eps (\nrm{g}+1)\leq \eps \left(\frac{2R}{r}+1\right)\leq \eps \frac{3R}{r}\leq \delta,
	$$ 
	we obtain the inequality of~\ref{it:YES}.
	
	For~\ref{it:NO}, consider the convex set $C$ which is the convex hull of $B((y,0),r/2)$ and $(y,h(y))$.
	Note that $B(C,-\eps)$ is the convex hull of $B((y,0),r/2-\eps)$ and $\big(y,h(y)\left(1-\frac{2\eps}{r}\right)\big)$. Since $C\subseteq K$, we have $B(C,-\eps)\subseteq B(K,-\eps)$.
	Therefore $(y,y_n) \not \in B(K,-\eps)$ implies $(y,y_n)\notin B(C,-\eps)$, and
	\begin{equation*}
	y_n\leq h(y)\left(1-\frac{2\eps}{r}\right)=h(y)-\eps\frac{2 h(y)}{r}\leq h(y)+\eps\frac{2 R}{r}\leq h(y)+\frac{2}{3}\delta.
	\end{equation*}	
	
	Now we can analyze the binary search algorithm. By making $\bigO{\log\left(\frac{R}{\delta}\right)}$ $\mathrm{MEM}_{\eps,0}(K)$ queries to points of the form $(y,z)$, we can find a value $y_n\in[-R,-\frac{r}{2}]$ such that $(y, y_n) \in B(K,\eps)$ but $(y,y_n - \frac{\delta}{3}) \not \in B(K,-\eps)$.
	By \ref{it:YES}-\ref{it:NO} we get that $|h(y)-y_n|\leq \delta$.
\end{proof}

The following lemma shows how to convert an approximate subgradient of $h$ to a hyperplane that approximately separates $x$ from~$K$.

\begin{lemma}\label{lemma:subToSep}
	Suppose $-\nrm{x}e_n=x\notin B(K,-\eps)$, and $\tilde{g}\in\R^{n-1}$ is an approximate subgradient of $h$ at $0$, meaning that for some $a,b\in \R$ and for all $y \in \R^{n-1}$
	\[
	h(y) \geq h(0) + \langle \tilde g, y\rangle - a \nrm{y} - b,
	\]
	then $s:=\frac{(-\tilde{g},1)}{\nrm{(-\tilde{g},1)}}$ 
	satisfies $\ipc{s}{z}\geq \ipc{s}{x}-\frac{aR+b}{\nrm{(-\tilde{g},1)}} -\frac{2R}{r}\frac{\eps}{\nrm{(-\tilde{g},1)}}$ for all $z\in K$.
\end{lemma}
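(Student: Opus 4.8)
The plan is to exploit the geometric meaning of $h$: the graph of $h$ (as a subset of $\R^n$) sits inside $K$ in the relevant region, and $h(0) = \inf_{(0,t)\in K} t$ measures exactly how far one must travel in the $-e_n$ direction from the origin to reach the boundary of $K$. First I would record the elementary fact that since $B(0,r)\subseteq K$ we have $h(0) \le -r < 0$, and since $x = -\nrm{x}e_n \notin B(K,-\eps)$ while $(0,-r)\in B(0,r)\subseteq K$, the point $x$ lies (essentially) below the graph of $h$ over $0$; more precisely I will show $\nrm{x} \ge -h(0) - O(\eps)$ or an equivalent one-sided bound, which is what lets me compare $\ipc{s}{x}$ with the hyperplane defined by $\tilde g$. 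The unit vector is $s = (-\tilde g, 1)/\nrm{(-\tilde g,1)}$, so for a point $z = (z', z_n)\in\R^n$ we have $\ipc{s}{z}\nrm{(-\tilde g,1)} = z_n - \ipc{\tilde g}{z'}$. Writing $N := \nrm{(-\tilde g,1)}$, the target inequality $\ipc{s}{z}\ge \ipc{s}{x} - \frac{aR+b}{N} - \frac{2R}{r}\frac{\eps}{N}$ is equivalent, after multiplying through by $N$, to
\[
z_n - \ipc{\tilde g}{z'} \;\ge\; -\nrm{x} - (aR+b) - \tfrac{2R}{r}\eps \qquad\text{for all } (z',z_n)\in K.
\]

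The second step is to bound the left-hand side from below using the subgradient-type hypothesis on $\tilde g$. For any $(z', z_n)\in K$ we have $z_n \ge h(z')$ by definition of $h$, and by hypothesis $h(z') \ge h(0) + \ipc{\tilde g}{z'} - a\nrm{z'} - b$. Hence
\[
z_n - \ipc{\tilde g}{z'} \;\ge\; h(0) - a\nrm{z'} - b.
\]
Now I use that $K\subseteq B(0,R)$, so $(z',z_n)\in K$ forces $\nrm{z'}\le R$, giving $z_n - \ipc{\tilde g}{z'} \ge h(0) - aR - b$. It remains to relate $h(0)$ to $-\nrm{x}$: this is where the $\frac{2R}{r}\eps$ term enters. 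Since $x = -\nrm{x}e_n\notin B(K,-\eps)$, but the segment from $(0,-r)$ to the boundary point $(0,h(0))$ lies in $K$, and $K$ contains the ball $B(0,r)$, a short argument (analogous to the one in the proof of Lemma~\ref{lemma:approxEval}, part~\ref{it:NO}, using the convex hull of $B(0,r)$ and the boundary point $(0,h(0))$) shows that every point $(0,t)$ with $t \ge h(0)(1 - \tfrac{2\eps}{r}) = h(0) + \eps\tfrac{2(-h(0))}{r} \ge h(0) + \eps\tfrac{2R}{r}$ lies in $B(K,-\eps)$; since $x = (0,-\nrm{x})\notin B(K,-\eps)$, we must have $-\nrm{x} < h(0) + \tfrac{2R}{r}\eps$, i.e. $h(0) > -\nrm{x} - \tfrac{2R}{r}\eps$. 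Combining, $z_n - \ipc{\tilde g}{z'} \ge -\nrm{x} - \tfrac{2R}{r}\eps - aR - b$, which is exactly the desired inequality; dividing by $N$ finishes the proof.

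The main obstacle I anticipate is the $\tfrac{2R}{r}\eps$ slack term: one has to argue carefully that $x$ being outside $B(K,-\eps)$ pins down $h(0)$ from below, and the only geometric leverage available is that $K$ sandwiches between $B(0,r)$ and $B(0,R)$. The cone-scaling estimate (shrinking the convex hull of $B(0,r)$ with the apex $(0,h(0))$ by $\eps$) is the delicate point, since one must check that $B(C,-\eps)$ really does contain the claimed segment — this is precisely the computation already carried out in Lemma~\ref{lemma:approxEval} with $y=0$ and $r/2$ replaced by $r$, so I would either cite that argument or reproduce it in one line. Everything else is bookkeeping: expanding the inner product with $s$, using $z_n\ge h(z')$, the subgradient hypothesis, and $\nrm{z'}\le R$.
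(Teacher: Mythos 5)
Your proposal is correct and takes essentially the same route as the paper's proof: expand $\ipc{(-\tilde g,1)}{z}=z_n-\ipc{\tilde g}{y}$, lower-bound it via $z_n\geq h(y)$, the approximate-subgradient inequality, and $\nrm{y}\leq R$, and then control $h(0)$ versus $-\nrm{x}$ by the cone-shrinking argument of claim~(b) in the proof of Lemma~\ref{lemma:approxEval} applied at $y=0$, which is exactly what the paper does. (Only a cosmetic slip: in your chain the step $h(0)+\eps\tfrac{2(-h(0))}{r}\geq h(0)+\eps\tfrac{2R}{r}$ should read ``$\leq$'' since $-h(0)\leq R$, and the containment claim should be restricted to the segment $t\in[h(0)(1-\tfrac{2\eps}{r}),0]$, but this is the direction you actually use and the conclusion $-\nrm{x}\leq h(0)+\tfrac{2R}{r}\eps$ stands.)
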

\begin{proof}
	Let us introduce the notation $z=(y,z_n)$ and $s':=(-\tilde{g},1)=\nrm{(-\tilde{g},1)}s$, then
	\begin{align*}
    \ipc{s'}{z} &=z_n-\ipc{\tilde{g}}{y}\\
	&\geq h(y)-\ipc{\tilde{g}}{y}\\
	&\geq h(0) - a \nrm{y} - b\\
	&\geq -\nrm{x} -\frac{2R}{r}\eps -aR -b\\
	&= \ipc{s'}{x} -aR -b  -\frac{2R}{r}\eps,
	\end{align*}
	where the last inequality used claim~\ref{it:NO} from the proof of Lemma~\ref{lemma:approxEval} with the point $(0,-\nrm{x})$ and $\delta = \frac{3 R}{r} \eps$.
\end{proof}

\noindent
We now construct a separation oracle using $\widetilde{\mathcal O}(n)$ classical queries to a membership oracle.
In particular, to construct an $\eta$-precise separation oracle, we require an $\eps$-precise membership oracle with
\[
\eps = \frac{\eta}{676} n^{-2}\left(\frac{r}{R}\right)^{\!3}\left(\frac{\eta}{R}\right)^{\!2}\rho
\]
The analogous result in \cite[Theorem~14]{lee2017ConvexOptWMemb} uses the stronger assumption\footnote{It seems that Lee et al.~\cite[Algorithm~1]{lee2017ConvexOptWMemb} did not take into account the change in precision analogous to our Lemma~\ref{lemma:approxEval}, therefore one would probably need to worsen their exponent of $\frac{r}{R}$ from~6 to~7.}
\[
\eps \approx \frac{\eta}{8\cdot 10^6}  n^{-\frac{7}{2}}\left(\frac{r}{R}\right)^{\!\!6}\left(\frac{\eta}{R}\right)^{\!\!2}\rho^3.
\]
Compared to this, our result scales better in terms of $n,\frac r R$ and $\rho$.

\begin{theorem} \label{thm:classsep}
	Let $K$ be a convex set satisfying $B(0,r) \subseteq K \subseteq B(0,R)$. For any $\eta\in(0,R]$ and $\rho\in(0,1/3]$, we can implement the oracle $\mathrm{SEP}_{\eta, \rho}(K)$ using $\bigO{n \log\left(\frac{n}{\rho}\frac{R}{\eta}\frac{R}{r}\right)}$ classical queries to a $\mathrm{MEM}_{\epsilon,0}(K)$ oracle, where $\eps\leq\eta (26n)^{-2}\left(\frac{r}{R}\right)^{\!3}\left(\frac{\eta}{R}\right)^{\!2}\rho$.
\end{theorem}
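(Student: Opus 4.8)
The plan is to assemble the theorem from the three ingredients already developed in this section. First I would query $\mathrm{MEM}_{\eps,0}(K)$ on the input point $x$ itself; if it answers $x\in B(K,\eps)$ we are done (this is the first case of $\mathrm{SEP}_{\eta,\rho}$), so assume the answer is $x\notin B(K,-\eps)$. After rotating coordinates so that $x=-\nrm{x}e_n$ (at a cost of $\bOt n$ gates, as remarked in the footnote), we work with the convex function $h:\R^{n-1}\to\R\cup\{\infty\}$, $h(y)=\inf_{(y,y_n)\in K}y_n$. By Lemma~\ref{lemma:hlipschitz} with $\delta=r/2$, $h$ is $\frac{2R}{r}$-Lipschitz on $B(0,r/2)\subseteq\R^{n-1}$, so with $r_1 := r/4$ it is $L$-Lipschitz on $B_\infty(0,2r_1)=B_\infty(0,r/2)\subseteq B(0,r/2)$ with $L=\frac{2R}{r}$.

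Next I would invoke Lemma~\ref{lemma:hyperplane} to obtain an approximate subgradient of $h$ at~$0$. To apply it I need a function $\tilde h$ with $\nrm{\tilde h - h}_\infty\le\delta_0$ on $B_\infty(0,2r_1)$, and by Lemma~\ref{lemma:approxEval} this is provided by binary search using $\bigO{\log(R/\delta_0)}$ queries to $\mathrm{MEM}_{\eps,0}(K)$ per evaluation point, provided $\eps\le\frac{r}{3R}\delta_0$; Lemma~\ref{lemma:hyperplane} calls $\tilde f$ on $2(n-1)$ points, so the total is $\bigO{n\log(R/\delta_0)}$ membership queries (with an extra $\bigO{\log(1/\rho)}$ factor if one wants to boost a bounded-error membership oracle, per the section's opening footnote). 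Lemma~\ref{lemma:hyperplane} then yields, with probability $\ge 1-\rho$, a vector $\tilde g=\nabla^{(r_2)}\tilde h(z)$ such that $h(y)\ge h(0)+\langle\tilde g,y\rangle - a\nrm y - b$ for all $y$, with $a=\frac{3n^{3/4}}{2}\sqrt{\delta_0 L/(\rho r_1)}$ and $b=2L\sqrt n\, r_1$.

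Finally I would feed this into Lemma~\ref{lemma:subToSep}: since $x=-\nrm x e_n\notin B(K,-\eps)$ and $\tilde g$ is an $(a,b)$-approximate subgradient of $h$ at~$0$, the unit vector $s=\frac{(-\tilde g,1)}{\nrm{(-\tilde g,1)}}$ satisfies $\langle s,z\rangle\ge\langle s,x\rangle - \frac{aR+b}{\nrm{(-\tilde g,1)}} - \frac{2R}{r}\frac{\eps}{\nrm{(-\tilde g,1)}}$ for all $z\in K$; since $\nrm{(-\tilde g,1)}\ge 1$ the total separation error is at most $aR+b+\frac{2R}{r}\eps$. The only remaining work is bookkeeping: substitute $L=\frac{2R}{r}$, $r_1=r/4$, bound each of the three error terms $aR$, $b$, and $\frac{2R}{r}\eps$ by $\eta/3$ (equivalently by a constant times $\eta$), and solve for the largest admissible $\delta_0$ and then $\eps$. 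The term $b=2L\sqrt n r_1 = L\sqrt n r/2 = R\sqrt n$ is problematic as stated — it does not shrink with $\delta_0$ — so the genuine point here (and the main obstacle) is that Lemma~\ref{lemma:hyperplane} must be applied not to the full ball but after \emph{rescaling}: replace $h(y)$ by $h(\lambda y)/\lambda$ for a suitably small $\lambda$, or equivalently choose $r_1$ proportional to $\eta/(R\sqrt n)$ rather than $r/4$, which forces the additive term $2L\sqrt n r_1$ down to $O(\eta)$ while keeping $r_1$ small enough that Lemma~\ref{lemma:hlipschitz} still applies. Tracking how this choice of $r_1$ propagates through $a$ (it appears as $r_1$ in the denominator, so shrinking $r_1$ hurts $a$, which must be compensated by shrinking $\delta_0$ further) is exactly what produces the stated dependence $\eps\le\eta(26n)^{-2}(r/R)^3(\eta/R)^2\rho$. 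I expect the query count $\bigO{n\log(\tfrac n\rho\tfrac R\eta\tfrac Rr)}$ to fall out immediately once $\log(R/\delta_0)=\bigO{\log(\tfrac n\rho\tfrac R\eta\tfrac Rr)}$ is checked from the final value of $\delta_0$.
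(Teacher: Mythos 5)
Your plan follows the paper's proof essentially step for step: the Lipschitz bound on $h$ (Lemma~\ref{lemma:hlipschitz}), evaluation of $h$ by binary search (Lemma~\ref{lemma:approxEval}), the classical subgradient computation (Lemma~\ref{lemma:hyperplane}), and the conversion to a separating hyperplane (Lemma~\ref{lemma:subToSep}), and your self-correction about shrinking $r_1$ is exactly the paper's choice — it sets $r_1=\frac{r}{12\sqrt{n}}\frac{\eta}{R}$ (note the extra factor of $r$ relative to your ``proportional to $\eta/(R\sqrt{n})$'', needed since $L=\frac{2R}{r}$) and $\delta=\eta\frac{n^{-2}}{9\cdot 24}\left(\frac{r}{R}\cdot\frac{\eta}{R}\right)^{2}\rho$, which is precisely the bookkeeping you describe. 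So the proposal is correct and takes essentially the same route as the paper.
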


\begin{proof}
	Let $x \not \in B(K,-\eps)$ be the point we want to separate from~$K$.
	Let $\delta:=\eta\frac{n^{-2}}{9\cdot 24} \left(\frac{r}{R}\cdot\frac{\eta}{R}\right)^{\!2}\rho$,\linebreak 
	then $\eps\leq\frac{r}{3R}\delta$.
	By Lemma~\ref{lemma:hlipschitz} we know that $h$ is $\frac{2R}{r}$-Lipschitz on $B(0,r/2)$. By Lemma~\ref{lemma:approxEval} we can evaluate $h$ to within error $\delta$ using $\bigO{\log\left(\frac{R}{\delta}\right)}$ queries to a $\mathrm{MEM}_{\eps,0}(K)$ oracle.
	Let us choose $r_1:=\frac{r}{12\sqrt{n}}\frac{\eta}{R}$, then $r_1\sqrt{n}\leq \frac{r}{4}$, therefore $B_\infty(0,2r_1)\subseteq B(0,r/2)$. 
	Also note that $\delta\leq\frac{\eta}{6\rho}= \frac{2r_1\sqrt{n}R}{\rho r}$. 
	Hence by Lemma~\ref{lemma:hyperplane}, using $\bigO{n \log\left(\frac{R}{\delta}\right)}$ queries to a $\mathrm{MEM}_{\eps,0}(K)$ oracle, we can compute an approximate subgradient $\tilde{g}$
	such that with probability at least $1-\rho$ we have
	\[
	h(y) \geq h(0) + \langle \tilde g, y\rangle - \frac{3n^{\frac{3}{4}}}{2} \sqrt{\frac{\delta2R}{\rho r_1r}} \nrm{y} - \frac{4R}{r} \sqrt{n}r_1 \qquad \text{for all } y\in\R^{n-1}.
	\]
	Substituting the value of $r_1$ and $\delta$ we get $h(y) \geq h(0) + \langle \tilde g, y\rangle - \frac{\eta}{2R} \nrm{y} - \frac{\eta}{3} $, which by Lemma~\ref{lemma:subToSep} gives an $s$ such that $\ipc{s}{z}\geq \ipc{s}{x}-\frac{5}{6}\eta-\frac{2R}{r}\eps\geq\ipc{s}{x}-\eta$ for all $z\in K$  \qedhere
\end{proof}

Finally, we give a proof of our main result: we construct a separation oracle using $\widetilde{\mathcal O}(1)$ quantum queries to a membership oracle. 

\begin{theorem} \label{mainthrm:precise}
	Let $K$ be a convex set satisfying $B(0,r) \subseteq K \subseteq B(0,R)$. For any $\eta\in(0,R]$ and $\rho\in(0,1/3]$, we can implement the oracle $\mathrm{SEP}_{\eta, \rho}(K)$ using $\bigO{\log\big(\frac{n}{\rho}\big) \log\left(\frac{n}{\rho}\frac{R}{\eta}\frac{R}{r}\right)}$ quantum queries to a $\mathrm{MEM}_{\epsilon,0}(K)$ oracle (and its inverse), where $\eps\leq\eta (58n)^{-\frac{9}{2}}\left(\frac{r}{R}\right)^{\!3}\left(\frac{\eta}{R}\right)^{\!2} \rho $.
\end{theorem}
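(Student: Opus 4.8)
The plan is to reprove Theorem~\ref{thm:classsep} almost verbatim, only replacing the classical approximate-subgradient routine of Lemma~\ref{lemma:hyperplane} by its quantum counterpart, Lemma~\ref{lemma:quantHyperplane}, which needs only $\bigO{\log(n/\rho)}$ function evaluations instead of $\bigO{n}$. Given the query point~$x$, I would first spend a single $\mathrm{MEM}_{\eps,0}(K)$ query on~$x$ itself: if it answers ``$x\in B(K,\eps)$'' then, since $\eps\leq\eta$, we have $x\in B(K,\eta)$ and we output that; if it answers ``$x\notin B(K,-\eps)$'' then, since $B(K,-\eta)\subseteq B(K,-\eps)$, we may correctly assert $x\notin B(K,-\eta)$, and it remains to produce a separating hyperplane. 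As in Theorem~\ref{thm:classsep}, rotate so that $x=-\nrm{x}e_n$ (free for query complexity; $\bOt{n}$ extra gates otherwise) and set $h(y):=\inf_{(y,y_n)\in K}y_n$ for $y\in\R^{n-1}$. This $h$ is convex, and by Lemma~\ref{lemma:hlipschitz} it is $\tfrac{2R}{r}$-Lipschitz on $B(0,r/2)$.

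Next I would upgrade the classical evaluation of Lemma~\ref{lemma:approxEval} to a unitary oracle. The binary search there is a \emph{deterministic} procedure that, using $\bigO{\log(R/\delta)}$ queries to $\mathrm{MEM}_{\eps,0}(K)$ with $\eps\leq\tfrac{r}{3R}\delta$, returns a $\delta$-approximation $\tilde h(y)$ of $h(y)$ for any $y$ on a fine-enough hypergrid of $B(0,r/2)$; making it reversible yields a standard unitary $U$ as in~\eqref{eq:unitaryEval}, at a cost of $\bigO{\log(R/\delta)}$ (inverse) membership queries per call. Now apply Lemma~\ref{lemma:quantHyperplane} to $h$ with $L=\tfrac{2R}{r}$, a radius $r_1$ and an approximation level $\delta$ to be fixed below, and failure probability~$\rho$, feeding it~$U$ (the side condition $\delta\leq r_1nL/\rho$ of Lemma~\ref{lemma:quantHyperplane} will be automatic from the smallness of~$\delta$). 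This produces, with $\bigO{\log(n/\rho)}$ queries to $U$ and $U^\dagger$ (hence $\bigO{\log(n/\rho)\log(R/\delta)}$ membership queries in total), a vector $\tilde g\in\R^{n-1}$ such that, with probability $\geq 1-\rho$,
\[
h(y)\geq h(0)+\ipc{\tilde g}{y}-(23n)^2\sqrt{\tfrac{\delta L}{\rho r_1}}\,\nrm{y}-2L\sqrt{n}\,r_1\qquad\text{for all }y\in\R^{n-1}
\]
(the inequality being vacuous where $h=\infty$). Once $\delta$ is chosen inverse-polynomial in $n,\tfrac1\rho,\tfrac{R}{\eta},\tfrac{R}{r}$, the query count collapses to $\bigO{\log(n/\rho)\log\!\big(\tfrac{n}{\rho}\tfrac{R}{\eta}\tfrac{R}{r}\big)}$, as claimed.

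Finally I would invoke Lemma~\ref{lemma:subToSep} with $a=(23n)^2\sqrt{\delta L/(\rho r_1)}$ and $b=2L\sqrt{n}\,r_1$: it turns $\tilde g$ into a unit vector $s=\tfrac{(-\tilde g,1)}{\nrm{(-\tilde g,1)}}$ with $\ipc{s}{z}\geq\ipc{s}{x}-\tfrac{aR+b}{\nrm{(-\tilde g,1)}}-\tfrac{2R}{r}\tfrac{\eps}{\nrm{(-\tilde g,1)}}$ for all $z\in K$, and since $\nrm{(-\tilde g,1)}\geq1$ and $B(K,-\eta)\subseteq K$, the unit vector $g:=-s$ is a valid output of $\mathrm{SEP}_{\eta,\rho}(K)$ provided $aR+b+\tfrac{2R}{r}\eps\leq\eta$. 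The only real work — and the main obstacle — is the precision bookkeeping needed to guarantee this. Mirroring Theorem~\ref{thm:classsep}, I would take $r_1:=\tfrac{r}{12\sqrt n}\tfrac{\eta}{R}$, so that $B_\infty(0,2r_1)\subseteq B(0,r/2)$ (legitimizing the Lipschitz bound) and $b=\tfrac{4R}{r}\sqrt n\,r_1=\tfrac{\eta}{3}$; it then remains to pick $\delta$ so small that $a\leq\tfrac{\eta}{2R}$. Substituting $L=\tfrac{2R}{r}$ and this $r_1$, the constraint on $a$ becomes $\delta=\Theta\!\big(\rho\eta^3r^2n^{-9/2}R^{-4}\big)$ up to an absolute constant, and the resulting membership precision $\eps\leq\tfrac{r}{3R}\delta$ fits under $\eta(58n)^{-9/2}(r/R)^3(\eta/R)^2\rho$, with enough slack left to also enforce $\tfrac{2R}{r}\eps\leq\tfrac{\eta}{6}$, so that $aR+b+\tfrac{2R}{r}\eps\leq\tfrac{\eta}{2}+\tfrac{\eta}{3}+\tfrac{\eta}{6}=\eta$ and $\ipc{s}{z}\geq\ipc{s}{x}-\eta$. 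The exponent of $n$ worsens from $-2$ (Theorem~\ref{thm:classsep}) to $-9/2$ here precisely because the error term of Lemma~\ref{lemma:quantHyperplane} scales worse in $n$ than that of Lemma~\ref{lemma:hyperplane} (a factor $(23n)^2$ versus $\tfrac32 n^{3/4}$, stemming from the finer hypergrid Jordan's algorithm needs and the $\ell_1\!\to\!\ell_2$ conversion), and the admissible $\eps$ depends quadratically on the reciprocal of this scaling.
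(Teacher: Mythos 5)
Your proposal is correct and follows essentially the same route as the paper's own proof: the same reduction via $h$, Lemmas~\ref{lemma:hlipschitz}, \ref{lemma:approxEval}, \ref{lemma:quantHyperplane}, and~\ref{lemma:subToSep}, the same choice $r_1=\frac{r}{12\sqrt n}\frac{\eta}{R}$ and the same order of magnitude for $\delta$, yielding the identical error split $\frac{\eta}{2}+\frac{\eta}{3}+\frac{2R}{r}\eps\leq\eta$ and query count $\bigO{\log(n/\rho)\log(R/\delta)}$. The only differences are presentational (you make explicit the initial membership query on $x$ and the reversible implementation of the binary-search evaluation as the unitary $U$, which the paper handles in the section preamble and a remark).
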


\begin{proof}
	Let $x \not \in B(K,-\eps)$ be the point that we want to separate from~$K$.
	Let us define $\delta:=\eta \frac{23^{-4}}{4\cdot 24}n^{\!-\frac{9}{2}}\left(\frac{r}{R}\cdot\frac{\eta}{R}\right)^{\!2} \rho$, then $\eps\leq\frac{r}{3R}\delta$.
	By Lemma~\ref{lemma:hlipschitz} we know that $h$ is $\frac{2R}{r}$-Lipschitz on $B(0,r/2)$. By Lemma~\ref{lemma:approxEval} we can evaluate $h$ to within error $\delta$ using $\bigO{\log\left(\frac{R}{\delta}\right)}$ queries to a $\mathrm{MEM}_{\eps,0}(K)$ oracle.
	Let us choose $r_1:=\frac{r}{12\sqrt{n}}\frac{\eta}{R}$, then $r_1\sqrt{n}\leq \frac{r}{4}$, therefore $B_\infty(0,2r_1)\subseteq B(0,r/2)$. 
	Also note that $\delta\leq\frac{\eta}{6\rho}= \frac{2r_1nR}{\rho r}$. 
	Hence by Lemma~\ref{lemma:quantHyperplane}, using $\bigO{\log\!\big(\frac{n}{\rho}\big)\!\log\left(\frac{R}{\delta}\right)\!}$ queries to a $\mathrm{MEM}_{\eps,0}(K)$ oracle,
	we can compute an approximate subgradient~$\tilde{g}$ such that with probability at least $1-\rho$ we have
	\[
	h(y) \geq h(0) + \langle \tilde g, y\rangle - (23n)^2\sqrt{\frac{2\delta R}{\rho r_1 r}}\nrm{y} - \frac{4R}{r} \sqrt{n}r_1 \qquad \text{for all } y\in\R^{n-1}.
	\]
	Substituting the value of $r_1$ and $\delta$ we get $h(y) \geq h(0) + \langle \tilde g, y\rangle - \frac{\eta}{2R} \nrm{y} - \frac{\eta}{3} $, which by Lemma~\ref{lemma:subToSep} gives an $s$ such that $\ipc{s}{z}\geq \ipc{s}{x}-\frac{5}{6}\eta-\frac{2R}{r}\eps\geq\ipc{s}{x}-\eta$ for all $z\in K$.  \qedhere
\end{proof}

\section{Lower bounds} \label{sec:lowerbounds}

For a convex set $K$ satisfying $B(0,r) \subseteq K \subseteq B(0,R)$, we have shown in Theorem~\ref{mainthrm:precise} that one can implement a $\mathrm{SEP}(K)$ oracle with $\bOt{1}$ quantum queries to a $\mathrm{MEM}(K)$ oracle if the membership oracle is sufficiently precise. In this section we first show that this is exponentially better than what can be achieved using classical access to a membership oracle. We also investigate how many queries to a membership/separation oracle are needed in order to implement an optimization oracle. Our results are as follows. 
\begin{itemize}
	\item We show that $\Omega(n)$ classical queries to a membership oracle are needed to implement a weak separation oracle.
	\item We show that $\Omega(n)$ classical (resp.~$\Omega(\sqrt{n})$ quantum) queries to a separation oracle are needed to implement a weak optimization oracle; even when we \emph{know an interior point} in the set.
	\item We show an $\Omega(n)$ lower bound on the number of classical and/or quantum queries to a separation oracle needed to optimize over the set when we \emph{do not know an interior point}.
\end{itemize}
In this section we will always assume that the input oracle is a strong oracle but the output oracle is allowed to be a weak oracle with error $\eps$. Furthermore, we will make sure that $R$, $1/r$, and $1/\eps$ are all upper bounded by a polynomial in~$n$. This guarantees that the lower bound is based on the dimension of the problem, not the required precision. 

\subsection{Classical lower bound on the number of MEM queries needed for SEP}

Here we show that a separation query can provide $\Omega(n)$ bits of information about the underlying convex set~$K$; since a classical membership query returns a 0 or a~1 and hence can give at most~1 bit of information,\footnote{This is not true for \emph{quantum} membership queries!} this theorem immediately implies a lower bound of $\Omega(n)$ on the number of classical membership queries needed to implement one separation query.

\begin{theorem} \label{thrm:lbinfo}
	Let $\epsilon \leq \frac{1}{48}$.
	There exist a set of $m = 2^{\Omega(n)}$ convex sets $K_1, \ldots, K_m$ and points $y,x_0 \in \R^n$ such that $B(x_0,1/3)\subseteq K_i \subseteq B(x_0,2\sqrt{n})$ for all $i\in[m]$, and such that the result of a classical query to $\mathrm{SEP}_{\eps,0}(K_i)$ with the point $y$ correctly identifies~$i$. 
\end{theorem}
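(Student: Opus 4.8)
The plan is to encode a hard Boolean problem into a family of convex sets so that a single separation query reveals many bits at once. The natural choice is to take $K_i$ to be a ball $B(x_0,R_0)$ (with $R_0$ of order $\sqrt n$) with a thin ``slab'' or ``cap'' shaved off, where the orientation of the removed cap encodes a sign vector $s_i\in\{\pm 1\}^n$. Concretely, for a unit vector $u\in\{\pm 1/\sqrt n\}^n$ set $K_u := B(x_0,R_0)\cap\{z : \langle u, z-x_0\rangle \le t\}$ for a suitable small threshold $t>0$; one then picks the query point $y$ to lie just outside every $K_u$, on the ``short'' side of all these caps (e.g. $y = x_0 + (R_0+c)\cdot(\text{something independent of }u)$ won't work directly, so instead one places $y$ at a common point like $x_0$ plus a small multiple of a fixed direction, arranged so that for each $u$ the only valid separating hyperplane has normal essentially equal to $u$). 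The key point: a separation oracle queried at such a $y\notin B(K_u,-\eps)$ must return a unit vector $g$ with $\langle g,z\rangle \le \langle g,y\rangle+\eps$ for all $z\in B(K_u,-\eps)$, and by the geometry this forces $g$ to be $\eps'$-close to $u$, hence $\mathrm{sign}(g)$ recovers the sign pattern of $u$, i.e.\ all $n$ bits, so $i$ is determined. Since there are $2^n$ sign patterns, we get $m=2^{\Omega(n)}$ sets.

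The execution would go as follows. First, fix $x_0$ (say the origin, or any convenient point) and $R_0 = 2\sqrt n$ minus a bit, so that the inner-ball condition $B(x_0,1/3)\subseteq K_u$ holds after removing a cap of depth $O(1)$; choosing $t$ so the cap removes only a sliver near the boundary keeps $B(x_0,1/3)$ safely inside. Second, identify the query point: take $y$ to be a point at distance slightly more than $\eps$ outside each $K_u$ but inside $B(x_0,R_0)$ — e.g.\ a point on the sphere-minus-cap interface, displaced outward by a controlled amount — such that $y\notin B(K_u,-\eps)$ for every $u$ simultaneously. The cleanest way is to pick $y$ very close to $x_0+R_0\cdot v$ for a direction $v$ that has positive inner product $\Theta(1/\sqrt n)$ with every sign vector $u$; then $y$ is cut off from $K_u$ precisely because it lies on the wrong side of the $u$-cap. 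Third, do the geometric lemma: show any unit $g$ with $\langle g,z\rangle\le\langle g,y\rangle+\eps$ for all $z\in B(K_u,-\eps)$ must satisfy $\nrm{g-u}\le 1/\sqrt n$ (or whatever suffices to read off all signs), using that $B(K_u,-\eps)$ still contains a large cap of the ball on the $-u$ side, so the only hyperplane that avoids all of it near $y$ points in the $u$ direction; here the $\eps\le 1/48$ bound is what gives enough slack. Fourth, conclude: $\mathrm{sign}(g_j)=\mathrm{sign}(u_j)$ for all $j$, and since $u\mapsto K_u$ is injective on sign patterns, the answer identifies $i$; counting gives $m=2^{\Omega(n)}$. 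Finally, note that since a classical $\mathrm{MEM}$ query outputs one bit, simulating one $\mathrm{SEP}$ query (which here conveys $\Omega(n)$ bits) needs $\Omega(n)$ membership queries — this is the corollary flagged in the text.

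I expect the main obstacle to be the geometric lemma in step three: making precise the claim that the separating hyperplane is essentially unique and forced to align with $u$. One must check (i) that $B(K_u,-\eps)$ is nonempty and still ``large'' — it contains $B(x_0,1/3)$ and most of the ball, so it is a fat spherical cap — and (ii) that $y$ is far enough outside (more than $\eps$) that a genuine separating direction exists, yet the only directions that work are within a small ball of $u$. The calculation hinges on comparing the width of the removed slab, the displacement of $y$, and $\eps$, and on the fact that in dimension $n$ two sign vectors differing in one coordinate have normalized inner product $1-2/n$, so aligning $g$ to within Euclidean distance $\sqrt{2/n}$ of the true $u$ is enough to read every sign correctly; one needs the cap threshold $t$ and the position of $y$ tuned (all as functions of $n$ only, with $R,1/r,1/\eps=\mathrm{poly}(n)$) so that a sloppier $g$ would violate the separation condition against some point of $B(K_u,-\eps)$. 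The rest — the inner/outer ball containments, the counting $m=2^{\Omega(n)}$, and the one-bit-per-query information argument — is routine.
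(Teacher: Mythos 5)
Your high-level instinct matches the paper's (a single separation query is forced to reveal a secret direction encoded in the set), but the concrete construction has two genuine gaps. First, the family $\{K_u\}$ indexed by \emph{all} sign vectors $u\in\{\pm 1/\sqrt n\}^n$ admits no common query point: for $y$ to force the oracle to output a separating hyperplane you need $y\notin B(K_u,\eps)$, i.e.\ roughly $\langle u,y-x_0\rangle > t+\eps$, \emph{simultaneously for every $u$ in the family} — impossible once the family contains antipodal (or even widely spread) directions, since $\langle u,y-x_0\rangle$ and $\langle -u,y-x_0\rangle$ cannot both exceed a positive threshold. Pushing $y$ far outside to compensate does not help: then a hyperplane separating $y$ from the common outer ball $B(x_0,R_0)\supseteq K_u$ is a valid answer for every $u$ and reveals nothing. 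The paper sidesteps this by taking all normals $h_i$ \emph{entrywise nonnegative} (a $2^{\Omega(n)}$ packing of unit vectors with pairwise inner product at most $0.51$), setting $\hat K_i=\{x:\langle h_i,x\rangle\le 0\}\cap B(0,\sqrt n)$, $K_i=B(\hat K_i,\eps)$, and querying the single point $y=3\eps e$, which lies just outside every $K_i$ at once.

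Second, your recovery mechanism (read off $\mathrm{sign}(g_j)$ coordinatewise) needs the returned normal to approximate $u$ to accuracy about $1/\sqrt n$ per coordinate, and a weak separation oracle with constant slack $\eps\le 1/48$ simply does not force that. The separation inequality with slack $\eps$ over a set of radius $\sqrt n$ only pins the normal down to constant Euclidean accuracy: in the paper's computation one gets $1-\langle g,h_i\rangle^2\le 4\eps$, i.e.\ $\|g-h_i\|$ can be a constant like $0.29$ at $\eps=1/48$. That is why the paper identifies $i$ via the packing property (constant-accuracy alignment distinguishes vectors with pairwise inner product $\le 0.51$) rather than via signs; your route would require $\eps=O(1/n)$, proving a strictly weaker statement than the theorem as stated (and your side remark that Euclidean distance $\sqrt{2/n}$ suffices for sign readout is also off — a single coordinate could then err by more than $1/\sqrt n$, so one needs an $\ell_\infty$ bound). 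To repair your proof you would have to both restrict the family so a common query point exists (e.g.\ biased sign patterns, or directly a nonnegative packing as in the paper) and replace sign readout by a packing-based identification; at that point you have essentially reconstructed the paper's argument.
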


\begin{proof}
	Let $h_1,\ldots,h_m \in \R^{n}$ be a set of $m = 2^{\Omega(n)}$ entrywise non-negative unit vectors such that $\ipc{h_i}{h_j}\leq 0.51$ for all distinct $i,j \in [m]$.\footnote{We can show that such a set of vectors exists as follows. Let $n=ck$ for sufficiently large constant $c$. Choose $m=2^k$ (which is $2^{\Omega(n)}$) uniformly random vectors $v_1,\ldots,v_m$ in $\01^n$. Note that the expected Hamming weight of one such vector is $n/2$, and the expected inner product between two vectors is $n/4$ (the inner product just counts for how many of the $n$ bit-positions both vectors have a~1). By a standard calculation (Chernoff bound plus a union bound), one can show that with high probability these $2^k$ vectors each have Hamming weight $\geq 0.495 n$, and the inner product between any two of them is $\leq 0.252 n$.
	Fix $2^k$ such vectors with these properties, and define $h_i:=v_i/\nrm{v_i}$. These are unit vectors with non-negative entries, and pairwise inner products $\ipc{h_i}{h_j}=\ipc{v_i}{v_j}/(\nrm{v_i}\nrm{v_j})\leq 0.252 n/(0.495 n)< 0.51$.}
	
	Now pick an $i \in [m]$ and define $\hat{K}_i:= \{x : \ipc{ h_i}{ x}  \leq 0\} \cap B(0,\sqrt{n})$ and $K_i := B(\hat K_i, \epsilon)$. Then $\hat K_i = B(K_i,-\epsilon)$.  Note that for $x_0 = -e / 3$ we have $B(x_0,1/3)\subseteq K_i \subseteq B(x_0,2\sqrt{n})$.
	We claim that a query to $\mathrm{SEP}_{\eps,0}(K_i)$ with the point $y=3\eps e\in\R^{n}$ will identify~$h_i$. First note that $y\not \in B(K_i,\epsilon)$, since $\hat{K}_i$ does not contain any entrywise positive vectors and $y$ has distance at least $3\eps$ from all vectors that have at least one non-positive entry.
	Hence a separation query with~$y$ must return a unit vector $g$ that describes a valid separating hyperplane for $K_i$. 
	
	On the other, if $g$ describes a valid separating hyperplane for $K_j$, then
	\begin{equation}
	\forall x \in \hat K_j\colon\ipc{g}{x} \leq \ipc{g}{y}+\eps \leq \nrm{g}\cdot\nrm{y}+\eps\leq (3\sqrt{n}+1)\eps\leq 4\sqrt{n}\eps. \label{eq:ipbound}
	\end{equation}
	Now consider the specific point~$x$ that is the projection of $g$ onto $h_j^{\bot}$ (the hyperplane orthogonal to $h_j$) scaled by a factor $\sqrt{n}$, i.e., $x=\sqrt{n}\left(g-\ipc{g}{h_j}h_j\right)$. Since $\ipc{h_j}{x}=0$ and $\nrm{x}\leq \sqrt{n}$, we have $x\in\hat{K}_j$. Choosing this $x$ in \eqref{eq:ipbound} gives the following inequality
	\[
	\sqrt{n} (1-\ipc{g}{h_j}^2)=\ipc{g}{x} \leq 4\sqrt{n}\eps.
	\]
	Hence \eqref{eq:ipbound} implies $|\ipc{g}{h_j}| \geq \sqrt{1-4\eps}\geq \sqrt{\frac{11}{12}}\geq \frac{19}{20}$. 
	
	Since \eqref{eq:ipbound} holds for $j=i$, it follows that at least one of the two vectors $g-h_i$ and $g+h_i$ has length at most $\sqrt{2(1-|\ipc{g}{h_i}|^2)}\leq \sqrt{8\epsilon}$; assume the former for simplicity. If \eqref{eq:ipbound} would also hold for $j\neq i$, then we would get a contradiction:
	$$
	\frac{19}{20}\leq|\ipc{g}{h_j}|\leq |\ipc{g-h_i}{h_j}|+|\ipc{h_i}{h_j}|\leq \sqrt{8\epsilon} + 0.51 < \frac{19}{20}.
	$$
	Hence $g$ uniquely identifies~$h_i$.
\end{proof}    

\subsection{Lower bound on number of SEP queries for OPT (given an interior point)}

We now consider lower bounding the number of quantum queries to a separation oracle needed to do optimization. In fact, we prove a lower bound on the number of separation queries needed for validity, which implies the same bound on optimization. We will use a reduction from a version\footnote{Note that this is a slightly different version from the one used in Section~\ref{sec:oracles}.} of the well-studied \emph{search} problem:

\medskip
\emph{Given $z \in \{0,1\}^n$ such that either $|z|=0$ or $|z|=1$, decide which of the two holds.}
\medskip

\noindent It is not hard to see that if the access to $z$ is given via classical queries, then $\Omega(n)$ queries are needed. It is well known~\cite{bennett1997QSearchLowerBound} that if we allow quantum queries, then $\Omega(\sqrt{n})$ queries are needed (i.e., Grover's quantum search algorithm~\cite{grover1996QSearch} is optimal). We use this problem to show that there exist convex sets for which it is hard to construct a weak validity oracle, given a strong separation oracle. Since a separation oracle can be used as a membership oracle, this gives the same hardness result for constructing a weak validity oracle from a strong membership oracle.  

\begin{theorem} 
	Let $0< \rho \leq 1/3$. Let $\mathcal A$ be an algorithm that implements a $\mathrm{VAL}_{(5n)^{-1},\rho}(K)$ oracle for every convex set $K$ (with $B(x_0,r) \subseteq K \subseteq B(x_0,R)$) using only queries to a $\mathrm{SEP}_{0,0}(K)$ oracle, and unitaries that are independent of~$K$. Then the following statements are true, even when we restrict to convex sets $K$ with $r = 1/3$ and $R = 2 \sqrt{n}$:
	\begin{itemize}
		\item if the queries to $\mathrm{SEP}_{0,0}(K)$ are classical, then the algorithm uses $\Omega(n)$ queries. 
		\item  if the queries to $\mathrm{SEP}_{0,0}(K)$ are quantum, then the algorithm uses $\Omega(\sqrt{n})$ queries. 
	\end{itemize}
\end{theorem}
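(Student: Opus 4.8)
The plan is to reduce from the search problem stated just above: given $z\in\{0,1\}^n$ with $|z|\in\{0,1\}$, decide which holds (which needs $\Omega(n)$ classical and $\Omega(\sqrt n)$ quantum queries to the bits of $z$). I would build a family of convex bodies $\{K_z\}$ indexed by the $z$ with $|z|\le 1$, all satisfying $B(x_0,1/3)\subseteq K_z\subseteq B(x_0,2\sqrt n)$, together with one fixed validity query $(c,\gamma)$ (with $c$ a fixed unit vector) such that the only correct $\mathrm{VAL}_{(5n)^{-1},\rho}(K_z)$-answer to $(c,\gamma)$ is the \emph{first} assertion when $z=0$ and the \emph{second} assertion when $z=e_b$, for every $b\in[n]$. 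Then running $\mathcal A$ on input $(c,\gamma)$, with its $\mathrm{SEP}_{0,0}(K_z)$-oracle simulated as below, and reading off which assertion it outputs, solves the search problem with success probability $\ge 1-\rho\ge 2/3$. If each $\mathrm{SEP}$-query can be answered using $O(1)$ queries to the bits of $z$, then $\mathcal A$ using $T$ separation queries yields an $O(T)$-query algorithm for the search problem, so $T=\Omega(n)$ in the classical case and $T=\Omega(\sqrt n)$ in the quantum case; the latter uses that the simulation of one separation query is a fixed unitary composed with $O(1)$ applications of the $z$-oracle and its inverse.

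For the construction I would take $K_0$ to be a fixed body with a flat cap orthogonal to $c$ at level $\gamma_0:=\max_{x\in K_0}\langle c,x\rangle$, and obtain $K_{e_b}$ by gluing onto $K_0$ a tiny bump of height $\delta:=1/n$ at a point $q_b$ of the cap, where the $q_b$ are chosen pairwise far apart. With $\gamma:=\gamma_0+\delta/2$ and $\delta>4\cdot(5n)^{-1}$, a short erosion/dilation computation shows that $\max_{x\in B(K_0,\varepsilon)}\langle c,x\rangle=\gamma_0+\varepsilon<\gamma-\varepsilon$ (so the second assertion fails for $K_0$, forcing the first), while a sufficiently thick bump gives $\max_{x\in B(K_{e_b},-\varepsilon)}\langle c,x\rangle\ge\gamma_0+\delta-\varepsilon>\gamma+\varepsilon$ (so the first assertion fails for $K_{e_b}$, forcing the second), which is exactly the required behaviour; here $\varepsilon=(5n)^{-1}$.

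The simulation of a single $\mathrm{SEP}_{0,0}(K_z)$-query at a point $y$ uses $O(1)$ queries to $z$: if $y\in K_0$ (a fixed set we can test without looking at $z$), answer that $y\in K_z$; if $y\notin\mathrm{conv}\big(\bigcup_{z':|z'|\le1}K_{z'}\big)$ (also a fixed set), answer that $y\notin K_z$ and return a hyperplane separating $y$ from this hull, which is valid for every $K_z$ simultaneously and reveals nothing; otherwise $y$ lies in a thin transition region near the cap, and by the spread of the bumps it is \emph{close to} only $O(1)$ of the points $q_b$, so we query those $O(1)$ bits of $z$. If one of them is $1$ we know $K_z=K_{e_b}$ exactly and answer the query directly; if all are $0$ then $y\notin K_z$ and we return one hyperplane that is valid for all of $K_0$ and all $K_{e_{b'}}$ with $b'$ not among the queried indices.

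The step I expect to be the main obstacle is designing $K_0$ and the bumps so that this last case really does cost only $O(1)$ queries to $z$ --- equivalently, so that no single separation query can \emph{bulk-test} whether $|z|=1$. Two requirements pull against each other: on the one hand the bumps must be spread far enough apart that (a) every query point is near at most $O(1)$ of them and (b) the convex hull $\mathrm{conv}(\bigcup_{z}K_z)$ does not \emph{fill in} a large region in which a query point fails to lie in any individual $K_z$ yet has no separating hyperplane valid for more than $O(1)$ of them; on the other hand each bump must still be tall enough in the $c$-direction to be visible to a $(5n)^{-1}$-precise validity oracle. Reconciling these --- and checking that the transition region is genuinely covered by the $O(1)$-local pieces, so that the simulator is never forced to reveal $z$ in order to answer a query --- is the technical heart of the proof; granting it, the $\Omega(n)$ and $\Omega(\sqrt n)$ bounds are immediate from the classical and quantum lower bounds for search.
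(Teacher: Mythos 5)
Your overall strategy---reduce from the promise search problem ($|z|\in\{0,1\}$), build a family $\{K_z\}$ with a single validity query that distinguishes $|z|=0$ from $|z|=1$, and simulate each $\mathrm{SEP}_{0,0}(K_z)$ query with $O(1)$ queries to the bits of $z$---is exactly the paper's strategy. But the step you yourself flag as ``the technical heart'' is not a detail: it is the entire content of the proof, and the specific flat-cap-plus-bumps construction you sketch does not survive it. Take a query point $y$ laterally at the centre of the cap, at height $\gamma_0+\delta/2$ in the $c$-direction, far from every $q_b$. Then $y$ lies in $\mathrm{conv}\bigl(\bigcup_b K_{e_b}\bigr)$ (it is below the convex hull of the bump tops, which all sit at height $\gamma_0+\delta$ spread around the centre) but in no single $K_z$, so the simulator must output a unit vector $g$ with $\langle g,x\rangle\leq\langle g,y\rangle$ for all $x\in K_z$ for the \emph{true} $z$. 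Validity for $K_0$ at the cap point directly beneath $y$ forces the $c$-component of $g$ to be nonnegative, and then validity for $K_{e_{b'}}$ forces $\langle g, p_{b'}-y\rangle\leq 0$ where $p_{b'}$ is the bump top, i.e.\ the lateral component of $g$ must have strictly nonpositive inner product with every direction $q_{b'}-y$ among the unqueried $b'$. Once those lateral directions positively span the cap (which happens as soon as $\Omega(1)$ fraction of the bumps remain unqueried and they surround the centre), no such $g$ exists. So at this one point your simulator cannot answer with $O(1)$ bit-queries: it would have to query $\Omega(n)$ bits or give a wrong answer, and the reduction collapses. Tuning bump heights or shapes does not obviously escape this, because the two requirements you identify (bumps visible to a $(5n)^{-1}$-precise validity oracle, versus the hull not filling in above the cap) are in genuine tension for any ``local bump'' construction.

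The paper's proof resolves exactly this difficulty with a different geometry: it takes $K_z=\prod_{i=1}^n[-1,z_i]$, so the ``bump'' for $z=e_i$ is a full coordinate slab rather than a localized cap perturbation. Then a separation query at $y$ is answered with a \emph{single} bit-query: if $y\in[-1,0]^n$ accept, if $y\notin[-1,1]^n$ separate from the big cube, and otherwise pick one coordinate $i$ with $y_i>0$ and query $z_i$. If $z_i=0$ the coordinate hyperplane $x_i\leq y_i$ is valid for $K_z$ no matter what the other bits are; if $z_i=1$ the promise $|z|\leq1$ pins down $K_z=K_{e_i}$ completely, so the query can be answered exactly. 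The validity query $c=e/\sqrt n$, $\gamma=1/(2\sqrt n)$, $\eps=1/(5n)$ then separates the two cases of the promise. You should either adopt a construction with this ``one bit determines everything or gives a universally valid hyperplane'' property, or supply a genuinely new argument for your bump family; as written, the proposal has a real gap at its central step.
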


\begin{proof}
	Let $z \in \{0,1\}^n$ have Hamming weight $|z| = 0$ or $|z|=1$. We construct a set $K_z$ in such a way that solving the weak validity problem solves the search problem for $z$, while separation queries for~$K_z$ can be answered using a single query to $z$. The known classical and quantum lower bounds on the search problem then imply the two claims of the theorem, respectively.
	
	Define $K_z := \prod_{i=1}^n [-1,z_i]$. 	Observe that if we set $x_0 = (-1/2,\dots,-1/2)$, then $B(x_0,\frac{1}{3}) \subseteq K_z \subseteq B(x_0,2\sqrt{n})$.

  We first show how to implement a strong separation oracle using a single query to $z$. Suppose the input is the point $y$. The strong separation oracle works as follows:
	\begin{enumerate}
		\item If $y \in [-1,0]^n$, then return the statement that $y \in B(K_z,0) = K_z$.
		\item If $y\not \in [-1,1]^n$, then return a hyperplane that separates $y$ from $[-1,1]^n$ (and hence from $K_z$).
		\item Otherwise, let $i$ be such that $y_i>0$. Query $z_i$. 
		\begin{enumerate}
			\item If $z_i = 1$ and $i$ is the only index such that $y_i>0$, then return that $y \in B(K_z,0) = K_z$. 
			\item If $z_i = 1$ and there is a $j\neq i$ such that $y_j>0$, return the separating hyperplane corresponding to $x_j \leq y_j$.
			\item If $z_i = 0$, then return the separating hyperplane $x_i \leq y_i$. 
		\end{enumerate}
	\end{enumerate}
	
We show that a validity query over $K_z$ with the direction $c = \frac{1}{\sqrt{n}} \left(1,\ldots, 1\right) \in \R^n$, value $\gamma=\frac{1}{2\sqrt{n}}$ and error $\eps=\frac{1}{5n}$ solves the search problem:
	\begin{itemize}
\item If $|z|=0$, then for all points $x\in K_0$ we have $\langle c,x \rangle \leq 0$. Thus, for all points $x \in B\left(K_0,\eps\right)$ we have $\langle c,x \rangle \leq \eps < \gamma - \eps$. Hence the validity oracle will have to return that $\langle c,x \rangle  \leq \gamma + \eps$ holds for all $x \in B\left(K_0,-\eps\right)$, since the other possible output is not true.

\item If $|z| = 1$, then the point $z \in K_z$ satisfies $\langle z, c \rangle= \frac{1}{\sqrt{n}}$ and therefore $x = z - \eps e \in B\left(K_z,-\eps\right)$ satisfies $\langle c,x \rangle = \frac{1}{\sqrt{n}} - \sqrt{n}\eps > \gamma+\eps$.
 Hence the validity oracle will have to return that $\langle c,x \rangle \geq \gamma - \eps$ holds for some $x \in B\left(K_z,\eps\right)$, since the other possible output is not true.

	\end{itemize}
\end{proof}

\subsection{Lower bound on number of SEP queries for OPT (without interior point)}

We now lower bound the number of quantum queries to a separation oracle needed to solve the optimization problem, if our algorithm does not already know an interior point of~$K$.
In fact we prove a lower bound on finding a point close to $K$ using separation queries, which implies the lower bound on the number of separation queries needed for optimization since $\opt$ returns a point close to the set~$K$.

We  prove our lower bound by a reduction to the problem of learning $z$ with \emph{first-difference queries}. Here one needs to find an initially unknown $n$-bit binary string $z$ via a guessing game. For a given guess $g\in\{0,1\}^n$ a query returns the first index in $[n]$ for which the binary strings $z$ and $g$ differ (or it returns $n+1$ if $z=g$). The goal is to recover $z$ with as few guesses as possible. First we prove an $\Omega(n)$ quantum query lower bound for this problem.\footnote{Note that this is a strengthening of the $\Omega(n)$ quantum query lower bound for binary search on a space of size $2^n$ by Ambainis~\cite{ambainis1999LowerBoundForOrderedSearch}, since first-difference queries are at least as strong as the queries one makes in binary search.}

\begin{theorem}[Quantum lower bound for learning $z$ with first-difference queries] \label{thm:lbfdq}
	Let $z\in\{0,1\}^n$ be an unknown string accessible by an oracle acting as $O_z\ket{g,b} = \ket{g,b\oplus f(g,z)}$, where $f(g,z)$ is the first index for which $z$ and $g$ differ, more precisely $f(g,z)=\min\{i\in [n]: g_i\neq z_i\}$ if $g\neq z$ and $f(g,z) = n+1$ otherwise. Then every quantum algorithm that outputs $z$ with high probability uses at least $\Omega(n)$ queries to $O_z$.
\end{theorem}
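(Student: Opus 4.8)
The plan is to prove this with the quantum adversary bound. Reformulate the task: a query is the unitary $O_z\colon\ket{g,b}\mapsto\ket{g,b\oplus f(g,z)}$, so we are computing the injective map $F\colon z\mapsto z$ from the ``function table'' $g\mapsto f(g,z)\in\{1,\dots,n+1\}$, with query positions indexed by guesses $g\in\01^n$. Write $p(z,z')$ for the length of the longest common prefix of $z$ and $z'$ (so $p(z,z)=n$ and $p(z,z')\in\{0,\dots,n-1\}$ for $z\neq z'$). The one observation needed is elementary: $g$ \emph{distinguishes} $z$ from $z'$, meaning $f(g,z)\neq f(g,z')$, if and only if $g$ agrees with $z$ (equivalently with $z'$) on its first $p(z,z')$ coordinates; viewing $\01^n$ as the leaves of the depth-$n$ binary tree, this says the leaf $g$ lies in the subtree rooted at the least common ancestor of the leaves $z$ and $z'$. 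As adversary matrix take the symmetric nonnegative $2^n\times 2^n$ matrix $\Gamma$ with $\Gamma_{z,z}=0$ and $\Gamma_{z,z'}=2^{p(z,z')}$ for $z\neq z'$, putting more weight on pairs agreeing on a long prefix (these are distinguished by few guesses). Each $z$ has exactly $2^{n-p-1}$ partners $z'$ with $p(z,z')=p$, so every row of $\Gamma$ sums to $\sum_{p=0}^{n-1}2^{n-p-1}2^{p}=n2^{n-1}$, whence testing against the all-ones vector gives $\nrm{\Gamma}\geq \mathbf 1^{\top}\Gamma\mathbf 1/2^{n}=n2^{n-1}$.

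The crux is to show that for every guess $g$ the matrix $\Gamma_g:=\Gamma\circ D_g$ has $\nrm{\Gamma_g}=\bigO{2^{n-1}}$, where $D_g$ is the $0/1$ matrix with $(D_g)_{z,z'}=1$ iff $g$ distinguishes $z,z'$. Group the leaves $z$ by the value $q_z:=p(g,z)\in\{0,\dots,n\}$ into blocks $T_0,\dots,T_n$ with $|T_q|=2^{n-q-1}$ for $q\leq n-1$ and $|T_n|=1$. By the observation, $g$ distinguishes $z,z'$ exactly when $q_z\neq q_{z'}$, and then $p(z,z')=\min(q_z,q_{z'})$; hence $\Gamma_g=\sum_{q<q'}2^{\min(q,q')}\big(\mathbf 1_{T_q}\mathbf 1_{T_{q'}}^{\top}+\mathbf 1_{T_{q'}}\mathbf 1_{T_q}^{\top}\big)$, a matrix of rank at most $n+1$ whose range is spanned by the block indicators. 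Conjugating by the isometry whose columns are the normalized indicators $\mathbf 1_{T_q}/\sqrt{|T_q|}$ shows $\nrm{\Gamma_g}$ equals the norm of the $(n+1)\times(n+1)$ matrix with off-diagonal entry $2^{\min(q,q')}\sqrt{|T_q||T_{q'}|}=2^{n-1}2^{-|q-q'|/2}$ and zero diagonal. Up to the singleton block $T_n$ (a negligible correction) this is $2^{n-1}$ times the difference of the symmetric Toeplitz matrix with entries $(1/\sqrt2)^{|q-q'|}$ and the identity, whose operator norm is at most the sup of its symbol, $\frac{1+1/\sqrt2}{1-1/\sqrt2}=3+2\sqrt2$. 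So $\nrm{\Gamma_g}=\bigO{2^{n-1}}$ uniformly in $g$, and the adversary bound yields $Q(F)=\Omega\big(\nrm{\Gamma}/\max_g\nrm{\Gamma_g}\big)=\Omega(n)$.

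I expect the uniform bound $\nrm{\Gamma_g}=\bigO{2^{n-1}}$ to be the main obstacle: $\Gamma_g$ is visibly a sum of $n$ complete-bipartite blocks each of norm $2^{n-1}$, so one must argue that these blocks overlap only ``geometrically'' (the exponential-decay Toeplitz structure) rather than coherently; getting the weight $2^{p(z,z')}$ exactly right is what makes every level contribute equally to $\nrm\Gamma$ but only $\bigO{1}$-many levels contribute effectively to each $\nrm{\Gamma_g}$. I would also note why a reduction from ordered search on $2^n$ elements (which needs $\Omega(n)$ queries by Ambainis) does not shortcut the argument: a first-difference query is strictly more informative than a comparison query (one can reveal $\Theta(\log n)$ bits), so while each comparison query is simulated by one first-difference query, simulating a first-difference query costs $\Theta(\log n)$ comparisons, giving only an $\Omega(n/\log n)$ bound — a direct adversary argument as above seems necessary for the tight $\Omega(n)$.
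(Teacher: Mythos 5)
Your proposal is correct and follows essentially the same route as the paper: the negative-weight adversary bound with weights $2^{(\text{common prefix length})}$ (the paper uses $2^{f(z,z')}$, a factor-$2$ rescaling of your $2^{p(z,z')}$), the same grouping of strings into blocks according to $f(g,\cdot)$, the same reduction via normalized block indicators to an $(n+1)\times(n+1)$ matrix with entries decaying like $2^{-|q-q'|/2}$, and the same geometric-series bound giving $\max_g\nrm{\Gamma\circ\Delta_g}=\bigO{2^n}$ against $\nrm{\Gamma}=\Theta(n2^n)$. The only differences are cosmetic (you bound the symmetric reduced matrix directly via its Toeplitz symbol, while the paper splits into upper/lower triangular parts and sums over off-diagonals), and your closing remark about why a reduction from ordered search only yields $\Omega(n/\log n)$ matches the paper's footnote.
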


\begin{proof}
	We will use the general adversary bound~\cite{hoyer2007NegativeAdv}. For this problem, we call $\Gamma\in \mathbb{R}^{2^n\times 2^n}$ an \emph{adversary matrix} if it is a non-zero matrix with zero diagonal whose rows and columns are indexed by all~$z\in\{0,1\}^n$.
	For $g\in \{0,1\}^n$ let us define $\Delta_g\in\{0,1\}^{2^n\times 2^n}$ such that the $[z,z']$ entry of $\Delta_g$ is $0$ if and only if $f(g,z)= f(g,z')$. 
	The general adversary bound tells us that for any adversary matrix $\Gamma$, the quantum query complexity of our problem is 
	\begin{equation}\label{eq:adversaryBound}
	\Omega\left(\frac{\nrm{\Gamma}}{\max_{g\in\{0,1\}^n} \nrm{\Gamma \circ \Delta_g}}\right),
	\end{equation}
	where ``$\circ$'' denotes the Hadamard product and $\nrm{\cdot}$ the operator norm.
	
	We claim that Equation \eqref{eq:adversaryBound} gives a lower bound of $\Omega(n)$ for the adversary matrix $\Gamma$ defined as 
	\[
	\Gamma[z,z'] = \begin{cases}
	2^{f(z,z')} & \mbox{if }z\neq z'\\
	0 & \mbox{if }z=z'
	\end{cases}
	\]
	It is easy to see that $\Gamma$ is indeed an adversary matrix since it is zero on the diagonal and non-zero everywhere else. Furthermore, the all-one vector $e$ is an eigenvector of $\Gamma$ with eigenvalue $n2^n$:
	\[
	(\Gamma e)_z = \sum_{z'\in \{0,1\}^n} \Gamma[z,z'] = \sum_{d=1}^n 2^{d} \cdot |\{z' \in \{0,1\}^n \, : \, f(z,z') = d \}| = \sum_{d=1}^n 2^{d} 2^{n-d} = n2^n.
	\]
	So $\Gamma e = n2^n e$ and hence $\nrm{\Gamma}\geq n2^n$.
	
	From the definition of $\Delta_g$ it follows that
	\begin{align*}
	(\Gamma\circ\Delta_g)[z,z'] &= 
	2^{f(z,z')} \chi_{[f(g,z)\neq f(g,z')]},
	\end{align*}
	where $\chi_{[f(g,z)\neq f(g,z')]}$ stands for the indicator function of the condition $f(g,z)\neq f(g,z')$.
	Let $\Gamma_g := \Gamma\circ\Delta_g$. We will show an upper bound on $\nrm{\Gamma_g}$. We decompose $\Gamma_g$ in an ``upper-triangular'' and a ``lower-triangular'' part:
	\begin{align}
	\Gamma_g^U[z,z']  &:= 2^{f(z,z')}\chi_{[f(g,z)<f(g,z')]} = 2^{f(g,z)}\chi_{[f(g,z)<f(g,z')]},\label{eq:upperTriangleDef}\\
	\Gamma_g^L[z,z']  &:= 2^{f(z,z')}\chi_{[f(g,z')<f(g,z)]} = 2^{f(g,z')}\chi_{[f(g,z')<f(g,z)]}.\nonumber
	\end{align}
	So $\Gamma_g = \Gamma_g^U+\Gamma_g^L$ and $\Gamma_g^U = (\Gamma_g^{L})^T$. Hence by the triangle inequality we have
	\begin{equation}\label{eq:triangleGamma}
	\nrm{\Gamma_g} \leq \nrm{\Gamma_g^U}+\nrm{\Gamma_g^L} = 2\nrm{\Gamma_g^U}.
	\end{equation}
	It thus suffices to upper bound $\nrm{\Gamma_g^U}$. Notice that as \eqref{eq:upperTriangleDef} shows, $\Gamma_g^U[z,z']$ only depends on the values $f(g,z)$, $f(g,z')$.
	Since the range of $f(g,\,\cdot\,)$ is $[n+1]$, we can think of $\Gamma_g^U$ as an $(n+1)\times (n+1)$ block-matrix, where the blocks are determined by the values of $f(g,z)$ and $f(g,z')$, and within a block all matrix elements are the same.
	Also observe that for all $k\in[n]$ there are $2^{n-k}$ bitstrings $y\in\{0,1\}^n$ such that $f(g,y) = k$, which tells us the sizes of the blocks are $2^{n-k}\times 2^{n-k}$. Motivated by these observations we define an orthonormal set of vectors in $\R^{2^n}$ by $v_{n+1}:=e_g$, and for all $k\in[n]$
	\begin{align*}
	v_k:=\sum_{y: f(g,y)=k} \frac{e_y}{\sqrt{2^{n-k}}}.
	\end{align*}
	Since the row and column spaces of $\Gamma_g^U$ are spanned by $\{v_k:k\in[n+1]\}$, we can reduce $\Gamma_g^U$ to an $(n+1)\times (n+1)$-dimensional matrix $G$:
	\[
	\Gamma_g^U
	=\left(\sum_{k=1}^{n+1} v_k  v_k^T\right)\Gamma_g^U\left(\sum_{\ell=1}^{n+1} v_\ell  v_\ell^T\right)
	=\left(\sum_{k=1}^{n+1} v_k  e_k^T\right)\underbrace{\left(\sum_{k=1}^{n+1} e_k  v_k^T\right)\Gamma_g^U\left(\sum_{\ell=1}^{n+1} v_\ell  e_\ell^T\right)}_{G:=}\left(\sum_{\ell=1}^{n+1} e_\ell  v_\ell^T\right).
	\]
	It follows from the above identity, together with the orthonormality of $\{v_1,\ldots,v_n,v_{n+1}\}$, that 
	\begin{equation}\label{eq:triangleGamma2}
	\nrm{\Gamma_g^U}=\nrm{\left(\sum_{k=1}^{n+1} e_k  v_k^T\right)\Gamma_g^U\left(\sum_{\ell=1}^{n+1} v_\ell  e_\ell^T\right)}=\nrm{G}.
	\end{equation}
	$G\in\mathbb{R}^{(n+1)\times (n+1)}$ is a strictly upper-triangular matrix, with the following entries for $k,\ell\in [n]$:
	\begin{align*}
	G[k,\ell] &= v_k^T\Gamma_g^U v_\ell\\
	&= \left(\sum_{z: f(g,z)=k} \frac{e_z^T}{ \sqrt{2^{n-k}}}\right) \Gamma_g^U \left(\sum_{z': f(g,z')=\ell} \frac{e_{z'}}{ \sqrt{2^{n-\ell}}}\right)\\
	&= \frac{2^{\frac{k+\ell}{2}}}{2^{n}} \left(\sum_{z: f(g,z)=k} e_z^T\right) \Gamma_g^U \left(\sum_{z': f(g,z')=\ell} e_{z'}\right)\\
	&= \frac{2^{\frac{k+\ell}{2}}}{2^{n}} \sum_{z: f(g,z)=k} \sum_{z': f(g,z')=\ell}  \Gamma_g^U[z,z']\\	
	\end{align*}
	By Equation~\eqref{eq:upperTriangleDef} this is further equal to 
	\begin{align*}		
	G[k,\ell]  &= \frac{2^{\frac{k+\ell}{2}}}{2^{n}} \sum_{z: f(g,z)=k} \sum_{z': f(g,z')=\ell}  2^{k}\chi_{[k<\ell]}\\
	&= \frac{2^{\frac{k+\ell}{2}}}{2^{n}} 2^{n-k}  2^{n-\ell} 2^{k}\chi_{[k<\ell]}\\
	&= 2^{n-\frac{\ell-k}{2}}\chi_{[k<\ell]}.
	\end{align*}
	Similarly for $\ell=n+1$ we get that $G[k,\ell] = \sqrt{2}\, 2^{n-\frac{\ell-k}{2}}\chi_{[k<\ell]}$ for all $k\in [n+1]$.
	For each $d\in [n]$ define $G_d \in\mathbb{R}^{(n+1)\times (n+1)}$ such that $G_d[k,\ell]=G[k,\ell]\chi_{[d=\ell-k]}$. 
	This $G_d$ is only non-zero on one non-main diagonal (namely the $(k,\ell)$-entries where $d=\ell-k$), and its non-zero entries are all upper bounded by~$\sqrt{2}\,2^n 2^{-\frac{d}{2}}$.
	We have $G=\sum_{d=1}^{n} G_d$ and therefore
	\begin{equation}\label{eq:offdiagonalTriangle}
	\nrm{G}\leq \sum_{d=1}^{n} \nrm{G_d}\leq  \sum_{d=1}^{n}\sqrt{2}\,2^n 2^{-\frac{d}{2}}=2^n\sum_{d=0}^{n-1}(\sqrt{2})^{-d}\leq \frac{2^n}{1-1/\sqrt{2}}\leq 2^{n+2}.
	\end{equation}
	Inequalities \eqref{eq:triangleGamma}-\eqref{eq:offdiagonalTriangle} give that $\nrm{\Gamma_g} \leq 2^{n+3}$ and hence \eqref{eq:adversaryBound} yields a lower bound of $\Omega\left(\frac{n2^n}{2^{n+3}}\right)=\Omega(n)$ on the number of quantum queries to $O_z$ needed to learn $z$.	
\end{proof}

\begin{theorem}
	Finding a point in $B_\infty(K,1/7)$ for an unknown convex set $K$ such that $K\subseteq B_\infty(0,2) \subseteq \R^n$ requires $\Omega(n)$ quantum queries to a separation oracle $\mathrm{SEP}_{0,0}(K)$, even if we are promised there exists some unknown $x \in \R^n$ such that $B_\infty(x,1/3)\subseteq K$.
\end{theorem}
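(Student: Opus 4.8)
The plan is to prove this lower bound by a reduction from learning a string $z\in\{0,1\}^n$ with first-difference queries, for which Theorem~\ref{thm:lbfdq} already gives an $\Omega(n)$ quantum query lower bound. Given $z$, I would take the ``center'' $c_z:=(2z_1-1,\ldots,2z_n-1)\in\{-1,+1\}^n$ and let $K_z:=B_\infty(c_z,1/3)$, an axis-aligned cube of $\ell_\infty$-radius $1/3$. Then $B_\infty(c_z,1/3)=K_z$, so $x=c_z$ witnesses the promised interior ball, and $\nrm{w}_\infty\leq\nrm{w-c_z}_\infty+\nrm{c_z}_\infty\leq 1/3+1=4/3<2$ for all $w\in K_z$, so $K_z\subseteq B_\infty(0,2)$. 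Crucially, any point $y\in B_\infty(K_z,1/7)$ satisfies $\nrm{y-c_z}_\infty\leq 1/7+1/3<1/2$, hence $|y_i|\geq 1/2>0$ and $\sign(y_i)=c_{z,i}$ for every $i$; thus from such a $y$ one immediately reads off $z_i=(1+\sign(y_i))/2$. Consequently an algorithm that finds a point of $B_\infty(K_z,1/7)$ with high probability also outputs $z$ with high probability.

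The key step is to simulate one query to $\mathrm{SEP}_{0,0}(K_z)$ using at most one first-difference query to $z$, together with $z$-independent processing of the queried point $y$. Since $K_z\subseteq[-4/3,4/3]^n$: if some $y_i>4/3$ (resp.\ $y_i<-4/3$) we output ``$y\notin K_z$'' with separating vector $e_i$ (resp.\ $-e_i$), using no query to $z$. Otherwise all $y_i\in[-4/3,4/3]$, and we form the guess $g\in\{0,1\}^n$ with $g_i=1$ iff $y_i\geq 2/3$ and ask the first-difference query, obtaining $j=f(g,z)$. If $j\leq n$ then $g_j\neq z_j$, which pins the $j$-th coordinate interval of $K_z$ to one of $[2/3,4/3]$ or $[-4/3,-2/3]$ and places $y_j$ strictly on its far side, so $\pm e_j$ separates. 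If $j=n+1$ then $g=z$, so $K_z$ is fully determined; we then check directly (with no further query) whether $y\in K_z$, return membership if so, and otherwise separate with $\pm e_i$ at any violated coordinate $i$ — which necessarily has $y_i\in(-2/3,2/3)$, so all of $K_z$ lies on one fixed side of the hyperplane $\{w:w_i=y_i\}$. In each branch the output is a legitimate answer of a genuine strong separation oracle for $K_z$, and the whole simulation has the form $W_2\,O_z\,W_1$ for fixed, $z$-independent unitaries $W_1,W_2$ (in the no-query branches we apply $O_z$ to a dummy guess and discard its output); since $O_z$ is its own inverse, the simulation is also implementable given $O_z^\dagger$.

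Putting the pieces together, a $t$-query quantum algorithm for finding a point of $B_\infty(K_z,1/7)$ using $\mathrm{SEP}_{0,0}(K_z)$ and its inverse becomes a quantum algorithm making $O(t)$ queries to $O_z$ and $O_z^\dagger$ that outputs $z$ with high probability; by Theorem~\ref{thm:lbfdq} this forces $t=\Omega(n)$. I expect the main obstacle to be the case analysis of the separation simulation: one must verify that the returned vectors are unit vectors that are correctly oriented separating hyperplanes in every subcase, and in particular handle the ``middle band'' $y_i\in(-2/3,2/3)$, where the guess $g$ cannot certify membership and one must fall back to the $j=n+1$ branch to reconstruct $z$ and separate directly. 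Checking that all radius and containment bounds are consistent with the claimed constants ($r=1/3$, $R=2$, $\epsilon=1/7$) is then routine.
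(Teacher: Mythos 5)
Your proof is correct and follows essentially the same route as the paper: a reduction from learning $z$ with first-difference queries (Theorem~\ref{thm:lbfdq}), taking $K_z$ to be a small $\ell_\infty$-box around a hypercube corner encoded by $z$ (you center at $\{-1,+1\}^n$ instead of $\{0,1\}^n$, which is only cosmetic), simulating each $\mathrm{SEP}_{0,0}(K_z)$ query by one first-difference query, and recovering $z$ coordinatewise from any point of $B_\infty(K_z,1/7)$. The only quibble is the claim that the simulation is exactly of the form $W_2\,O_z\,W_1$ with a single query: to implement the separation oracle as a clean unitary one should uncompute the first-difference answer (e.g.\ by a second application of $O_z$), since otherwise the leftover ancilla is garbage correlated with the queried point (and, in your dummy-query branch, with $z$); this costs only a constant factor and does not affect the $\Omega(n)$ conclusion.
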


\begin{proof}
	We will prove an $\Omega(n)$ quantum query lower bound for this problem by a reduction from learning with first-difference queries.
	Let $z\in\{0,1\}^n$ be an unknown binary string, and let us define $K_z := B_{\infty}(z,1/3) \subset \mathbb{R}^n$ as a small box around the corner of the hypercube corresponding to $z$.
	Then clearly  $K_z \subset B_\infty(0,2)$, and finding a point close enough to $K_z$ is enough to recover $z$.
	
	We can easily reduce a separation oracle query to a first-difference query to~$z$, as follows. Suppose $y$ is the vector for which we need to answer a SEP query:
	\begin{enumerate}
		\item If $y$ is outside $[-1/3,4/3]^n$, then output a  hyperplane separating $y$ from $[-1/3,4/3]^n$.
		\item If $y$ is in $[-1/3,4/3]^n$, then let $g$ be the nearest corner of the hypercube.
		\item Let $i$ be the result of a first-difference query to $z$ with $g$.
		\begin{enumerate}
			\item If $i=n+1$, indicating that $z=g$, then we know $K_z$ exactly, so we can find a separating hyperplane or conclude that $y\in K_z$.
			\item If $z\neq g$, then return $e_i$ if $g_i=1$, and $-e_i$ if $g_i=0$.
		\end{enumerate}
	\end{enumerate}
	
	Hence our $\Omega(n)$ quantum lower bound on learning~$z$ with first-difference queries implies an $\Omega(n)$ lower bound on the number of quantum queries to a separation oracle needed for finding a point close to a convex set.
\end{proof}

Since optimization over a set $K$ gives a point close to the set $K$, this also implies a lower bound on the number of separation queries needed for optimization.
This theorem is tight up to logarithmic factors, since it is known that $\bOt{n}$ classical separation queries suffice for optimization, even without knowing a point in the convex set~\cite{lee2015FasterCuttingPlaneConvexOpt}.
Finally we remark that, due to our improved algorithm for optimization using validity queries (by combining Section~\ref{sec:corollaries} with Theorem~\ref{mainthrm:precise}), this also gives an $\widetilde{\Omega}(n)$ lower bound on the number of separation queries needed to implement validity.\footnote{It is easy to modify Theorem~\ref{thm:lbfdq} to prove a lower bound on computing the majority function applied to $z$, which would imply an $\Omega(n)$ lower bound on the number of separation queries needed to implement a validity oracle, without the log factors.}

\section{Consequences of convex polarity} \label{sec:corollaries}
Here we justify the central symmetry of Figure~\ref{fig:results} using the results of Gr{\"o}tschel, Lov\'asz, and Schrijver~\cite[Section~4.4]{grotschel1988GeomAlgAndConvOpt}. We first need to recall the definition and some basic properties of the polar $K^*$ of a set $K \subseteq \R^n$. This is the closed convex set defined as follows:
\[
K^* = \{y \in \R^n: \langle y, x \rangle \leq 1 \text{ for all } x \in K\}.
\]
It is straightforward to verify that if $B(0,r) \subseteq K \subseteq B(0,R)$, then $B(0,1/R) \subseteq K^* \subseteq B(0,1/r)$, moreover $(K^*)^* = K$ for closed convex sets.\footnote{Note that $K^*$ is a dual representation of the convex set $K$. Each point in $K^*$ corresponds to a (normalized) valid inequality for $K$. This duality is not to be confused with Lagrangian duality.} For the remainder of this section we assume that $K$ is a closed convex set such that $B(0,r) \subseteq K \subseteq B(0,R)$.

We will observe that for the polar $K^*$  of a set $K$ the following holds:
\begin{equation} \label{eq:equiv}
\mathrm{MEM}(K^*) \leftrightarrow \mathrm{VAL}(K), \qquad \mathrm{SEP}(K^*) \leftrightarrow \mathrm{VIOL}(K),
\end{equation}
where $\mathrm{MEM}(K^*) \leftrightarrow \mathrm{VAL}(K)$ means we can implement a weak validity oracle for $K$ using a single query to a weak membership oracle for $K^*$, and vice versa. Since $\mathrm{VIOL}(K)$ and $\mathrm{OPT}(K)$ are equivalent up to reductions that use $\widetilde \Theta(1)$ queries (via binary search), this justifies the central symmetry of Figure~\ref{fig:results}, because it shows that algorithms that implement $\mathrm{VIOL}(K)$ given $\mathrm{VAL}(K)$ are equivalent to algorithms that implement $\mathrm{SEP}(K^*)$ given $\mathrm{MEM}(K^*)$, and similarly algorithms that implement $\mathrm{SEP}(K)$ given $\mathrm{VIOL}(K)$ are equivalent to algorithms that implement $\mathrm{VIOL}(K^*)$ given $\mathrm{SEP}(K^*)$.

Gr{\"o}tschel, Lov\'asz, and Schrijver~\cite[Section~4.4]{grotschel1988GeomAlgAndConvOpt} showed that the weak membership problem for $K^*$ can be solved using a single query to a weak validity oracle for $K$, and that the weak separation problem for $K^*$ can be solved using a single query to a weak violation oracle for $K$. Using similar arguments one can show the reverse directions as well, which justifies~\eqref{eq:equiv}. Here we only motivate the equivalences between the above-mentioned weak oracles by showing the equivalence of the strong oracles (i.e., where $\rho$ and $\eps$ are~$0$). 

\paragraph{Strong membership on $K^*$ is equivalent to strong validity on $K$.}
First, for a given vector $c \in \R^n$ and a $\gamma >0$ observe the following:
\[
\frac{c}{\gamma} \not \in \mathrm{int}( K^*) \quad \Longleftrightarrow \quad  \exists y \in K \text{ s.t.\ } \langle c/\gamma,y \rangle \geq 1 \quad \Longleftrightarrow \quad \exists y \in K \text{ s.t. } \langle c,y \rangle \geq \gamma.
\]
Hence, a strong membership query to $K^*$ with a point $c$ can be implemented by querying a strong validity oracle for $K$ with the vector $c$ and the value $1$. Likewise, a strong validity query to $K$ with a point $c$ and value\footnote{Observe that validity queries with value $\gamma \leq 0$ can be answered trivially, since $0 \in K$.\label{foot:negQuery}} $\gamma>0$ can be implemented using a strong membership query to $K^*$ with $c/\gamma$.

\paragraph{Strong separation on $K^*$ is equivalent to strong violation on $K$.}
To implement a strong separation query on $K^*$ for a vector $y\in \R^n$ we do the following. Query the strong violation oracle for $K$ with $y$ and the value $1$. If the answer is that $\langle y,x \rangle \leq 1$ for all $x \in K$, then $y \in K^*$. If instead we are given a vector $x \in K$ with $\langle y,x \rangle \geq 1$,  
then $x$ separates $y$ from $K^*$ (indeed, for all $z \in K^*$, we have $\langle z,x \rangle \leq 1 \leq \langle y,x \rangle$). 

For the reverse direction, to implement a strong violation oracle for $K$ on the vector $c$ and value\textsuperscript{\ref{foot:negQuery}} $\gamma>0$ we do the following. Query the strong separation oracle for $K^*$ with the point $c/\gamma$. If the answer is that $c/\gamma \in K^*$ then $\langle c,x \rangle \leq \gamma$ for all $x \in K$. If instead we are given a non-zero vector $y \in \R^n$ that satisfies $\langle c/\gamma, y \rangle \geq \langle z, y \rangle$ for all $z \in K^*$, then $\tilde y = y/\langle c/\gamma, y \rangle$ will be a valid answer for the strong violation oracle for $K$. Indeed, we have $\tilde y \in K$ because  $\langle z, \tilde y \rangle \leq 1$ for all $z \in K^*$ and $K = (K^*)^*$, and by construction $\langle c, \tilde y \rangle = \gamma$.

\section{Discussion and future work}

We mention several open problems for future work:
\begin{itemize}
\item Our current implementation of an optimization query using $\bOt{n}$ quantum membership queries is quadratically better than the best known classical randomized algorithm, which uses roughly $n^2$ membership queries. However, to the best of our knowledge it is open whether this quadratic classical bound is optimal (a quadratic classical lower bound is known for \emph{deterministic} algorithms~\cite{yao1975ComputingTheMinimaOfQuadraticForms}).

	\item Can we improve our $\Omega(\sqrt{n})$ lower bound on the number of separation (or membership) queries needed to implement an optimization oracle when our algorithm knows a point in~$K$? We conjecture that the correct bound is $\tilde{\Theta}(n)$, in which case knowing a point in~$K$ does not confer much benefit for query complexity. 
	
	
	\item Are there interesting convex optimization problems where separation is much harder than membership for classical computers?\footnote{Moment polytopes are promising candidates for such examples. Recently an efficient weak membership oracle was constructed by Bürgisser et al.~\cite{burgisser2018EffTensorScalingMomentPolytopes} for a class of these polytopes. However, to the best of our knowledge it is unknown how to directly implement separation oracles for them, so one might get a quantum speed-up for implementing separation oracles using few queries to their membership oracle.}  Such problems would be good candidates for quantum speed-up in optimization in the real, non-oracle setting of time complexity. It is known that given a deterministic algorithm for function evaluation, an algorithm with roughly the same complexity can be constructed to compute the gradient of that function~\cite{griewank2008EvalDerivatives}. Hence for strong, deterministic oracles, separation is not much harder than membership queries. This, however, still leaves weak / randomized / quantum membership oracles to be considered.
	
	\item The algorithms that give an $\bOt{n}$ upper bound on the number of separation queries for optimization (for example~\cite[Theorem~42]{lee2015FasterCuttingPlaneConvexOpt}) give the best theoretical results for many convex optimization problems. However, due to the large constants in these algorithms they are rarely used in a practical setting. A natural question is whether the algorithms used in practice lend themselves to quantum speed-ups. Very recent work by Kerenidis and Prakash~\cite{kerenidis2018QIntPoint} on quantum interior point methods is a first step in this direction.
\end{itemize}

\paragraph{Acknowledgments.}
We thank Shouvanik Chakrabarti, Andrew Childs, Tongyang Li, and Xiaodi Wu for sending us a preliminary version of their paper~\cite{chakrabarti2018QuantumConvexOpt}, and for useful comments and coordination between our two papers.
AG thanks M\'{a}ri\'{o} Szegedy for insightful discussions. Many thanks to the anonymous referees of QIP'19 and Quantum for their constructive comments.

JvA and SG are supported by the Netherlands Organization for Scientific Research (NWO), grant number 617.001.351.
AG and RdW are supported by ERC Consolidator Grant 615307-QPROGRESS. 
RdW is also partially supported by NWO through Gravitation-grant Quantum Software Consortium - 024.003.037, and through QuantERA project Quant\-Algo 680-91-034.

\bibliographystyle{alphaUrlePrint}
\bibliography{Bibliography}

\newcommand{\etalchar}[1]{$^{#1}$}
\newcommand{\lName}{1}\newcommand{\arxiv}[1]{arXiv:
  \href{https://arxiv.org/abs/#1}{\ttfamily{#1}}\?}\newcommand{\arXiv}[1]{arXiv:
  \href{https://arxiv.org/abs/#1}{\ttfamily{#1}}\?}\def\?#1{\if.#1{}\else#1\fi}\providecommand{\multiletter}[1]{#1}\renewcommand{\multiletter}[1]{#1}\DeclareRobustCommand{\dutchPrefix}[2]{#2}\providecommand{\dutchPrefix}[2]{#2}\renewcommand{\dutchPrefix}[2]{#2}\newcommand{\skp}[3]{#2}\newcommand{\focs
  }[1]{\if\lName1\skp{ }{Proceedings of the #1 {IEEE} Symposium on Foundations
  of Computer Science ({FOCS})}{ }\else{FOCS}\fi}\newcommand{\stoc
  }[1]{\if\lName1\skp{ }{Proceedings of the #1 {ACM} Symposium on the Theory of
  Computing ({STOC})}{ }\else{STOC}\fi}\newcommand{\soda }[1]{\if\lName1\skp{
  }{Proceedings of the #1 {ACM-SIAM} Symposium on Discrete Algorithms
  ({SODA})}{ }\else{SODA}\fi}\newcommand{\stacs }[1]{\if\lName1\skp{
  }{Proceedings of the #1 Symposium on Theoretical Aspects of Computer Science
  ({STACS})}{ }\else{STACS}\fi}\newcommand{\itcs }[1]{\if\lName1\skp{
  }{Proceedings of the #1 Innovations in Theoretical Computer Science
  Conference (ITCS)}{ }\else{ITCS}\fi}\newcommand{\fsttcs }[1]{\if\lName1\skp{
  }{Proceedings of the #1 International Conference on Foundations of Software
  Technology and Theoretical Computer Science (FSTTCS)}{
  }\else{FSTTCS}\fi}\newcommand{\ccc }[1]{\if\lName1\skp{ }{Proceedings of the
  #1 {IEEE} Conference on Computational Complexity ({CCC})}{
  }\else{CCC}\fi}\newcommand{\colt }[1]{\if\lName1\skp{ }{Proceedings of the #1
  Conference On Learning Theory (COLT)}{ }\else{COLT}\fi}\newcommand{\nips
  }[1]{\if\lName1\skp{ }{Advances in Neural Information Processing Systems #1
  ({NIPS})}{ }\else{NIPS}\fi}\newcommand{\aistats }[1]{\if\lName1\skp{
  }{Proceedings of the #1 International Conference on Artificial Intelligence
  and Statistics ({AISTATS})}{ }\else{AISTATS}\fi}\newcommand{\icalp
  }[1]{\if\lName1\skp{ }{Proceedings of the #1 International Colloquium on
  Automata, Languages, and Programming (ICALP)}{
  }\else{ICALP}\fi}\newcommand{\icml }[1]{\if\lName1\skp{ }{Proceedings of the
  #1 International Conference on Machine Learning (ICML)}{
  }\else{ICML}\fi}\newcommand{\jacm }{\if\lName1\skp{ }{Journal of the ACM}{
  }\else{J. ACM}\fi}\newcommand{\acmta }{\if\lName1\skp{ }{ACM Transactions on
  Algorithms}{ }\else{{ACM} Tr. Alg}\fi}\newcommand{\acmtct }{\if\lName1\skp{
  }{ACM Transactions on Computation Theory}{ }\else{ACM Tr. Comp.
  Th.}\fi}\newcommand{\jams }{\if\lName1\skp{ }{Journal of the AMS}{ }\else{J.
  AMS}\fi}\newcommand{\pams }{\if\lName1\skp{ }{Proceedings of the AMS}{
  }\else{Proc. AMS}\fi}\newcommand{\linalgappl }{\if\lName1\skp{ }{Linear
  Algebra and its Applications}{ }\else{Lin. Alg. \&
  App.}\fi}\newcommand{\jalgo }{\if\lName1\skp{ }{Journal of Algorithms}{
  }\else{J. Alg.}\fi}\newcommand{\jcss }{\if\lName1\skp{ }{Journal of Computer
  and System Sciences}{}\else{J. Comp. Sys. Sci.}\fi}\newcommand{\cc
  }{\if\lName1\skp{ }{Computational Complexity}{ }\else{Comp.
  Comp.}\fi}\newcommand{\algor }{\if\lName1\skp{ }{Algorithmica}{
  }\else{Alg.}\fi}\newcommand{\comb }{\if\lName1\skp{ }{Combinatorica}{
  }\else{Comb.}\fi}\newcommand{\cacm }{\if\lName1\skp{ }{Communications of the
  ACM}{ }\else{Comm. ACM}\fi}\newcommand{\sigart }{\if\lName1\skp{ }{SIGART
  Bulletin}{ }\else{SIGART Bull.}\fi}\newcommand{\sigactn }{\if\lName1\skp{
  }{SIGACT News}{ }\else{SIGACT News}\fi}\newcommand{\eatcsbul
  }{\if\lName1\skp{ }{Bulletin of the {EATCS}}{ }\else{Bull.
  {EATCS}}\fi}\newcommand{\siamrev }{\if\lName1\skp{ }{SIAM Review}{
  }\else{SIAM Rev.}\fi}\newcommand{\siamjc }{\if\lName1\skp{ }{SIAM Journal on
  Computing}{ }\else{SIAM J. Comp.}\fi}\newcommand{\siamjo }{\if\lName1\skp{
  }{SIAM Journal on Optimization}{ }\else{SIAM J. Opt.}\fi}\newcommand{\siamjdm
  }{\if\lName1\skp{ }{SIAM Journal on Discrete Mathematics}{ }\else{SIAM J.
  Disc. Math.}\fi}\newcommand{\siamjnum }{\if\lName1\skp{ }{SIAM Journal on
  Numerical Analysis}{ }\else{SIAM J. Num. Anal.}\fi}\newcommand{\siamjmathanal
  }{\if\lName1\skp{ }{SIAM Journal on Mathematical Analysis}{ }\else{SIAM J.
  Math. Anal.}\fi}\newcommand{\discmath }{\if\lName1\skp{ }{Discrete
  Mathematics}{ }\else{Disc. Math.}\fi}\newcommand{\das }{\if\lName1\skp{
  }{Discrete Applied Mathematics}{ }\else{Disc. App.
  Math.}\fi}\newcommand{\amatstat }{\if\lName1\skp{ }{Annals of Mathematical
  Statistics}{ }\else{Ann. Math. Stat.}\fi}\newcommand{\rms }{\if\lName1\skp{
  }{Russian Mathematical Surveys}{ }\else{Russ. Math.
  Surv.}\fi}\newcommand{\invmath }{\if\lName1\skp{ }{Inventiones Mathematicae}{
  }\else{Inv. Math.}\fi}\newcommand{\jnumber }{\if\lName1\skp{ }{Journal of
  Number Theory}{ }\else{J. Num. Th.}\fi}\newcommand{\toc }{\if\lName1\skp{
  }{Theory of Computing}{ }\else{Th. Comp.}\fi}\newcommand{\quantum
  }{\if\lName1\skp{ }{{Quantum}}{ }\else{Quant.}\fi}\newcommand{\cmp
  }{\if\lName1\skp{ }{Communications in Mathematical Physics}{ }\else{Comm.
  Math. Phys.}\fi}\newcommand{\rspa }{\if\lName1\skp{ }{Proceedings of the
  Royal Society A}{ }\else{Proc. Roy. Soc. A}\fi}\newcommand{\qic
  }{\if\lName1\skp{ }{Quantum Information and Computation}{ }\else{Quant. Inf.
  \& Comp.}\fi}\newcommand{\physrev }{\if\lName1\skp{ }{Physical Review}{
  }\else{Phys. Rev.}\fi}\newcommand{\pra }{\if\lName1\skp{ }{Physical Review
  A}{ }\else{Phys. Rev. A}\fi}\newcommand{\prb }{\if\lName1\skp{ }{Physical
  Review B}{ }\else{Phys. Rev. B}\fi}\newcommand{\pre }{\if\lName1\skp{
  }{Physical Review E}{ }\else{Phys. Rev. E}\fi}\newcommand{\prx
  }{\if\lName1\skp{ }{Physical Review X}{ }\else{Phys. Rev.
  X}\fi}\newcommand{\prl }{\if\lName1\skp{ }{Physical Review Letters}{
  }\else{Phys. Rev. Lett.}\fi}\newcommand{\physrep }{\if\lName1\skp{ }{Physics
  Reports}{ }\else{Phys. Rep.}\fi}\newcommand{\rmp }{\if\lName1\skp{ }{Reviews
  of Modern Physics}{ }\else{Rev. Mod. Phys. }\fi}\newcommand{\phystoday
  }{\if\lName1\skp{ }{Physics Today}{ }\else{Phys.
  Today}\fi}\newcommand{\physics }{\if\lName1\skp{ }{Physics}{
  }\else{Phys.}\fi}\newcommand{\nature }{\if\lName1\skp{ }{Nature}{
  }\else{Nat.}\fi}\newcommand{\natcomm }{\if\lName1\skp{ }{Nature
  Communications}{ }\else{Nat. Comm.}\fi}\newcommand{\natphys }{\if\lName1\skp{
  }{Nature Physics}{ }\else{Nat. Phys.}\fi}\newcommand{\npjqi }{\if\lName1\skp{
  }{npj Quantum Information}{ }\else{npj Quant. Inf.}\fi}\newcommand{\scirep
  }{\if\lName1\skp{ }{Scientific Reports}{ }\else{Sci.
  Rep.}\fi}\newcommand{\science }{\if\lName1\skp{ }{Science}{
  }\else{Sci.}\fi}\newcommand{\jpa }{\if\lName1\skp{ }{Journal of Physics A:
  Mathematical and Theoretical}{ }\else{J. Phys. A}\fi}\newcommand{\ijtp
  }{\if\lName1\skp{ }{International Journal of Theoretical Physics}{
  }\else{Int. J. Th. Phys.}\fi}\newcommand{\jmo }{\if\lName1\skp{ }{Journal of
  Modern Optics}{ }\else{J. Mod. Opt.}\fi}\newcommand{\jstatph
  }{\if\lName1\skp{ }{Journal of Statistical Physics}{ }\else{J. Stat.
  Phys.}\fi}\newcommand{\lncs }{\if\lName1\skp{ }{Lecture Notes in Computer
  Science}{ }\else{L. Notes Comp. Sci.}\fi}\newcommand{\lnai }{\if\lName1\skp{
  }{Lecture Notes in Artificial Intelligence}{ }\else{L. Notes Art.
  Int.}\fi}\newcommand{\lnm }{\if\lName1\skp{ }{Lecture Notes in Mathematics}{
  }\else{L. Notes Math.}\fi}\newcommand{\tams }{\if\lName1\skp{ }{Transactions
  of the American Mathematical Society}{ }\else{Trans.
  AMS}\fi}\newcommand{\ieeeit }{\if\lName1\skp{ }{{IEEE} Transactions on
  Information Theory}{ }\else{{IEEE} Trans. Inf. Th.}\fi}\newcommand{\iscs
  }{\if\lName1\skp{ }{International Series in Computer Science}{ }\else{Int.
  Ser. Comp. Sci.}\fi}\newcommand{\tocl }{\if\lName1\skp{ }{Theory of Computing
  Library}{ }\else{Th. Comp. Lib.}\fi}
\begin{thebibliography}{{\dutchPrefix{Apeldoorn}{v}}AGG{\dutchPrefix{Wolf}{d}}W17}

\bibitem[Amb99]{ambainis1999LowerBoundForOrderedSearch}
Andris Ambainis.
\newblock \href{http://dx.doi.org/10.1109/SFFCS.1999.814606}{A better lower
  bound for quantum algorithms searching an ordered list}.
\newblock In {\em \focs{40th}}, pages 352--357, 1999.
\newblock \arxiv{quant-ph/9902053}.

\bibitem[{\dutchPrefix{Apeldoorn}{v}}AG19]{apeldoorn2018ImprovedQSDPSolving}
Joran {\dutchPrefix{Apeldoorn}{v}}an~Apeldoorn and Andr{\'a}s Gily{\'e}n.
\newblock \href{http://dx.doi.org/10.4230/LIPIcs.ICALP.2019.99}{Improvements in
  quantum {SDP}-solving with applications}.
\newblock In {\em \icalp{46th}}, pages 99:1--99:15, 2019.
\newblock \arxiv{1804.05058}.

\bibitem[{\dutchPrefix{Apeldoorn}{v}}AGG{\dutchPrefix{Wolf}{d}}W17]{apeldoorn2017QSDPSolvers}
Joran {\dutchPrefix{Apeldoorn}{v}}an~Apeldoorn, Andr{\'a}s Gily{\'e}n, Sander
  Gribling, and Ronald {\dutchPrefix{Wolf}{d}}e~Wolf.
\newblock \href{http://dx.doi.org/10.1109/FOCS.2017.44}{Quantum {SDP}-solvers:
  Better upper and lower bounds}.
\newblock In {\em \focs{58th}}, pages 403--414, 2017.
\newblock \arxiv{1705.01843}.

\bibitem[A{\v{S}}06]{ambainis2005QAlgMatchingNetwork}
Andris Ambainis and Robert {\v{S}}palek.
\newblock \href{http://dx.doi.org/10.1007/11672142_13}{Quantum algorithms for
  matching and network flows}.
\newblock In {\em \stacs{23rd}}, pages 172--183, 2006.
\newblock \arxiv{quant-ph/0508205}.

\bibitem[BBBV97]{bennett1997QSearchLowerBound}
Charles~H. Bennett, Ethan Bernstein, Gilles Brassard, and Umesh Vazirani.
\newblock \href{http://dx.doi.org/10.1137/S0097539796300933}{Strengths and
  weaknesses of quantum computing}.
\newblock {\em \siamjc}, 26(5):1510--1523, 1997.
\newblock \arxiv{quant-ph/9701001}.

\bibitem[BFG{\etalchar{+}}18]{burgisser2018EffTensorScalingMomentPolytopes}
Peter Bürgisser, Cole Franks, Ankit Garg, Rafael Oliveira, Michael Walter, and
  Avi Wigderson.
\newblock \href{http://dx.doi.org/10.1109/FOCS.2018.00088}{Efficient algorithms
  for tensor scaling, quantum marginals, and moment polytopes}.
\newblock In {\em \focs{59th}}, pages 883--897, 2018.
\newblock \arxiv{1804.04739}.

\bibitem[BKL{\etalchar{+}}19]{brandao2017QSDPSpeedupsLearning}
Fernando G. S.~L. Brand{\~a}o, Amir Kalev, Tongyang Li, Cedric Yen-Yu Lin,
  Krysta~M. Svore, and Xiaodi Wu.
\newblock \href{http://dx.doi.org/10.4230/LIPIcs.ICALP.2019.27}{Quantum {SDP}
  solvers: Large speed-ups, optimality, and applications to quantum learning}.
\newblock In {\em \icalp{46th}}, pages 27:1--27:14, 2019.
\newblock \arxiv{1710.02581}.

\bibitem[BS17]{brandao2016QSDPSpeedup}
Fernando G. S.~L. Brand{\~a}o and Krysta~M. Svore.
\newblock \href{http://dx.doi.org/10.1109/FOCS.2017.45}{Quantum speed-ups for
  solving semidefinite programs}.
\newblock In {\em \focs{58th}}, pages 415--426, 2017.
\newblock \arxiv{1609.05537}.

\bibitem[Bub15]{bubeck2015ConvexOpt}
S\'{e}bastien Bubeck.
\newblock \href{http://dx.doi.org/10.1561/2200000050}{Convex optimization:
  Algorithms and complexity}.
\newblock {\em Foundations and Trends in Machine Learning}, 8(3--4):231--357,
  2015.
\newblock \arxiv{1405.4980}.

\bibitem[CCLW18]{chakrabarti2018QuantumConvexOpt}
Shouvanik Chakrabarti, Andrew~M. Childs, Tongyang Li, and Xiaodi Wu.
\newblock Quantum algorithms and lower bounds for convex optimization.
\newblock \arxiv{1809.01731}, 2018.

\bibitem[DH96]{durr1996QMinimumFinding}
Christoph D{\"u}rr and Peter H{\o}yer.
\newblock A quantum algorithm for finding the minimum.
\newblock \arxiv{quant-ph/9607014}, 1996.

\bibitem[DHHM06]{durr2004QQueryCompGraph}
Christoph Dürr, Mark Heiligman, Peter H\o{}yer, and Mehdi Mhalla.
\newblock \href{http://dx.doi.org/10.1137/050644719}{Quantum query complexity
  of some graph problems}.
\newblock {\em \siamjc}, 35(6):1310--1328, 2006.
\newblock \arxiv{quant-ph/0401091}.

\bibitem[GAW19]{gilyen2017OptQOptAlgGrad}
András Gilyén, Srinivasan Arunachalam, and Nathan Wiebe.
\newblock \href{http://dx.doi.org/10.1137/1.9781611975482.87}{Optimizing
  quantum optimization algorithms via faster quantum gradient computation}.
\newblock In {\em \soda{30th}}, pages 1425--1444, 2019.
\newblock \arxiv{1711.00465}.

\bibitem[GLS88]{grotschel1988GeomAlgAndConvOpt}
Martin Grötschel, László Lovász, and Alexander Schrijver.
\newblock {\em Geometric Algorithms and Combinatorial Optimization}.
\newblock Springer, 1988.

\bibitem[Gro96]{grover1996QSearch}
Lov~K. Grover.
\newblock \href{http://dx.doi.org/10.1145/237814.237866}{A fast quantum
  mechanical algorithm for database search}.
\newblock In {\em \stoc{28th}}, pages 212--219, 1996.
\newblock \arxiv{quant-ph/9605043}.

\bibitem[GW08]{griewank2008EvalDerivatives}
Andreas Griewank and Andrea Walther.
\newblock \href{http://dx.doi.org/10.1137/1.9780898717761}{{\em Evaluating
  Derivatives: Principles and Techniques of Algorithmic Differentiation}}.
\newblock {SIAM}, second edition, 2008.

\bibitem[HL{\v{S}}07]{hoyer2007NegativeAdv}
Peter H{\o}yer, Troy Lee, and Robert {\v{S}}palek.
\newblock \href{http://dx.doi.org/10.1145/1250790.1250867}{Negative weights
  make adversaries stronger}.
\newblock In {\em \stoc{39th}}, pages 526--535, 2007.
\newblock \arxiv{quant-ph/0611054}.

\bibitem[Jor05]{jordan2005QuantGrad}
Stephen~P. Jordan.
\newblock \href{http://dx.doi.org/10.1103/PhysRevLett.95.050501}{Fast quantum
  algorithm for numerical gradient estimation}.
\newblock {\em \prl}, 95(5):050501, 2005.
\newblock \arxiv{quant-ph/0405146}.

\bibitem[Jor08]{jordan2008PhDThesis}
Stephen~P. Jordan.
\newblock
  \href{http://web.mit.edu/2.111/www/2010/MIT-stephen-jordan-phd-thesis-may08.pdf}{{\em
  Quantum Computation Beyond the Circuit Model}}.
\newblock PhD thesis, Massachusetts Institute of Technology, 2008.
\newblock \arxiv{0809.2307}.

\bibitem[KP18]{kerenidis2018QIntPoint}
Iordanis Kerenidis and Anupam Prakash.
\newblock A quantum interior point method for {LP}s and {SDP}s.
\newblock \arxiv{1808.09266}, 2018.

\bibitem[LSV18]{lee2017ConvexOptWMemb}
Yin~Tat Lee, Aaron Sidford, and Santosh~S. Vempala.
\newblock \href{http://proceedings.mlr.press/v75/lee18a.html}{Efficient convex
  optimization with membership oracles}.
\newblock In {\em \colt{31st}}, pages 1292--1294, 2018.
\newblock \arxiv{1706.07357}.

\bibitem[LSW15]{lee2015FasterCuttingPlaneConvexOpt}
Yin~Tat Lee, Aaron Sidford, and Sam Chiu-wai Wong.
\newblock \href{http://dx.doi.org/10.1109/FOCS.2015.68}{A faster cutting plane
  method and its implications for combinatorial and convex optimization}.
\newblock In {\em \focs{56th}}, pages 1049--1065, 2015.
\newblock \arxiv{1508.04874}.

\bibitem[NC00]{nielsen2002QCQI}
Michael~A. Nielsen and Isaac~L. Chuang.
\newblock \href{http://dx.doi.org/10.1017/CBO9780511976667}{{\em Quantum
  computation and quantum information}}.
\newblock Cambridge University Press, 2000.

\bibitem[{\multiletter{Sz}}eg04]{szegedy2004QMarkovChainSearch}
M\'{a}ri\'{o} {\multiletter{Sz}}egedy.
\newblock \href{http://dx.doi.org/10.1109/FOCS.2004.53}{Quantum speed-up of
  {M}arkov chain based algorithms}.
\newblock In {\em \focs{45th}}, pages 32--41, 2004.
\newblock \arxiv{quant-ph/0401053}.

\bibitem[Yao75]{yao1975ComputingTheMinimaOfQuadraticForms}
Andrew Chi-Chih Yao.
\newblock \href{http://dx.doi.org/10.1145/800116.803749}{On computing the
  minima of quadratic forms (preliminary report)}.
\newblock In {\em \stoc{7th}}, pages 23--26, 1975.

\end{thebibliography}

\appendix

\section{Quantum gradient computation using relational oracles} \label{sec:app}

In this appendix we extend the result of Corollary~\ref{cor:corollaryJordan} to functions given by a relational input oracle.
As a direct consequence this shows that the algorithm from Theorem~\ref{mainthrm:precise} also works when the input is given as a relational membership oracle instead of a standard oracle.

\begin{definition}[Unitary $\delta$-approximator]\label{def:unideltaapprox}
	Let $X$ be a finite set and let $Y$ denote a set of fixed-point $b$-bit numbers. Let $f\colon X\to Y$ be a function. 
	We say that a relational quantum oracle $U$ on $X$ is a \emph{$b$-bit unitary $\delta$-approximator of $f$} if the valid answers for each $x\in X$ differ at most $\delta$ from $f(x)$ (i.e., $\mathcal{F}(x)=\{y\in Y\colon |f(x)-y|\leq \delta\}$), and the success probability is at least $\frac{2}{3}$.
\end{definition}

\begin{corollary}[Gradient computation using a unitary $\delta$-approximator]\label{cor:generalCorollaryJordan}
	Let $\delta,B,r,c\in\R$, $\rho\in(0,1/3]$. Let $x_0,g\in\R^n$ with $\nrm{g}_\infty\leq \frac{B}{r}$. 
	Let $m:=\left\lceil\log_2\left(\frac{B}{28\pi\delta}\right)\right\rceil$ and suppose $f:\left(x_0+r G_m^n\right)\to \R$ is such that 
	\begin{equation*}
	\left|f(x_0+r x)-\ipc{g}{rx}-c\right|\leq \delta
	\end{equation*}
	for 99.9\% of the points $x\in G_m^n$, and
	we have access to $U$, an $\bigO{\log\left(\!\frac{B}{\delta}\!\right)}$-bit unitary $\delta$-approximator of $f$ over the domain $\left(x_0+r G_m^n\right)$.
	Then we can compute a vector $\tilde{g}\in\R^n$ such that 
	$$\Pr\left[\,\nrm{\tilde{g}-g}_\infty>\! \frac{8\cdot 42\pi \delta}{r}\right]\leq \rho,$$
	with $\bigO{\!\log\!\big(\frac{n}{\rho}\big)\!}$ queries to $U$ and $U^\dagger$ 
	and gate complexity $\bigO{\!n\log\!\big(\frac{n}{\rho}\big)\!\log\!\big(\!\frac{B}{\delta}\!\big)\!\log\!\log\!\big(\frac{n}{\rho}\big)\!\log\!\log\!\big(\!\frac{B}{\delta}\!\big)\!}\!$.
\end{corollary}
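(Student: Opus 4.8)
The plan is to follow the proof of Corollary~\ref{cor:corollaryJordan} essentially verbatim: build an (approximate) phase oracle for a rescaled version of $f$, feed it to Jordan's algorithm via Lemma~\ref{lemma:genericJordan}, and finish with a coordinatewise boosting step. The only step that genuinely changes is the phase-oracle construction, because $U$ is now a relational oracle (Definitions~\ref{def:reloracle} and~\ref{def:unideltaapprox}) carrying a workspace register $\ket{\psi_{x,y}}$ and succeeding only with probability $\tfrac23$, so the clean ``apply $U$, phase, apply $U^\dagger$'' trick used for standard oracles no longer uncomputes exactly.

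First I would amplify the oracle coherently: apply $U$ in parallel to $k=\bigO{1}$ fresh answer/workspace register pairs, all sharing the same query register $\ket x$, obtaining $\ket x\bigotimes_{j=1}^{k}\sum_{y_j}\alpha_{x,y_j}\ket{y_j,\psi_{x,y_j}}$. Then compute $\mathrm{med}(y_1,\dots,y_k)$ into an ancilla, apply the phase $e^{2\pi i 2^m\,\mathrm{med}/(3B)}$ exactly as in Corollary~\ref{cor:corollaryJordan} (with $M=2^m$), uncompute the median by recomputing it from the $y_j$, and finally apply $U^\dagger$ to each of the $k$ register pairs. By a Chernoff bound, for every fixed $x$ the median is within $\delta$ of $f(x_0+rx)$ except on a branch of squared amplitude $\leq\rho'=e^{-\Omega(k)}$, and on the ``good'' branch the applied phase equals $e^{2\pi i 2^m f(x_0+rx)/(3B)}$ up to a factor within $e^{\pm i/42}$ of $1$, since $|2^m(\mathrm{med}-f(x_0+rx))/(3B)|\leq\tfrac1{84}$. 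Hence, although $U^\dagger$ does not restore the $y_j,\psi_{x,y_j}$ registers exactly, the overall map agrees with the exact phase oracle $\mathrm O\colon\ket x\mapsto e^{2\pi i 2^m h(x)}\ket x$ for $h(x):=f(x_0+rx)/(3B)$ up to operator norm $\bigO{\tfrac1{42}+\sqrt{\rho'}}$ on the relevant uniform-superposition input (the orthogonal bad branches across different $x$ contribute at most $\sqrt{\rho'}$ in total).

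Given this oracle I would conclude exactly as in Corollary~\ref{cor:corollaryJordan}: by the hypothesis on $f$ and the choice of $m$, the rescaled $h$ is within $\tfrac{2^{-m}}{42\pi}$ of a linear function on $99.9\%$ of the points of $G_m^n$, and $\nrm{\tfrac{r}{3B}g}_\infty\leq\tfrac13$, so Lemma~\ref{lemma:genericJordan} applied to the exact oracle $\mathrm O$ returns a coordinatewise $\tfrac4M$-approximator of $\tfrac{r}{3B}g$ with per-coordinate success $\geq\tfrac23$; replacing $\mathrm O$ by our implemented map changes the output distribution by only $\bigO{\tfrac1{42}+\sqrt{\rho'}}$ in total variation, which for a large enough constant $k$ stays below $\tfrac1{12}$, keeping the per-coordinate success above $\tfrac12$. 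Repeating the whole procedure $\bigO{\log(n/\rho)}$ times and taking the coordinatewise median yields $\tilde g$ with $\nrm{\tilde g-g}_\infty\leq\tfrac{8\cdot 42\pi\delta}{r}$ with probability $\geq1-\rho$; this costs $k\cdot\bigO{\log(n/\rho)}=\bigO{\log(n/\rho)}$ queries to $U$ and $U^\dagger$, the extra gates for the $\bigO{\log(B/\delta)}$-bit medians are absorbed into the stated bound, and the remaining gate count is the one from \cite[Theorem~21]{gilyen2017OptQOptAlgGrad}.

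I expect the main obstacle to be the bookkeeping around the imperfect uncomputation: one has to argue carefully that the residual entanglement between $\ket x$ and the leftover $y_j,\psi_{x,y_j}$ registers — arising both from the $\tfrac23$-bounded error and from the median phase failing to factorize over the $k$ copies — contributes only the $\bigO{\sqrt{\rho'}}$ term plus the small in-range phase spread to the operator-norm distance from $\mathrm O$, and that Lemma~\ref{lemma:genericJordan} (equivalently \cite[Theorem~21]{gilyen2017OptQOptAlgGrad}) is robust against such a perturbation of its phase oracle. Once that robustness is pinned down, the constants are chosen exactly as in Corollary~\ref{cor:corollaryJordan} and everything else is routine.
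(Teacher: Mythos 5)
Your route is essentially the paper's: boost the relational oracle by a coherent median over $\bigO{1}$ parallel applications of $U$, implement an approximate phase oracle by compute--phase--uncompute, bound its deviation from the exact phase oracle $\mathrm{O}$ by splitting into the out-of-range (``bad'') branch and the in-range phase spread (the $\frac{1}{42}$ term, coming from $2\pi\frac{M}{3B}\delta=\frac1{42}$), then use that Jordan's algorithm makes a single query so an operator-norm-$\epsilon$ perturbation of the oracle shifts the output distribution by at most $\epsilon$, and finally boost coordinatewise with $\bigO{\log(n/\rho)}$ repetitions. The paper does exactly this, only phrasing the amplification as ``assume w.l.o.g.\ the $\delta$-approximator errs with probability at most $1/1200$, else take a median of a few runs of $U$.''

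The one genuine gap is the step you yourself flag as the main obstacle: transferring the per-basis-state bound $\nrm{V\ket{x,0,0}-e^{i\varphi(x)}\ket{x,0,0}}\leq \frac1{42}+\bigO{\sqrt{\rho'}}$ to superpositions over all of $G_m^n$. Your parenthetical claim that ``the orthogonal bad branches across different $x$ contribute at most $\sqrt{\rho'}$ in total'' is not justified as stated: $U^\dagger$ is only specified on inputs of the form $\ket{x,0,0}$, so on the error branch (imperfect uncomputation) it may move amplitude off the query value $\ket{x}$, and the error vectors for different $x$ need not be mutually orthogonal; a naive triangle inequality then degrades the bound by a factor up to the square root of the number of grid points. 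The paper closes exactly this hole with a small but essential trick: assume w.l.o.g.\ that the approximate phase oracle never changes the input register, which can be enforced by copying $\ket{x}$ to a fresh register, running the approximate phase oracle on the copy, and (approximately) erasing the copy by bitwise XOR with the original. With the query register thus pinned, the error vectors for distinct $x$ are orthogonal and the per-$x$ bound (their $1/16$) carries over verbatim to any superposition $\sum_x\alpha_x\ket{x,0,0}$, which is what Jordan's algorithm needs. Adding this one ingredient (and the one-line observation that a single-query algorithm is automatically robust to such a perturbation) turns your sketch into the paper's proof; the constants and the complexity accounting you give are otherwise fine.
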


\begin{proof}
	The algorithm is the same as in the less general Corollary~\ref{cor:corollaryJordan} presented in Section~\ref{sec:quantumgrad}, we just  need to analyze it a bit more carefully. The main idea is still to implement an approximate version of the phase oracle $\mathrm{O}:\ket{x,0,0}\mapsto e^{2\pi i \frac{M}{3B} f(x_0+r x)}\ket{x,0,0}$, and then use Jordan's gradient computation algorithm. We approximate $\mathrm{O}$ by first approximately computing $f$ using $U$, then applying\footnote{If $y$ is a $b$-bit fixed-point binary number, then this can be implemented using $b$ single-qubit phase gates as follows: we can assume without loss of generality that $y=a_0 + a\cdot\sum_{j=1}^{b}y_j 2^j$ for some fixed $a_0, a\in\R$. Then $e^{2\pi i \frac{M}{3B} y}=e^{2\pi i \frac{M}{3B} a_0}\prod_{j=1}^{b}e^{2\pi i \frac{M}{3B}a y_j 2^j}$. The global phase is irrelevant, and the other phase factors can be implemented by using $b$ single-qubit phase gates, each acting as $\ket{y_j}\mapsto e^{2\pi i \frac{M}{3B}a y_j 2^j}\ket{y_j}$. } a controlled phase operation $\mathrm{cP}$ acting as $\mathrm{cP}\colon\ket{y}\mapsto e^{2\pi i \frac{M}{3B} y}\ket{y}$ (where $M=\frac{3B}{84\pi\delta}$ as in the proof of Corollary~\ref{cor:corollaryJordan}), and finally applying $U^\dagger$ to approximately uncompute $f$.
	
	We can assume without loss of generality that our unitary $\delta$-approximator is such that the probability of
	$\left|f(x)-y\right|> \delta$ is at most $\frac{1}{1200}$. If this is not the case, we can improve the success probability by querying $U$ a few times and taking the median of the results.
	
	Let us define $\mathcal{F}(x):=\{y\in Y \colon \left|f(x)-y\right|\leq \delta\}$ as in Definition~\ref{def:unideltaapprox}.
	Observe that 
	\begin{align*}
	\nrm{\mathrm{O}\ket{x,0,0} - U^\dagger \left(I\otimes \mathrm{cP} \otimes I\right) U\ket{x,0,0}}^2
	&=\nrm{\left(I\otimes(e^{2\pi i \frac{M}{3B} f(x_0+r x)}I-\mathrm{cP})\otimes I\right) U\ket{x,0,0}}^2\\
	&=\nrm{\sum_{y\in Y}\left(e^{2\pi i \frac{M}{3B} f(x_0+r x)}-e^{2\pi i \frac{M}{3B} y}\right)\alpha_{x,y}\ket{x,y,\psi_{x,y}}}^2.	
	\end{align*}
	We bound the above quantity in two parts using the triangle inequality as follows:
	\begin{align*}
	\!\nrm{\sum_{y\in Y\setminus \mathcal{F}(x) }\left(e^{2\pi i \frac{M}{3B} f(x_0+r x)}-e^{2\pi i \frac{M}{3B} y}\right)\alpha_{x,y}\ket{x,y,\psi_{x,y}}}^2
	&\!\!\!\leq \!\!\! \sum_{y\in Y\setminus \mathcal{F}(x) }|2\alpha_{x,y}|^2\leq \frac{1}{300};\\[2mm]
	\nrm{\sum_{y\in  \mathcal{F}(x) }\left(e^{2\pi i \frac{M}{3B} f(x_0+r x)}-e^{2\pi i \frac{M}{3B} y}\right)\alpha_{x,y}\ket{x,y,\psi_{x,y}}}^2
	&\!\!\!\leq \!\sum_{y\in \mathcal{F}(x) }\left|2\pi i \frac{M}{3B}(f(x_0+r x)-y)\alpha_{x,y}\right|^2\\
	&\!\!\!\leq \!\sum_{y\in Y_x }\left|2\pi i \frac{M}{3B}\delta\right|^2\left|\alpha_{x,y}\right|^2\\
	&\!\!\!\leq \left|2\pi i \frac{M}{3B}\delta\right|^2
	=\frac{1}{42^2}.
	\end{align*}
	Thus for all $x\in G_m^n$ we have that
	\begin{equation}\label{eq:closeProbability}
	\nrm{\mathrm{O}\ket{x,0,0} - U^\dagger \left(I\otimes \mathrm{cP} \otimes I\right) U\ket{x,0,0}}\leq \sqrt{\frac{1}{300}+\frac{1}{42^2}}< \frac{1}{16}.
	\end{equation}
	
	We can assume without loss of generality that our approximate phase oracle does not change the value of the input register. Otherwise we can just copy $\ket{x}$ to another register, then apply our approximate phase oracle on the second copy, then (approximately) erase the second copy of $\ket{x}$ using mod $2$ bitwise addition with the first copy. Under this assumption by \eqref{eq:closeProbability} we get that 
	\begin{equation}\label{eq:closeProbabilitySuper}
	\nrm{\mathrm{O}\ket{\psi} - U^\dagger \left(I\otimes \mathrm{cP} \otimes I\right) U\ket{\psi}}< \frac{1}{16}, \text{ for any quantum state }\ket{\psi}=\sum_{x\in G_m^n}\alpha_x\ket{x,0,0}.
	\end{equation}
	
	From now on the proof is the same as the proof of Corollary~\ref{cor:corollaryJordan}. In that proof we showed that if we use the phase oracle $\mathrm{O}$ in Jordan's gradient computation algorithm, then we would get a gradient estimate where each individual coordinate has the required approximation quality with probability at least $\frac{2}{3}$. Equation~\eqref{eq:closeProbabilitySuper} implies that if instead we use our approximate implementation of the phase oracle, $U^\dagger \left(I\otimes cP \otimes I\right) U$, then the outcome probability distribution changes by at most $\frac{1}{16}$ in total variation distance. So one run of Jordan's algorithm using this approximate phase oracle still outputs a vector $v\in\mathbb{R}^n$ such that
	$$
	\Pr\left[\left|g_i-\frac{3B}{r}v_i\right|> \frac{8\cdot 42 \pi \delta}{r}\right]\leq \frac{1}{3}+\frac{1}{16} < \frac{2}{5} \text{ for every }i\in[n].
	$$
	
	As in the proof of Corollary~\ref{cor:corollaryJordan}, repeating the whole procedure $\bigO{\log(\frac{n}{\rho})}$ times, and taking the median of the resulting vectors coordinatewise, gives a gradient approximator $\tilde{g}$ of the desired quality. The gate complexity analysis follows from \cite[Theorem~21]{gilyen2017OptQOptAlgGrad}, noting that each controlled phase operation $cP$ can be implemented using $\bigO{\log\left(\!\frac{B}{\delta}\!\right)}$ single-qubit phase gates.
\end{proof}

\end{document}